\documentclass[twocolumn,showpacs,floatfix,nofootinbib,showkeys,superscriptaddress]{revtex4-1}
\usepackage[T1]{fontenc}
\usepackage[latin9]{inputenc}
\usepackage{amssymb,amsmath,amsthm,amsfonts}
\usepackage{graphicx}
\usepackage{esint}
\usepackage{babel}
\usepackage{color}
\usepackage{enumitem}
\usepackage{subfigure}
\usepackage{comment}
\usepackage[colorinlistoftodos,prependcaption,textsize=tiny]{todonotes}
\usepackage{cancel}

\usepackage{hyperref}
\allowdisplaybreaks
\hypersetup{
colorlinks=true,
linkcolor=blue,
linktocpage=true,
citecolor=violet
}

\newtheorem{theorem}{Theorem}[section]
\newtheorem{lemma}{Lemma}[section]
\newtheorem{prop}{Proposition}[section]

\newtheorem{cor}{Corollary}[section]
\newtheorem{con}{Conjecture}

\newcommand{\munich}{Max-Planck-Institut f\"ur Physik, Werner-Heisenberg-Institut, 
F\"ohringer Ring 6, D-80805 M\"unchen, Germany}

\def\Mathematica{{{\sc Mathematica}}}
\def\Dlog{{{\sc DlogBasis}}}

\newcommand{\liverpool}{Department of Mathematical Sciences, University of Liverpool, Liverpool L69 3BX, 
U.K.
}

\begin{document}
\preprint{MPP-2022-129}

\title{Landau and leading singularities in arbitrary space-time dimensions}

\author{Wojciech Flieger}
\affiliation{Dipartimento di Fisica e Astronomia, Universit\`a degli Studi di Padova, Via Marzolo 8, I-35131 Padova, Italy}
\affiliation{INFN, Sezione di Padova, Via Marzolo 8, I-35131 Padova, Italy}
\affiliation{\munich}
\author{William J. Torres~Bobadilla}
\affiliation{\munich}
\affiliation{\liverpool}

\begin{abstract}
Using the decomposition of the $D$-dimensional space-time into parallel and perpendicular subspaces, we study and prove a connection between Landau and leading singularities for $N$-point one-loop Feynman integrals by applying the multi-dimensional theory of residues. We show that if $D=N$ and $D=N+1$, the leading singularity corresponds to the inverse of the square root of the leading Landau singularity of the first and second type, respectively. We make use of this outcome to systematically provide differential equations of Feynman integrals in canonical forms and the extension of the connection of these singularities at the multi-loop level by exploiting the loop-by-loop approach. Illustrative examples with the calculation of Landau and leading singularities are provided to supplement our results.  
\end{abstract}


\maketitle

\graphicspath{{figs/}}

\section{Introduction}
The study of Feynman integrals as analytic functions has a long history. It is already apparent in the definition of Feynman propagators, which is the integral in the complex plane with the poles moved from the real axis by the $+\imath 0$ prescription allowing integration over real momenta. However, the real rise of the analytic approach to Feynman integrals, or the scattering matrix in general, began at the turn of the 50s and 60s of the previous century with the works of Landau, Bjorken and Nakanishi, who introduced the notion of what is presently known as Landau equations and the Landau singularity~\cite{Bjorken_phd,Landau:1959fi, nakanishi}. 
Landau equations give a necessary condition for a function defined by an integral to have singularities. 
Together with an attempt to formulate nuclear interactions in terms of the scattering matrix in a program called S-matrix theory, there had been a period of intensive studies of the analytic properties of the scattering matrix and Feynman integrals. This program resulted in multifarious methods of studying the analytic structure of Feynman integrals including Hadamard's lemma, topological methods, differential equations, and series expansion, to mention some~\cite{eden_1966,hwa_1966,TODOROV,Golubeva_1976}. 

After the success of QCD, however, the S-matrix program had been mostly abandoned, leaving one branch which focused on differential equations for generalised functions still going for a few more years in terms of micro-local analysis and the hyper-function theory~\cite{Iagolnitzer:1969sk,cmp/1103904075}. In the 90s of the XX century there began a renaissance of analytic studies of scattering amplitudes and Feynman integrals, which started with works on unitarity and cuts~\cite{Bern:1996je,Bern:1997sc,Britto:2004nc}, some other important developments include introduction of amplituhedron whose definition is based on the singularity structure~\cite{Arkani-Hamed:2013jha,Arkani-Hamed:2013kca}. In this work, we focus on two fundamental concepts, Landau singularities and leading singularities. 
The first one results from a system of equations, and describes a variety in a space spanned by kinematic variables, which is a singular locus of integrals depending on parameters allowing for a study of their analytic structure. 
For mathematical definitions of Landau varieties, we refer the Reader to~\cite{Pham2011full,Mizera:2021icv}. Recently, Landau singularities again grabbed the attention of the community and interesting works appeared focusing on mathematical aspects of these structures~\cite{Brown:2009ta,Bloch:2010gk,Abreu:2017ptx,Collins:2020euz,Berghoff:2020bug,Muhlbauer:2020kut,Hannesdottir:2021kpd,Mizera:2021icv,Muhlbauer:2022ylo}.

The second notion describes the maximal residue of Feynman integrals \cite{Cachazo:2008vp}. The knowledge of leading singularities plays a crucial role in the analytic calculation of 
Feynman integrals through differential equations methods~\cite{Kotikov:1990kg,Remiddi:1997ny,Gehrmann:1999as}.
The identification of $d\log$ structures has led to 
a systematic procedure to evaluate Feynman integrals 
in terms of transcendental numbers and special functions~\cite{Henn:2013pwa}. 
Therefore, the connection between Landau and leading singularities needs to be provided and mathematically supported. 
This is, in effect, the main target of this communication, 
in which we will show that for any scalar one-loop Feynman integral the leading singularities can be expressed by the leading Landau singularities, of the first and second type, by accounting for the appropriated
space-time dimension. 

In order to elaborate on the connection between Landau and leading singularities,
we depth on the study of  
Feynman integrals in an arbitrary space-time dimension, by performing an analysis at the integrand level. 
This is carried out through a decomposition of the space-time dimension in terms of two independent and complementary subspaces: parallel and perpendicular. 
This decomposition has been initially considered in~\cite{Czarnecki:1994td,Frink:1996ya,Kreimer:1991wj,Kreimer:1992zv} 
and recently applied to simplifications of multi-loop scattering amplitudes
within the integrand reduction method framework~\cite{Mastrolia:2016dhn,Mastrolia:2016czu}. 
We analyse in detail the analytic and singular structure of scalar one-loop Feynman integrals with $N$ external momenta (often referred to as $N$-gons, and as they are interesting objects there has already been effort to understand their properties~\cite{Ellis:2007qk,Dixon:2011ng,DelDuca:2011ne,Papadopoulos:2014lla,Spradlin:2011wp,Kozlov:2015kol,Abreu:2017ptx,Bourjaily:2019exo}).\\

\noindent\textbf{Contributions}
\begin{itemize}
\item By elaborating on the decomposition of the space-time dimension into two independent subspaces (parallel and perpendicular) and an extensive use of the multivariate theory of residues given by Leray, we prove that for one-loop integrals leading singularities are expressed as the inverse of the square root of Landau singularities. 
We study the dimensional dependence of this relation by showing when the leading singularity corresponds to the leading Landau singularity of the first and second type. 
The formal part of the study is supplemented by many concrete examples.
\item Having at hand a method to extract leading singularities by simply looking at Landau singularities, we discuss the application of this relation in the analytic calculation of Feynman integrals by the method of differential equations~\cite{Kotikov:1990kg,Remiddi:1997ny}. We show that within our approach the so-called canonical form of the differential equation is straightforwardly obtainable regardless of the space-time dimension~\cite{Henn:2013pwa}. 
\item On top of the studies carried out for one-loop Feynman integrals, we employed the multivariate theory of residues given by Leray~\cite{Leray_1959} 
in the calculation of leading singularities of two-loop planar and non-planar three-point functions with massive external and massless internal momenta. 
As a by-product of this direct calculation, we elaborated on the loop-by-loop approach and providing the $L$-loop ladder diagrams for three and four-point functions.
Continuing on the multi-loop application of the method we discuss the canonical differential equation for heavy-quark pair production through parton-parton annihilation in QCD at the two-loop level.
\end{itemize}

\noindent\textbf{Outline}

\noindent This paper is organised as follows. In Section~\ref{sec:decomposition}, we discuss Feynman integrals in arbitrary space-time dimension $D$. Then, we introduce an alternative representation of Feynman integrals which splits space-time into parallel and perpendicular components, this will lay a basis for our studies. In Section~\ref{sec:landau}, we introduce the notion of Landau singularities, including simple examples. Then we derive an equation for the Landau singularity in the parallel and perpendicular representation, showing that it obeys a very simple form, and all the information about the singularity is included in the perpendicular component. This allows us for an alternative derivation of the Landau singularity based on on-shell conditions. Section~\ref{sec:leading} is devoted to leading singularities, where we give illustrative one-loop examples. 
Then, we discuss and formulate the two main theorems concerning the dependence of leading singularities on the space-time dimension and its connection to leading Landau singularities. This section ends with an application of the obtained results in the context of the method of differential equations for Feynman integrals. Section~\ref{sec:twoloop} is a prelude to the multi-loop investigation of Landau and leading singularities and the connection between them, supported by examples of integrals with leading singularities. The work ends with conclusions and a discussion of possible directions for further studies. The paper is aided by Appendices containing mathematical background on Leray's theory of residues (\ref{app_leray}), supplementary results which are used in the derivation of main results (\ref{app_supp}), more examples of one-loop leading singularities (\ref{app:landexamples}), and the proof of the main theorem in the momentum representation (\ref{app:alt_proof}).

\section{Feynman integrals in arbitrary space-time dimensions}
\label{sec:decomposition}

\begin{figure}[t]
\centering
\includegraphics[scale=0.9]{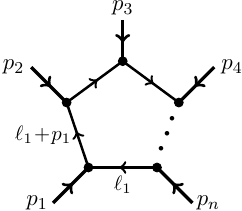}
\caption{Loop-momentum configuration for one-loop $n$-point Feynman integrals (N-gons).}
\label{fig:gen1L}
\end{figure}

Let us start by recalling that a $D$-dimensionally
regularised Feynman integral, where $D$ is a complex number, at $L$ loops with $N$ external momenta can be written as, 
\begin{align}
J_{N}^{\left(L\right),D}\left(1,\hdots,n;n+1,\hdots,m\right) & \nonumber\\
=\int\prod_{i=1}^{L}\frac{d^{D}\ell_{i}}{\imath\,\pi^{D/2}}
&
\frac{\prod_{k=n+1}^{m}D_{k}^{\nu_{k}}}{\prod_{j=1}^{n}D_{j}^{\nu_{j}}}\,, 
\label{eq:intL}
\end{align}
where the first $n$ propagators, with integer powers, $\nu_i\in\mathbb{Z}$, 
represent the loop topology and are expressed as  
\begin{align}
D_i =  q_{i}^{2} - m_i^2+\imath0\,.
\label{eq:genprop}
\end{align}
Here, $q_{i}$ contains the dependence on loop ($\ell_i$) and external ($p_i$) 
momenta.
For instance, in one-loop topologies, we can set, 
$q_{i}=\ell_1+P_{i-1}$, with $P_{i-1}=p_1+\hdots+p_{i-1}$ and $P_0=0$, 
for a combination of external momenta (see Fig.~\ref{fig:gen1L}).
For the $D$-dimensional integral~\eqref{eq:intL}, loop and external momenta are elements of infinite dimensional vector space. The mass of the internal propagators and infinitesimal Feynman prescription 
are, respectively, $m_i$ and $+\imath 0$.
Throughout this paper, we will consider different kinematic configurations, hence, 
no assumption on internal and external masses is given at this point. 

The remaining $(m-n)$ propagators in~\eqref{eq:intL}
correspond to irreducible scalar
products often named as auxiliary propagators,
whose structure can be chosen as in Eq.~\eqref{eq:genprop} or as
a product between external and internal momenta, $2\ell_i\cdot p_j$. 
For this set of propagators, their powers
are only positive integer numbers, $\nu_i\in\mathbb{Z}_{\geq 0}$.

\subsection{Feynman integrals in parallel and perpendicular space}
\label{sec:decosub}

Since Feynman integrals can be characterised by the number of external momenta,
vertices, and internal propagators, 
one can think of a decomposition
in terms of two independent and complementary subspaces,
which we will refer
to parallel, with dimension $D_{\parallel}$, and perpendicular, of dimension $D_{\perp}$ (we will use the same symbols for the dimensions and spaces itself, the meaning will be clear from the context),
\begin{align}
D &= D_\parallel + D_\perp\,,
\label{eq:decoD}
\end{align}
in which the first one, say $D_{\parallel}=\min\left(\left\lfloor D\right\rfloor ,E\right)$, 
is spanned by the independent external momenta, $E$.\footnote{Here 
$\left\lfloor D\right\rfloor $ corresponds to the integer and real part
of the space-time dimension.}
The second subspace, on the contrary, is spanned by momenta that are 
orthogonal to the external momenta, $D_\perp=D-D_{\parallel}$. 

Throughout this paper we follow decomposition~\eqref{eq:decoD} to cast 
all information of external momenta in $D_\parallel$.
We will observe in the following discussion that, regardless of the space-time dimension, it is possible to recast the formal $D$-dimensional integral \eqref{eq:intL} as an ordinary $m$-dimensional integral,
where $m$ corresponds to the number of propagators, 
showing in this way a clear connection with the Baikov representation~~\cite{Baikov:1996iu,Baikov:1996rk}. 
We emphasize that to carry out this analysis, we do not consider a particular dimension, 
instead, we perform this analysis by considering the most general features of Feynman integrals
in the context of leading and Landau singularities. 

Since we perform an analysis at the integrand level, one can translate the 
decomposition of the space-time dimension to the parametrization of the 
$D$-dimensional loop momenta, which, within this framework, we can express as,\footnote{We 
closely follow the notation of~\cite{Gnendiger:2017pys} to make evident
the dimension where loop momenta live in.} 
\begin{align}
\ell_{i,\left[D\right]}^{\alpha}&=\ell_{i,\left[D_{\parallel}\right]}^{\alpha}+\ell_{i,\left[D_{\perp}\right]}^{\alpha}\,,
\label{eq:loopparam}
\end{align}
that in the parametrization of parallel and perpendicular directions of the loop momenta, 
we have, by definition,
$\ell_{i,[D_{\parallel}]}^\alpha \ell_{j,[D_{\perp}]\,\alpha}\equiv
\ell_{i,[D_{\parallel}]}\cdot \ell_{j,[D_{\perp}]}=0$.

\subsection{One-loop decomposition}

The decomposition into parallel and perpendicular subspaces will help us to make sense of the formal $D$-dimensional integral~\eqref{eq:intL} and transform it to the form of finite (regularised) dimensional integrals which will be suitable for our analysis. We will focus only on scalar Feynman integrals and present arguments for the one-loop case. 
Multi-loop Feynman integrals within this representation
will be discussed in the next section.

The one-loop scalar Feynman integral in the discussed decomposition \eqref{eq:loopparam} can be written as,
\begin{align}
J_{N}^{\left(1\right),D}=\frac{1}{\imath\pi^{D/2}}\int\prod_{i=1}^{D_{\parallel}}d\ell_{\parallel}^{i}\int\,d^{D_{\perp}}\ell_{\perp}\frac{1}{\prod_{j=1}^{N}D_{j}}\,,
\end{align}
where to simplify notation, 
$\ell_{\parallel}=\ell_{1,\left[D_{\parallel}\right]}$
and $\ell_{\perp}=\ell_{1,\left[D_{\perp}\right]}$.

Since scalar Feynman integrals depend only on scalar products between 
external and loop
momenta, we can transform integration over perpendicular components to spherical coordinates and perform 
integration over angular components, 
\begin{align}
J^{\left(1\right),D}_N
&=
\frac{\Omega_{D_{\perp}-1}}{\imath \pi^{D/2}}
\int
\prod_{i=1}^{D_{\parallel}}d\ell_{\parallel}^{i} \int_{0}^{\infty}
\,d\ell_{\perp}\ell_{\perp}^{D-D_{\parallel}-1}\frac{1}{\prod_{j=1}^{N}D_{j}}
\,,
\label{eq:1lop}
\end{align}
In effect we obtained finite-dimensional integral of $D_{\parallel}+1$ variables $\ell_{\parallel}^{1},\hdots,\ell_{\parallel}^{D_{\parallel}},\ell_{\perp}$ which will serve as a definition of the formal $D$-dimensional integral \eqref{eq:intL}. The denominators $D_{j}$ are functions of the vectors $q_{j}^{\alpha} = q_{j, \parallel}^{\alpha} + q_{j, \perp}^{\alpha}$, which in view of the above definition can be defined as the $D_{\parallel}+1$ dimensional vectors 
(i.e., $\alpha=1,\hdots, D_{\parallel}+1$).
As we will be mostly concerned with the integrands, this form is enough for our analysis. In general, convergence of~\eqref{eq:1lop} depends on $D$, and the convergence region can be determined by analytic continuation in $D$~\cite{tHooft:1972tcz}. 

To elucidate further the convergence of integral~\eqref{eq:1lop}, let us consider the particular case
of $D=4-2\epsilon$ space-time dimensions in the four-dimensional helicity scheme~\cite{Bern:1991aq}.
Integration measure in one-loop Feynman integrals, as suggested by Ref.~\cite{Bern:1995db},
can be decomposed into two independent subspaces,
\begin{align}
d^D\ell_1 = d^4\ell_{1,[4]}\,d^{-2\epsilon}\ell_{1,[-2\epsilon]}\,.
\end{align}
This decomposition, after integration over the angular components of spherical coordinates in the additional $-2\epsilon$ dimension
and keeping in mind external momenta in $D=4$, 
allows to cast the wedge product in $D=4-2\epsilon$ space-time dimensions as, 
\begin{align}
d^{4-2\epsilon}\ell_1 =
d\ell_1^1\wedge\hdots\wedge d\ell_1^4 \wedge d\mu^2\,
\end{align}
where $\mu^2=-\ell_{1,[-2\epsilon]}^\alpha\ell_{1,[-2\epsilon]\,\alpha}$. 
With this decomposition in mind, we continue our analysis in parallel and perpendicular subspaces. 

The dimension of the parallel subspace is associated with the number of linearly independent external momenta. Let us then assume that this subspace is spanned by independent external momenta,
\begin{align}
\ell_{\parallel}^{\alpha} = 
\ell_{1,\left[D_{\parallel}\right]}^{\alpha} 
& =\sum_{j=1}^{D_{\parallel}}a_{j}\,p_{j}^{\alpha}\,.
\end{align}
We now introduce the following change of variables $\ell_{\perp}^2 = \lambda_{11}$, with the differential $d\ell_{\perp} = \frac{1}{2}\lambda_{11}^{-\frac{1}{2}}d\lambda_{11}$. 
In the new variables, integral~\eqref{eq:1lop} is given by,
\begin{align}
J^{\left(1\right),D}_N
&=
\frac{\mathcal{J}_{\left[D_{\parallel}\right]}^{(1)}\,
\Omega_{D_{\perp}-1}}{2 \imath \pi^{D/2}}
\int
\prod_{i=1}^{D_{\parallel}}da_{i}
\,d\lambda_{11}\,\lambda_{11}^{(D_\perp-2)/2}\frac{1}{\prod_{j=1}^{N}D_{j}}
\,,
\label{eq:int1L}
\end{align}
with $\mathcal{J}_{\left[D_{\parallel}\right]}$ being 
the Jacobian of the transformation of the parallel loop-momentum components ($a_i$). To find this Jacobian, we write the parallel loop momentum in terms of components,
\begin{align}
\ell_{\parallel}^{1}= & \sum_{j=1}^{D_{\parallel}}p_{j}^{1}a_{j}\,,\nonumber \\
\ell_{\parallel}^{2}= & \sum_{j=1}^{D_{\parallel}}p_{j}^{2}a_{j}\,,\nonumber \\
\vdots\nonumber \\
\ell_{\parallel}^{D_{\parallel}}= & \sum_{j=1}^{D_{\parallel}}p_{j}^{D_{\parallel}}a_{j}\,.
\end{align}
Now, we can express the differential form $d\ell_{\parallel}^{1}\wedge\hdots\wedge d\ell_{\parallel}^{D_{\parallel}}$ in the new variables ($a_i$),
\begin{align}
d\ell_{\parallel}^{1}\wedge\hdots d\ell_{\parallel}^{D_{\parallel}}=\left(\sum_{j=1}^{D_{\parallel}}p_{j}^{1}\,da_{j}\right)\wedge\hdots\wedge\left(\sum_{j=1}^{D_{\parallel}}p_{j}^{D_{\parallel}}\,da_{j}\right)\,,
\end{align}
that, because of Lemma~\ref{wedge_det}, amounts to, 
\begin{align}
d\ell_{\parallel}^{1}\wedge\hdots\wedge d\ell_{\parallel}^{D_{\parallel}}= & \det\left(J\right)da_{1}\wedge\hdots\wedge da_{D_{\parallel}} \\
= & \begin{vmatrix}p_{1}^{1} & \hdots & p_{D_{\parallel}}^{1}\\
p_{1}^{2} & \hdots & p_{D_{\parallel}}^{2}\\
\vdots & \ddots & \vdots\\
p_{1}^{D_{\parallel}} & \hdots & p_{D_{\parallel}}^{D_{\parallel}}
\end{vmatrix}da_{1}\wedge\hdots\wedge da_{D_{\parallel}}\,.
\notag
\end{align}
Notice that the determinant of the Jacobian matrix $J$ of this transformation in the above form is not very insightful. 
To put it in a more useful form, we make use of the identity 
$\det(J^{T}gJ)= \det(J)^{2}\det(g) \Rightarrow \det(J)=\pm \sqrt{\det(J^{T}gJ)/\det(g)}$, 
where $g=\text{diag}(1,-1,\hdots,-1)$ 
is the metric tensor in Minkowski space with, $\det(g)=(-1)^{D_{\parallel}-1}$.
By calculating the product, 
\begin{widetext}
\begin{equation}
J^{T}gJ=
\left(
\begin{array}{cccc}
p_{1}^{1} & p_{2}^{2} & \hdots & p_{1}^{D_{\parallel}} \\ 
\vdots  & \vdots & \ddots  & \vdots \\
p_{D_{\parallel}}^{1} & p_{D_{\parallel}}^{2} & \hdots & p_{D_{\parallel}}^{D_{\parallel}}
\end{array}  
\right)
\left(
\begin{array}{cccccc}
1 & 0 & \hdots & 0 & \hdots & 0 \\
0 & -1 & \hdots & 0 & \hdots & 0 \\
\vdots  & \vdots & \hdots & \vdots & \ddots & \vdots \\
0 & 0 & \hdots & 0 & \hdots & -1 \\
\end{array}  
\right)
\left(
\begin{array}{cccccc}
p_{1}^{1} & \hdots & p_{D_{\parallel}}^{1} \\
p_{1}^{2} & \hdots & p_{D_{\parallel}}^{2} \\
 \vdots & \ddots & \vdots \\
p_{1}^{D_{\parallel}} & \hdots & p_{D_{\parallel}}^{D_{\parallel}} \\
\end{array}  
\right)
\,,
\end{equation}
\end{widetext}
we get,
\begin{equation}
J^{T}gJ = 
\left(
\begin{array}{ccc}
p_{1} \cdot p_{1} & \hdots & p_{1} \cdot p_{D_{\parallel}} \\
\vdots & \ddots & \vdots \\
p_{1} \cdot p_{D_{\parallel}} & \hdots & p_{D_{\parallel}} \cdot p_{D_{\parallel}}
\end{array}
\right). 
\end{equation}
Therefore, we have,
\begin{equation}
\det(J^{T}gJ) = \det(p_{i}\cdot p_{j})\,.   
\end{equation}
Compiling all together, we have the following formula for the Jacobian,
\begin{align}
\mathcal{J}_{\left[D_{\parallel}\right]}=\det\left(J\right) & =\pm\sqrt{\left(-1\right)^{D_{\parallel}-1}\det\left(p_{i}\cdot p_{j}\right)}\nonumber \\
 & =\pm\left(\imath\right)^{D_{\parallel}}\sqrt{-\det\left(p_{i}\cdot p_{j}\right)}\,.
 \label{eq:jac1L}
\end{align}

After this preliminary observation on the integrand representation, let us comment on the structure of the integrand. 
Feynman propagators $D_j$'s,
due to parametrization~\eqref{eq:loopparam}, are
polynomials quadratic in  $a_{i}$ and linear in $\lambda_{11}$. 
For instance, 
in $D_\parallel=n-1>0$ and $D_\perp=D-n+1>0$,
Feynman propagators present in one-loop Feynman integrals, become, 
\begin{align}
D_{i}&=\left(\ell_{1}+P_{i}\right)^{2}-m_{i}^{2} 
\label{lambda_eq}\\
&=\sum_{j,k=1}^{n-1}
\left(p_{j}\cdot p_{k}\right)
\,a_{j}a_{k} \notag \\
&+2\sum_{j=1}^{n-1}\sum_{k=1}^{i-1}
\left(p_{j}\cdot p_{k}\right)
\,a_{j}  
+\sum_{j,k=1}^{i-1}
\left(p_{j}\cdot p_{k}\right)
-m_{i}^{2}+\lambda_{11}\,.
\notag
\end{align}

In the above decomposition, we notice that the analysis of any one-loop Feynman integral 
can be carried out by only keeping track of the number of independent external momenta. 
In effect, regardless of the space-time dimension, 
one needs to perform $D_\parallel+1$ (or $N$) integrations to elucidate the singular structure 
of the integral, since all integrations coming from the perpendicular subspace are reduced 
to a single one ($\lambda_{11}$).

\subsection{Multi-loop decomposition}
\label{sec:deco_multiloop}

Let us now continue our analysis with the generic multi-loop Feynman integral,
where inspired by the differential form obtained in the integration measure for one-loop Feynman integrals, 
one can think of an extension at $L$ loops given by, 
\begin{align}
\prod_{i=1}^{L}\frac{d^{D}\ell_{i}}{\imath\,\pi^{D/2}}&= 
d^{\frac{L\left(L+1\right)}{2}}\Lambda\,
d^{LD_{\parallel}}\ell_{\parallel}\,.
\label{eq:chanvars}
\end{align}
By taking into account the parametrization of loop momenta~\eqref{eq:loopparam}, 
relations~\eqref{eq:chanvars} take the explicit form, 
\begin{align}
 d^{LD_{\parallel}}\ell_{\parallel}&=\mathcal{J}_{\left[D_{\parallel}\right]}^{(L)}
 \prod_{i=1}^{L}\prod_{j=1}^{D_{\parallel}}da_{ij}\,,
\\
 d^{\frac{L\left(L+1\right)}{2}}\Lambda&=\frac{\Omega_{D_{\perp}}^{\left(L\right)}}{2^L}\,
 \left[G\left(\lambda_{ij}\right)\right]^{\frac{D_{\perp}-L-1}{2}}
 \prod_{1\leq i\leq j}^{L}d\lambda_{ij}\,,
\end{align}
with $G\left(\lambda_{ij}\right)$ corresponding to the Gram determinant of the 
perpendicular directions 
$\lambda_{ij}\equiv\ell_{i,[D_{\perp}]}\cdot\ell_{j,[D_{\perp}]}\equiv\ell_{i,[D_{\perp}]}^{\alpha}\ell_{j,[D_{\perp}]\,\alpha}$
and $\mathcal{J}_{\left[D_{\parallel}\right]}^{(L)}$ the Jacobian of the transformation
given by,
\begin{align}
\mathcal{J}_{\left[D_{\parallel}\right]}^{(L)}
&=\pm\left(\imath\right)^{LD_\parallel}\prod_{i=1}^{L}\sqrt{-\det\left(\frac{\partial\ell_{i,\left[D_{\parallel}\right]}^{\alpha}}{\partial a_{ij}}\frac{\partial\ell_{i,\left[D_{\parallel}\right]\,\alpha}}{\partial a_{ik}}\right)}\,,\\
&\text{with }j,k\leq D_{\parallel}\,, \notag
\end{align}
where the overall sign $\pm$ in the Jacobian originates from the square roots and, for the purpose of 
our discussion does not introduce any ambiguity. 
Likewise, notice that the overall prefactor of $\imath$ in the Jacobian does not play 
any important role. Therefore, in the discussion of the following sections, we will neglect it.

In view that our analysis is performed at the integrand level, we observe that all scalar products between
external and internal momenta (as well as propagators) are expressed in terms
of the loop-momentum components $a_{ij}$ and $\lambda_{ij}$.
A more involved structure that arises when considering helicity amplitudes
can methodically be integrated out through Gegenbauer polynomials, 
as elucidated in Ref.~\cite{Mastrolia:2016dhn}.

Therefore, similar to one-loop, the integration of spherical coordinates
in scalar multi-loop Feynman integrals are cast in, 
\begin{align}
\Omega_{D_{\perp}}^{\left(L\right)}&=\prod_{i=1}^{L}\frac{\Omega_{D_{\perp-i}}}{\pi^{D/2}}\,.
\end{align}

With the integration measure given by Eq.~\eqref{eq:chanvars}
in the expression of a generic multi-loop Feynman integral~\eqref{eq:intL}
at our disposal, 
an alert reader might notice a similarity with the Baikov representation. 
 In effect, this derivation has been performed in~\cite{Lee:2010wea,Frellesvig:2017aai,Primo:2017jtw}.
Within this framework, one obtains,
\begin{align}
d^{D_{\parallel}}\ell_{\parallel}\,d^{\frac{L\left(L+1\right)}{2}}\Lambda
\sim
\prod_{i=1}^{m}dD_i\,,
\end{align}
in which the number of integrations in l.h.s. and r.h.s. 
is the same,  $L(L+1)/2+EL=m$, as earlier anticipated. 

Let us emphasize that the decomposition in terms of parallel and perpendicular components has been initially employed to derive Baikov representation of Feynman integrals to understand dimensional recurrence relations between Feynman integrals~\cite{Lee:2010wea}, alternatively to their parametric representation in terms of graph polynomials~\cite{Tarasov:1996br,Lee:2009dh}.
In the following, we make an extensive use of the representation~\eqref{eq:chanvars} rather than Baikov representation because our main motivation lays in the mathematical study of the integration variables $a$ and $\lambda$.

Elaborating on this loop-momentum parametrization,  we can proceed 
to analyse Landau and leading singularities of different loop topologies.
We shall recall that the physical information is contained in the
parallel direction of the space-time dimension. The remaining integrations
are simply cast in $\lambda_{ij}$. 
Therefore, we will mostly draw our attention to integrations in 
the parallel subspace. 

\section{Landau Singularities}
\label{sec:landau}

We will focus on the form of the Landau equations in the decomposition of the space-time in terms of parallel and perpendicular components.

Let us first look at what form Landau equations take for our definition of $D$-dimensional integral~\eqref{eq:1lop}. 
Landau equations can be derived by requiring that polar sets $D_{j}$ are in non-general positions, i.e. when differential $dD_{j}$ are not linearly independent together with the requirement that we are confined to a singular variety of the integrand (on-shell conditions). 
Thus, we have, 
\begin{subequations}
\begin{align}
&D_{i} = 0 \, \text{ for } i=1,\hdots,N , \label{on-shell}\\
&\sum_{i}\alpha_{i}dD_{i}=  
\label{second_landau}
\notag \\
&\sum_{i} \alpha_{i}\left(\frac{\partial D_{i}}{\partial \ell_{\parallel}^{1}}d\ell_{\parallel}^{1}+\hdots+\frac{\partial D_{i}}{\partial \ell_{\parallel}^{D_{\parallel}}}d\ell_{\parallel}^{D_{\parallel}}
+\frac{\partial D_{i}}{\partial \ell_{\perp}}d\ell_{\perp} 
\right) = \notag \\
& \left (\sum_{i} \alpha_{i} \frac{\partial D_{i}}{\partial \ell_{\parallel}^{1}} \right)d\ell_{\parallel}^{1} + \hdots + \left (\sum_{i} \alpha_{i} \frac{\partial D_{i}}{\partial \ell_{\perp}} \right)d\ell_{\perp} = 0 \,.
\end{align}
\end{subequations}
Taking into account the independence of differential $d \ell$ the brackets in the last line of \eqref{second_landau} give the well-known form of the Landau equations.
In principle one can study singularities of subdiagrams by imposing $\alpha_{i}=0$ for some $i$, then Eq.~\eqref{on-shell} in general should be stated as $D_{i}=0 \text{ or } \alpha_{i}=0$. However, since we are interested in leading Landau singularities, all $\alpha$'s are non-zero. 

We impose that the Landau equations has nontrivial solution for $\alpha$'s which happens when $\det(\frac{\partial D_{i}}{\partial \ell^{j}})$ vanishes. By multiplying by the transpose of the matrix $(\frac{\partial D_{i}}{\partial \ell^{j}})$, this system can be put into the form,
\begin{equation}
\sum_{i} \alpha_{i} \left(q_{i}\cdot q_{j}\right)=0\,, \quad j=1,\hdots,N \,.
\label{eq:mysysLan}
\end{equation}
There is one such system for each independent loop in the diagram.
Together with on-shell conditions and momentum conservation, 
this gives us the equation for the leading Landau variety which we also call Landau singularity. In practice solving Landau equations for complicated Feynman diagrams can be very difficult. However, for one-loop diagrams, as we have only one equation of the type \eqref{eq:mysysLan} it simplifies a lot.
Thus, 
here and in the following, we define, for one-loop diagrams,
\begin{equation}
\text{LanS}_{n}^{(1)}  = \det(q_{i}\cdot q_{j})\,, \quad i,j=1,\hdots , N \,,
\label{one_loop_landau}
\end{equation}
which should be understood that on-shell conditions and momentum conservation are taken into account. Then, the solution $\text{LanS}_{n}^{(1)} =0$ gives us the Landau singularity for one-loop diagrams. 
This is a well established result know since the beginning of the analytic studies of Feynman integrals~\cite{eden_1966}.
In this respect, Eq.~\eqref{one_loop_landau} is given by the Gram determinant of internal momenta which can vanish identically if the number of internal momenta is larger than the dimension of the space. We will study the dimensional dependence of Landau singularities for one-loop diagrams. 

In the following, we will explore Landau equations in the parametrization of the loop momenta according to~\eqref{eq:loopparam}.
To start, we will observe in one-loop Feynman integrals that solutions to these 
equations, because of the decomposition of the dimension, 
are cast in $\lambda_{11}=\ell_\perp^2$ once on-shell conditions are imposed. 

\subsection{One-loop Landau singularities}
\label{sec:landau1L}

Because of the decomposition of the space-time dimension into parallel and perpendicular 
subspaces that was laid out in Sec.~\ref{sec:decosub}, one can work out, within this framework, the general solution of Eq.~\eqref{eq:mysysLan}. We will show that this provides the following expression for Landau singularities of a one-loop $n$-point Feynman integral, 
\begin{equation}
\text{LanS}_{n}^{(1)} 
= \lambda_{11} \det \left( (p_{i}\cdot p_{j})_{(n-1)\times (n-1)}\right),    
\end{equation}
which holds only when on-shell conditions are imposed.
To achieve that, let us first express the Gram determinant of propagators~\eqref{eq:mysysLan} 
in terms of parallel and perpendicular components,
\begin{prop}
\begin{equation}
\begin{split}
&{\rm{LanS}}_{n}^{(1)}  
= \det(q_{i} \cdot q_{j}) = \det((q_{i,\parallel} \cdot q_{j,\parallel} + \lambda_{11})_{n \times n})
\\
&= \det((q_{i,\parallel} \cdot q_{j,\parallel})_{n \times n}) \\
&+ \lambda_{11}\sum_{k=1}^{n}\det\left[ (q_{i,\parallel} \cdot q_{j,\parallel}(1-\delta_{jk}(1- \frac{1}{q_{i,\parallel} \cdot q_{j,\parallel}})))_{n \times n} \right]
\\
&= \lambda_{11}\sum_{k=1}^{n}\det\left[ (q_{i,\parallel} \cdot q_{j,\parallel}(1-\delta_{jk}(1- \frac{1}{q_{i,\parallel} \cdot q_{j,\parallel}})))_{n \times n} \right]\,,
\end{split}
\end{equation}
with $q_{i,\parallel}^\alpha = \ell_{1,[D_\parallel]}^\alpha+P_i^\alpha$, 
according to Fig.~\ref{fig:gen1L}.
\label{lan_perp}
\end{prop}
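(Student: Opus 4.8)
The plan is to reduce the statement to three ingredients: an entrywise decomposition of the Gram matrix of propagators, a multilinear expansion of its determinant, and a rank argument that annihilates the leading term. First I would write $q_i = q_{i,\parallel} + \ell_{1,[D_\perp]}$, using $q_i = \ell_1 + P_i$ together with the fact that $P_i$ is built from external momenta and so lies entirely in the parallel subspace. Taking scalar products and invoking the orthogonality relations $\ell_{1,[D_\parallel]}\cdot\ell_{1,[D_\perp]}=0$ and $p_k\cdot\lambda_1=0$ from Sec.~\ref{sec:decosub}, the cross terms drop out and, recalling $\lambda_{11}=\ell_{1,[D_\perp]}\cdot\ell_{1,[D_\perp]}$, one obtains
\begin{equation}
q_i\cdot q_j = q_{i,\parallel}\cdot q_{j,\parallel} + \ell_{1,[D_\perp]}\cdot\ell_{1,[D_\perp]} = q_{i,\parallel}\cdot q_{j,\parallel} + \lambda_{11}\,,
\end{equation}
which is exactly the first equality of the proposition.

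Next I would expand $\det(M)$ with $M_{ij}=q_{i,\parallel}\cdot q_{j,\parallel}+\lambda_{11}$ by multilinearity in the columns. Writing $A_{ij}=q_{i,\parallel}\cdot q_{j,\parallel}$ and noting that the $j$-th column of $M$ is the $j$-th column of $A$ plus $\lambda_{11}\mathbf{1}$, where $\mathbf{1}$ is the all-ones vector, multilinearity generates terms indexed by the set of columns in which the $\lambda_{11}\mathbf{1}$ piece is chosen. Any term choosing $\mathbf{1}$ in two or more columns produces two identical columns and therefore vanishes. Only two families survive: the term with no replacement, giving $\det A$, and the $n$ terms with a single column $k$ replaced by $\lambda_{11}\mathbf{1}$, contributing $\lambda_{11}\sum_{k}\det A^{(k)}$, where $A^{(k)}$ is $A$ with its $k$-th column set to all ones. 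The matrix with entries $q_{i,\parallel}\cdot q_{j,\parallel}\,(1-\delta_{jk}(1-1/(q_{i,\parallel}\cdot q_{j,\parallel})))$ in the statement is precisely $A^{(k)}$: for $j\neq k$ the bracket is unity and the entry is $q_{i,\parallel}\cdot q_{j,\parallel}$, while for $j=k$ it collapses to $1$. This reproduces the second line of the proposition, and as a cross-check the same result follows from the matrix determinant lemma applied to $A+\lambda_{11}\mathbf{1}\mathbf{1}^{T}$.

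Finally I would show the leading term $\det A$ vanishes. At one loop the parallel subspace has dimension $D_\parallel=n-1$, so the $n$ vectors $q_{i,\parallel}=\ell_{1,[D_\parallel]}+P_i$ all lie in an $(n-1)$-dimensional space and are linearly dependent. The Gram matrix of a linearly dependent family is singular, hence $\det A=0$, leaving only the $\lambda_{11}$ term and establishing the third line, which is the claim.

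I expect the genuinely delicate point to be not any single computation but the bookkeeping of the multilinear expansion: confirming that it is the columns (index $j=k$, matching the $\delta_{jk}$ in the statement) that are replaced, that the symmetry of the Gram matrix makes the column-versus-row choice immaterial, and that every term with two or more all-ones columns is indeed killed. The orthogonal decomposition and the rank argument are immediate once the parametrisation of Sec.~\ref{sec:decosub} is in hand; the only care needed is to keep $D_\parallel=n-1$ in force, which requires $\lfloor D\rfloor\geq n-1$ so that the external momenta genuinely span an $(n-1)$-dimensional parallel space.
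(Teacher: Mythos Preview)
Your proposal is correct and follows essentially the same route as the paper: both arguments use multilinearity of the determinant in the columns of $A+\lambda_{11}\mathbf{1}\mathbf{1}^{T}$, kill all terms with two or more all-ones columns by the repeated-column rule, and then eliminate the surviving $\det A$ via the rank deficiency of $n$ vectors in the $(n{-}1)$-dimensional parallel subspace. Your explicit derivation of $q_i\cdot q_j=q_{i,\parallel}\cdot q_{j,\parallel}+\lambda_{11}$ and the matrix determinant lemma cross-check are minor additions, but the proof is otherwise the same as the paper's.
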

\begin{proof}
We work in a matrix form and write $q_{i,\parallel}\cdot q_{j,\parallel}$ as $q_{ij}$;
we compactly express $\det(q_{ij} + \lambda_{11})$ in a column form as follows,
\begin{equation}
\vert Q_{1} + \Lambda_{11}, Q_{2} + \Lambda_{11}, \hdots, Q_{n} + \Lambda_{11} \vert \,,
\end{equation}
where $Q_{j} = (q_{1j},\hdots,q_{nj})^{T}$ and $\Lambda_{11}$ should be understood as a column vector with all entries equal to $\lambda_{11}$. 

We now use multi-linearity of the determinant, starting from the additivity, 
\begin{equation}
\begin{split}
&\vert Q_{1}+\Lambda_{11},Q_{2}+\Lambda_{11},\hdots,Q_{n}+\Lambda_{11}\vert \\
&=\vert Q_{1},Q_{2}+\Lambda_{11},\hdots,Q_{n}+\Lambda_{11}\vert \\
&+\vert\Lambda_{11},Q_{2}+\Lambda_{11},\hdots,Q_{n}+\Lambda_{11}\vert\,.
\end{split}
\end{equation}
After repeating this procedure until having no more summation 
within any of the determinants and using the fact that if two columns of the determinant are the same it vanishes,
we are left with,
\begin{equation}
\begin{split}
&\vert Q_{1} + \Lambda_{11}, Q_{2} + \Lambda_{11},\hdots, Q_{n} + \Lambda_{11} \vert \\
&=\vert Q_{1} , Q_{2} ,\hdots, Q_{n} \vert + 
\vert \Lambda_{11}, Q_{2} ,\hdots, Q_{n}  \vert \\
&+\vert Q_{1}, \Lambda_{11},\hdots, Q_{n} \vert 
+ \hdots + 
\vert Q_{1} , Q_{2},\hdots, \Lambda_{11} \vert \,.
\end{split}
\end{equation}
Further, by applying the homogeneity, we get,
\begin{align}
\label{lambda_factor}
&\vert Q_{1} + \Lambda_{11}, Q_{2} + \Lambda_{11},\hdots, Q_{n} + \Lambda_{11} \vert  \notag
\\
&=\vert Q_{1} , Q_{2} ,\hdots, Q_{n} \vert + 
\lambda_{11}\vert I, Q_{2} ,\hdots, Q_{n}  \vert  \\
&+\lambda_{11}\vert Q_{1}, I ,\hdots, Q_{n} \vert
+ \hdots + 
\lambda_{11}\vert Q_{1} , Q_{2},\hdots, I \vert \,, 
\notag
\end{align}
where $I=(1,\hdots,1)^{T}$.

Moreover, as there are only $n-1$ linearly independent vectors $q_{i,\parallel}$, 
the $n \times n$ Gram determinant $\det((q_{i,\parallel} \cdot q_{j,\parallel})_{n \times n}) = \vert Q_{1} , Q_{2} ,\hdots, Q_{n} \vert$ vanishes. 
Thus, by factoring out $\lambda_{11}$ in~\eqref{lambda_factor} we obtain the assertion.
\end{proof}

Now we are in the position to show that the sum in Proposition \ref{lan_perp} is equal to the Gram determinant of independent external momenta.

\begin{prop}
$\sum_{k=1}^{n} \det(q_{i,\parallel} \cdot q_{j,\parallel}(1-\delta_{jk}(1-\frac{1}{q_{i,\parallel} \cdot q_{j,\parallel}})))$ is equal to $\det \left( (p_{i}\cdot p_{j})_{(n-1)\times (n-1)}\right)$. 
\label{prop_lambda}
\end{prop}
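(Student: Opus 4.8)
The plan is to make the $k$-th summand concrete, reduce the claim to a Gram determinant of differences of the $q_{i,\parallel}$, and then collapse a sum of cofactor-type minors into a single determinant. First I would unpack the $\delta_{jk}$ factor: in the $k$-th term the $(i,j)$ entry equals $q_{i,\parallel}\cdot q_{j,\parallel}$ for $j\neq k$ and equals $1$ for $j=k$ (since $q_{ij}(1-(1-1/q_{ij}))=1$). Hence the $k$-th term is exactly the determinant of the parallel Gram matrix $G_{ij}=q_{i,\parallel}\cdot q_{j,\parallel}$ with its $k$-th column replaced by the all-ones vector $I=(1,\dots,1)^{T}$; that is, the sum is $\sum_{k}\lvert Q_{1},\dots,I,\dots,Q_{n}\rvert$ with $I$ in slot $k$, precisely the objects produced in the proof of Proposition~\ref{lan_perp}.

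Next I would set $r_{i}:=q_{i,\parallel}-q_{n,\parallel}$ for $i=1,\dots,n-1$. Using $q_{i,\parallel}=\ell_{1,[D_{\parallel}]}+P_{i}$ these are integer combinations of the independent external momenta, related to $p_{1},\dots,p_{n-1}$ by a unimodular change of basis; consequently $\det\big((r_{i}\cdot r_{j})_{(n-1)\times(n-1)}\big)=\det\big((p_{i}\cdot p_{j})_{(n-1)\times(n-1)}\big)$, since the basis-change Jacobian is $\pm1$ and enters squared. This reduces the assertion to proving that the sum equals $\det(r_{i}\cdot r_{j})$.

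The core step is the reduction itself. On each replaced matrix I would carry out the row operations $R_{i}\to R_{i}-R_{n}$ for $i=1,\dots,n-1$. In the all-ones column these turn the top $n-1$ entries into $0$ and leave the bottom entry $1$, so that column becomes $e_{n}$; in every other column $j$ the top entries become $r_{i}\cdot q_{j,\parallel}$. Expanding along the $e_{n}$-column presents the $k$-th term as $(-1)^{n+k}\det\big((r_{i}\cdot q_{j,\parallel})_{i\le n-1,\,j\neq k}\big)$. Summing over $k$, the signed minors reassemble as the cofactor expansion along the last row of the $n\times n$ matrix whose first $n-1$ rows are $(r_{i}\cdot q_{j,\parallel})$ and whose last row is $I^{T}$; finally the column operations $C_{j}\to C_{j}-C_{n}$ ($j=1,\dots,n-1$) put this matrix in block-triangular form with diagonal blocks $(r_{i}\cdot r_{j})_{(n-1)\times(n-1)}$ and $1$, yielding $\det(r_{i}\cdot r_{j})$ and hence the claim.

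I expect the only real obstacle to be bookkeeping rather than ideas: keeping the cofactor signs straight so that appending the row of ones exactly reproduces $\sum_{k}(-1)^{n+k}(\cdots)$, and pinning down the $P_{i}$ and momentum-conservation conventions that make the $r_{i}\leftrightarrow p_{j}$ change of basis unimodular. The one genuinely useful observation is that a sum of $n$ minors, each obtained by deleting a different column, is itself a single $n\times n$ determinant with an appended row of ones.
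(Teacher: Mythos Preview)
Your argument is correct. The overall architecture matches the paper's proof: both reduce the sum of $n$ column-replaced determinants to a single $n\times n$ determinant carrying one row/column of ones, isolate that column by subtracting a fixed row, Laplace-expand, and then use further row/column subtractions to reach the Gram matrix of the external momenta. The paper does this via Lemma~\ref{sum_det} (combine first, then subtract rows and expand), while you do it in the opposite order (subtract rows and expand each summand, then recognise the sum as the cofactor expansion of a bordered matrix); these are the same idea up to reordering.

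The one place where the two proofs genuinely differ is the endgame. After the Laplace step the paper writes the surviving entries as $q_{ij}+q_{n1}-q_{i1}-q_{nj}$, substitutes $q_{ij}=\tfrac12(q_{ii}+q_{jj}-f_{ij}(P))$ with $f_{ij}(P)=(\sum_{k=i}^{j-1}p_k)^2$, and performs an explicit expansion with two more rounds of row/column subtractions to reduce each entry to $-p_i\cdot p_j$. Your route is more conceptual: you note that $r_i=q_{i,\parallel}-q_{n,\parallel}=-(p_i+\dots+p_{n-1})$ differ from the $p_j$ by a triangular integer matrix with $\pm1$ on the diagonal, so $\det(r_i\cdot r_j)=\det(p_i\cdot p_j)$ without any entrywise computation. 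This is a genuine simplification and the only real advantage either approach has over the other. Your stated worry about signs is unfounded: with $M_{nk}=1$ for all $k$ the last-row expansion reproduces exactly $\sum_k(-1)^{n+k}\det\big((r_i\cdot q_{j,\parallel})_{i\le n-1,\,j\ne k}\big)$, and the unimodular factor enters squared, so no residual sign survives.
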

\begin{proof}
By making use of Lemma~\ref{sum_det}, we can write $\sum_{k=1}^{n} \det(q_{i,\parallel} \cdot q_{j,\parallel}(1-\delta_{jk}(1-\frac{1}{q_{i,\parallel} \cdot q_{j,\parallel}})))$ as $ \vert I (Q_{2}-Q_{1}) (Q_{3}-Q_{1})\hdots(Q_{n}-Q_{1}) \vert$. Let us subtract the last row from all other rows. This gives us $ \vert I' (Q'_{2}-Q'_{1}) (Q'_{3}-Q'_{1})\hdots(Q'_{n}-Q'_{1}) \vert $, 
where, in particular, $I'=(0,0,\hdots,0,1)^{T}$. 
By the Laplace expansion along the first column, we get the 
$(n-1)\times (n-1)$ determinant $(-1)^{n+1} \vert (Q''_{2}-Q''_{1}) (Q''_{3}-Q''_{1})\hdots(Q''_{n}-Q''_{1}) \vert $. Let us look at the elements of this determinant. The elements of $Q_{j}-Q_{1}$, for $j=2,\hdots,n$, are given by $q_{ij}-q_{i1}$. 
Then, by subtracting the last row, which has elements $q_{nj}-q_{n1}$, we get elements of $Q''_{j}-Q''_{1}$ given by $q_{ij}-q_{i1}-(q_{nj}-q_{n1}) = q_{ij}+q_{n1}-q_{i1}-q_{nj}$, for $i=1,\hdots, n-1$, but each $q_{ij} = \frac{1}{2}(q_{ii}+q_{jj} - f_{ij}(P))$ and thus all $q_{kk}$ cancel in each element, 
leaving only functions depending on the external momenta.

The function $f_{ij}(P)$ for each $q_{ij}$ has the following structure,
\begin{align}
f_{ij}(P)=\left(\sum_{k=i}^{j-1}p_{k}\right)^{2} \equiv f_{ij}\,,
\end{align}
where we assumed $i \leq j$, which is justified by the fact that $q_{ij}=q_{ji}.$\footnote{Notice 
that $f_{ij}(P)$ can also be written in a more symmetric way 
$f_{ij}(P)=\left(\sum_{k=1}^{\vert i-j\vert}p_{k-1+\frac{1}{2}(i+j-\vert i-j\vert)}\right)^{2}$, 
where we avoid this assumption.} 
Thus, we get,
\begin{equation}
\begin{split}
q_{ij}+q_{n1}-q_{i1}-q_{nj} = \frac{1}{2}( -f_{ij}+f_{n1}-f_{i1}+f_{nj}) \,.   
\end{split}
\end{equation}
Now, let us subtract $(i+1)$-th row from $i$-th row for each $i=1,\hdots,n-2$ to get,
\begin{equation}
\frac{1}{2} (-f_{ij} + f_{(i+1)j} + f_{i1} - f_{(i+1)1}) \,.     
\end{equation}
Next, we subtract $j$-th column from $(j+1)$-th column, for $j=2,\hdots,n$, which gives,
\begin{equation}
\frac{1}{2}(-f_{i(j+1)}+f_{ij}-f_{(i+1)j}+f_{(i+1)(j+1)}) \,.  
\end{equation}
Let us make the following change of variables $p_{i}=x_{i}-x_{i+1}$ (dual variables). Then, $f_{ij}(P)=(x_{i}-x_{j})^{2}$ and we have,
\begin{align}
&\frac{1}{2}\big(-(x_{i}-x_{j+1})^2+(x_{i}-x_{j})^2- \notag \\
&-(x_{i+1}-x_{j})^2+(x_{i+1}-x_{j+1})^2\big) =  \notag \\
&=(x_{i}\cdot x_{j+1}-x_{i} \cdot x_{j}+x_{i+1} \cdot x_{j}-x_{i+1}\cdot x_{j+1})= \notag \\
& = -(x_{i}-x_{i+1})(x_{j}-x_{j+1}) = -p_{i} \cdot p_{j}
\,.   
\end{align}%
Thus, by factoring out $-1$ from each column, we get $(-1)^{n-1}$,
and combining it with factors coming from the Laplace expansion, 
we finally obtain $(-1)^{n+1}(-1)^{n-1}=(-1)^{2n}=1$, which assures the statement.
\end{proof}

From the combination of Propositions~\ref{lan_perp} and~\ref{prop_lambda}, we support our claim, 
\begin{equation}
\lambda_{11} = \frac{{\rm{LanS}}_{n}^{(1)}}{\det \left( (p_{i}\cdot p_{j})_{(n-1)\times (n-1)}\right)}\,.
\label{lambda_landau}
\end{equation}
Let us emphasize that this relation is true only when on-shell conditions are imposed, as it is the way we defined $\text{LanS}_{n}^{(1)}$ in \eqref{one_loop_landau}. In other words, one can use on-shell conditions to eliminate parallel loop components and express $\lambda_{11}$ solely in terms of external kinematics, which we present below.

Notice that, due to the way how Feynman propagators are expressed 
in this parametrization (see Eq.~\eqref{lambda_eq}), one can determine $\lambda_{11}$
from the remaining Landau equations (i.e., on-shell conditions),
\begin{equation}
\begin{split}
D_{i} - D_{i+1} = 0\,, \\
D_{1} = 0\,.
\end{split}
\end{equation}
The first $n-1$ on-shell conditions $D_{i} - D_{i+1} = 0$ give $n-1$ equations for the loop-momentum
components $a_{i}$, for $i=1,\hdots n-1$, of Eq.~\eqref{lambda_eq}, 
which can then be expressed in terms of external kinematics and internal masses.
\begin{align}
0& = D_{i}-D_{i+1}\,,
\label{a_var}\\
0& = -m_{i}^{2}+m_{i+1}^{2}-2\sum_{j=1}^{i-1}\left(p_{i}\cdot p_{j}\right)-p_{i}^{2}-2\sum_{j=1}^{D_{\parallel}}\left(p_{i}\cdot p_{j}\right)\,a_{j}\,.
\notag
\end{align}
Furthermore, without loss of generality, we can consider the last equation $D_{1}=0$ (see Eq.~\eqref{lambda_eq})
to solve for $\lambda_{11}$,
whose explicit form at one-loop is given by, 
\begin{equation}
\lambda_{11} = m_{1}^{2} - \sum_{j,k=1}^{D_\parallel}
\left( p_{j}\cdot p_{k}\right)\, a_{j}a_{k}\,,
\label{lambda_eq1}
\end{equation}
where, by using the solutions~\eqref{a_var}, we can further express $\lambda_{11}$ 
as a function depending only on external kinematics and masses. 
From~\eqref{lambda_landau}, we also observe that $\lambda_{11}=0$ 
exactly corresponds to an equation for Landau variety.
\begin{cor}
For one-loop scalar Feynman integrals with 
equal internal masses, $\rm{LanS}$ can be split into 
a term proportional to the mass and a term independent of the mass. 
The proportionality factor for the mass term is the Gram determinant of independent external momenta, $\det(p_{i} \cdot p_{j})$.
\label{cor:corequalmass}
\end{cor}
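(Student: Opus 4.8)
The plan is to read off the decomposition directly from the factored form of the Landau singularity established above, rather than expanding $\det(q_i\cdot q_j)$ from scratch. Combining Propositions~\ref{lan_perp} and~\ref{prop_lambda} gives $\mathrm{LanS}_n^{(1)} = \lambda_{11}\,\det\!\big((p_i\cdot p_j)_{(n-1)\times(n-1)}\big)$, so that the entire mass dependence of $\mathrm{LanS}$ is carried by the single scalar $\lambda_{11}$, while the Gram determinant prefactor depends only on the external momenta. Hence it suffices to show that, when all internal masses coincide, $m_i\equiv m$, the quantity $\lambda_{11}$ is an affine function of $m^2$ with constant coefficient $+1$ in front of $m^2$ and a remainder built purely from external kinematics.

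First I would invoke the explicit one-loop form~\eqref{lambda_eq1}, $\lambda_{11} = m_1^2 - \sum_{j,k=1}^{D_\parallel}(p_j\cdot p_k)\,a_j a_k$, in which the $a_j$ are fixed by the on-shell conditions~\eqref{a_var}. The key observation is that for equal masses the difference equations $D_i-D_{i+1}=0$ lose their mass dependence: the combination $-m_i^2+m_{i+1}^2$ appearing in~\eqref{a_var} vanishes identically, leaving a linear system for the $a_j$ whose coefficients involve only the scalar products $p_i\cdot p_j$, the partial sums $\sum_{j=1}^{i-1}(p_i\cdot p_j)$, and $p_i^2$. Consequently the solution $a_j=a_j(p)$ is a function of the external kinematics alone, with no residual dependence on $m$.

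Substituting these mass-independent $a_j$ back into~\eqref{lambda_eq1} then splits $\lambda_{11}=m^2 - S(p)$, where $S(p)=\sum_{j,k}(p_j\cdot p_k)\,a_j(p)\,a_k(p)$ depends only on external momenta. Multiplying by the (manifestly mass-independent) Gram determinant yields $\mathrm{LanS}_n^{(1)} = m^2\,\det(p_i\cdot p_j) - S(p)\,\det(p_i\cdot p_j)$, which is exactly the claimed decomposition: the term proportional to $m^2$ has the Gram determinant $\det(p_i\cdot p_j)$ as its coefficient, and the remaining piece is independent of the mass.

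I expect the only real content to lie in the second step, namely verifying that the $a_j$ genuinely decouple from $m$. This hinges on the cancellation $m_i^2=m_{i+1}^2$ propagating across the full chain of difference equations, so the one subtlety to check is that it is the \emph{differences} $D_i-D_{i+1}$ (and not the single equation $D_1=0$ used to isolate $\lambda_{11}$) that determine the $a_j$. The equation $D_1=0$ is precisely where the surviving $m^2$ enters, which explains why exactly one power of $m^2$ appears linearly in $\lambda_{11}$, and hence in $\mathrm{LanS}$.
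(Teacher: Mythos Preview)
Your argument is correct and follows precisely the same route as the paper, which simply cites \eqref{lambda_landau}, \eqref{a_var}, and \eqref{lambda_eq1}; you have just spelled out the inferences explicitly. The key observation---that the mass differences $-m_i^2+m_{i+1}^2$ in \eqref{a_var} vanish for equal masses, leaving the $a_j$ mass-independent so that the only $m^2$ enters through $m_1^2$ in \eqref{lambda_eq1}---is exactly what the paper intends.
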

\begin{proof}
It follows from \eqref{lambda_landau},~\eqref{a_var}, and~\eqref{lambda_eq1}.
\end{proof}

\subsection{One-loop bubble and triangle Landau singularities}

We finalise this section with examples of explicit 
computations of Landau singularities for one-loop scalar bubble and triangle integrals. 
For additional examples of up-to six-point Feynman integrals, we refer the Reader to 
Appendix~\ref{app:landexamples}. 

\paragraph{One-loop bubble integral\\}

For the one-loop scalar bubble the parametrization of the loop momentum,
by the decomposition studied in the paper is the following
\begin{align}
\ell_1^\alpha = a_1\,p_1^\alpha + \lambda_1^\alpha\,,
\end{align}
with $p_1^2\ne 0$. 
The Gram determinant of propagators, before imposing the on-shell conditions, becomes, 
\begin{align}
\det\left(q_{i}\cdot q_{j}\right)&=\lambda_{11}\,p_{1}^{2}\,,
\end{align}
then, from Eqs.~\eqref{a_var} and~\eqref{lambda_eq1} we obtain,
\begin{subequations}
\begin{align}
a_{1}&=\frac{1}{2p_{1}^{2}}\left(-m_{1}^{2}+m_{2}^{2}-p_{1}^{2}\right)\,,
\\
\lambda_{11}&=m_{1}^{2}-p_{1}^{2}\,a_{1}^{2}\,,
\end{align}
\end{subequations}
that corresponds to imposing on-shell conditions and allows to find the 
well-known Landau singularity, 
\begin{align}
\text{LanS}^{(1)}_2
&=-\frac{1}{4}\lambda_{\text{K}}\left(p_{1}^{2},m_{1}^{2},m_{2}^{2}\right)\,,
\end{align}
in terms of the K\"all\'en function, $\lambda_\text{K}(a,b,c)=a^2+b^2+c^2-2ab-2ac-2bc$. 

Furthermore, by considering the equal mass case, $m_{1}^{2}=m_{2}^{2}=m^{2}$, 
\begin{align}
\text{LanS}^{(1)}_2\Big|_{m_i^2 = m^2}
&
=-\frac{1}{4}\left(p_{1}^{2}\right)^{2}+m^{2}p_{1}^{2}\,.
\end{align}

\paragraph{One-loop triangle integral\\}

We continue with the explicit calculation of the Landau singularity of the one-loop triangle
and different kinematic scales.
The loop momentum can be parametrized in terms of two independent momenta,
\begin{align}
\ell_1^\alpha = a_1\,p_1^\alpha + a_2\,p_2^\alpha+ \lambda_1^\alpha\,,
\end{align}
with $p_1^2,p_2^2,p_3^2\ne 0$ and, because of the momentum conservation, 
$2p_1\cdot p_2 = p_3^2-p_1^2-p_2^2$. 
Thus, the loop-momentum components for this loop topology take the form, 
\begin{subequations}
\begin{align}
a_{1}&= \frac{1}{\lambda_{\text{K}}\left(p_{1}^{2},p_{2}^{2},p_{3}^{2}\right)}\bigg[2p_{2}^{2}\left(m_{1}^{2}-m_{2}^{2}+p_{1}^{2}\right) \notag \\
&-\left(p_{1}^{2}+p_{2}^{2}-p_{3}^{2}\right)\left(-m_{2}^{2}+m_{3}^{2}+p_{1}^{2}-p_{3}^{2}\right) \bigg]\,,
\\
a_{2}&= \frac{1}{\lambda_{\text{K}}\left(p_{1}^{2},p_{2}^{2},p_{3}^{2}\right)} \bigg[p_{1}^{2}\left(m_{1}^{2}+m_{2}^{2}-2m_{3}^{2}+p_{2}^{2}+p_{3}^{2}\right)  
\notag \\
&+\left(m_{1}^{2}-m_{2}^{2}\right)\left(p_{2}^{2}-p_{3}^{2}\right)-(p_{1}^{2})^{2}\bigg]\,,
\\
\lambda_{11}&=m_{1}^{2}-a_{1}^{2}p_{1}^{2}-a_{2}^{2}p_{2}^{2}+a_{1}a_{2}\left(p_{1}^{2}+p_{2}^{2}-p_{3}^{2}\right)\,,
\end{align}
\end{subequations}
whose combination amounts to the Landau singularity,
\begin{align}
\text{LanS}^{(1)}_3 &=
-\frac{1}{4}\Big[\frac{p_{1}^{2}\,p_{2}^{2}\,p_{3}^{2}}{3}-p_{3}^{2}\big(\left(m_{1}^{2}-m_{2}^{2}\right)\left(m_{2}^{2}-m_{3}^{2}\right) \notag \\
&+m_{2}^{2}\left(p_{1}^{2}+p_{2}^{2}-p_{3}^{2}\right)\big)\Big]+\text{cycl. perm.}\,.
\label{eq:land3pt}
\end{align}
The Landau singularity for the equal-mass case takes the form, 
\begin{align}
\text{LanS}^{(1)}_3\Big|_{m_i^2 = m^2}
&=-\frac{1}{4}\left(p_{1}^{2}\,p_{2}^{2}\,p_{3}^{2}+m^{2}\lambda_{\text{K}}\left(p_{1}^{2},p_{2}^{2},p_{3}^{2}\right)\right)\,.
\end{align}

\par\bigskip
Notice that in the equal-mass case in the 
above examples, the term proportional to $m^2$ corresponds to the Gram determinant of
the external momenta, in full agreement with corollary~\ref{cor:corequalmass}. 
For instance, in the one-loop triangle, we have,  
$\det\left(p_i\cdot p_j\right) = -1/4\,\lambda_{\text{K}}\left(p_{1}^{2},p_{2}^{2},p_{3}^{2}\right)$.

\section{Leading Singularities}
\label{sec:leading}

By continuing with the same strategy in the decomposition
of space-time dimension presented in Sec.~\ref{sec:decosub}, 
we now focus on leading singularities of one-loop Feynman integrals. 
By leading singularities we understand residues in Leray's sense evaluated over all polar sets, i.e.,
\begin{equation}
LS= \int_{\delta^{m} \sigma} \phi = (2 \pi i)^{m} \int_{\sigma} res^{m}[\phi]\,,    
\end{equation}
where $\phi$ is a closed form with $m$ polar sets $S_{1},\hdots,S_{m}$, $\sigma \in Z_{p-m}(S_{1} \cap \hdots \cap S_{m})$ is a cycle and $\delta^{m}\sigma$ is its composed co-boundary (see Appendix \ref{app_leray}). It can happen that the number of polar sets is larger than the dimension of the space, in this situation Leinartas decomposition \cite{leinartas1978, Aizenberg1994} allows to write $\phi$ as a sum of forms, each having at most as many polar sets as the dimension of the space.

To carry out this analysis, we present in this section illustrative computations 
of selected one-loop topologies, and provide the Reader with 
additional examples in Appendix~\ref{app:landexamples}. 

Additionally, because of the pattern displayed by explicit calculations
of leading singularities in the various examples,
we conjecture and prove the structure of these singularities in particular space-time dimensions -- a connection between Landau and leading singularities. 
Similar results to those presented in this section can already be found in~\cite{Abreu:2017ptx} which were obtained by working in the embedding formalism. 
Whilst authors of the mentioned paper focused on dimensional regularisation and present explicitly only analysis and results for even dimensions of the space-time, the odd dimensional cases can also by obtained by the method of~\cite{Abreu:2017ptx} by setting $\epsilon = -\frac{1}{2}$ 
\footnote{The closely related notion of maximal cut integrals presented in ~\cite{Abreu:2017ptx} is valid in arbitrary $D$ dimensions.}
.

In the last part of this section, we consider differential equations of one-loop Feynman integrals, 
showing that from the knowledge of their leading singularities 
the construction of differential equations in canonical form
is straightforwardly carried out.
We illustrate this with explicit calculations.

\subsection{One-loop leading singularities}

In the spirit of showing the simplicity of this integrand representation when extracting 
leading singularities, we explicitly consider in the following one-loop bubble
and triangle Feynman integrals, where for the sake of this analysis, 
we drop overall factors coming from angular integrations.

\paragraph{One-loop bubble integral\\}

The integrand of the one-loop bubble in $D\geq2$ becomes,
\begin{align}
&J^{\left(1\right),D}_2
\left(1,2\right)  
=\frac{1}{2}\,\mathcal{J}_{\left[D_{\parallel}\right]}^{\left(1\right)}
\,\int_{-\infty}^{\infty} da_{1} \int_{0}^{\infty}d\lambda_{11}\,\frac{\lambda_{11}^{(D_{\perp}-2)/2}}{D_{1}D_{2}}\nonumber 
\\
 & =\frac{1}{2}\,\sqrt{-p_{1}^{2}}\,\int_{-\infty}^{\infty} da_{1} \int_{0}^{\infty}d\lambda_{11} \\
&\times \frac{\lambda_{11}^{\left(D_{\perp}-2\right)/2}}{\left(a_{1}^{2}p_{1}^{2}-m_{1}^{2}+\lambda_{11}\right)\left(\left(a_{1}+1\right)^{2}p_{1}^{2}-m_{2}^{2}+\lambda_{11}\right)}\,, \notag
\end{align}
where $p_1^2\ne 0$ and, for this loop topology, $D_\parallel=1$ and $D_\perp=D-1$.

We observe that, by examining the structure of the integrand for different values of $D_\perp = D-1$,
and by relying on a partial fractioning along the lines of the approach presented in~\cite{Henn:2020lye,Wasser:2022kwg}, 
we find, 
\begin{subequations}
\begin{align}
&J_{2}^{\left(1\right),D=2}
\sim \notag \\
&\pm\frac{1}{2\sqrt{\lambda_{\text{K}}\left(p_{1}^{2},m_{1}^{2},m_{2}^{2}\right)}}
\int
d\log\frac{\sqrt{\lambda_{11}}-\sqrt{-D_{2,\parallel}}}{\sqrt{\lambda_{11}}+\sqrt{-D_{2,\parallel}}}  \notag\\
&\land\Bigg(d\log\frac{D_{21}+\sqrt{D_{21}+r_{1}^{+}}+\sqrt{D_{21}+r_{1}^{-}}+\sqrt{r_{1}^{+}}\sqrt{r_{1}^{-}}}{D_{21}+\sqrt{D_{21}+r_{1}^{+}}+\sqrt{D_{21}+r_{1}^{-}}-\sqrt{r_{1}^{+}}\sqrt{r_{1}^{-}}} \notag\\
&-\left(r_{1}^{\pm}\to r_{2}^{\pm}\right)\Bigg)
\,,
\\
&J_{2}^{\left(1\right),D=3}\sim \notag \\
&\frac{1}{\sqrt{-p_{1}^{2}}}\left[\int_{-\infty}^{\infty}\frac{dD_{12}}{D_{12}}\int_{A_{1}}^{\infty}\frac{dD_{1}}{D_{1}}+\int_{-\infty}^{\infty}\frac{dD_{21}}{D_{21}}\int_{A_{2}}^{\infty}\frac{dD_{2}}{D_{2}}\right]\,,
\\
&J_{2}^{\left(1\right),D\geq4}\to\text{ No $d\log$ representation} \label{no_dlog_bubble}\,,
\end{align}
\label{eq:bubbleD}
\end{subequations}
with $A_{1}=\frac{\lambda_{\text{K}}\left(m_{2}^{2}-D_{12},m_{1}^{2},p_{1}^{2}\right)}{4p_{1}^{2}}\,,A_{2}=\frac{\lambda_{\text{K}}\left(m_{1}^{2}-D_{21},m_{2}^{2},p_{1}^{2}\right)}{4p_{1}^{2}}$
and $D_{ij}=D_{i}-D_{j}$.
The prefactor (given only in terms of kinematic invariants)
of the various integrals corresponds to the leading singularity
of a Feynman integral in a particular space-time dimension. The no $d\log$ representation in \eqref{no_dlog_bubble} means that there is a double pole and thus it is not possible to transform it to the logarithmic representation.

Feynman Integrals~\eqref{eq:bubbleD} were obtained with the aid of relations~\eqref{eq:relDlog}, 
summarised in Appendix~\ref{app:landexamples}. 
Also, to make explicit the parallel component of propagators, 
we define, $D_{i,\parallel}\equiv D_{i}-\lambda_{11}$,
that is at most a quadratic polynomial in $a_1$ and absent of $\lambda_{11}$ 
(see Eq.~\eqref{lambda_eq}). 
To simplify the notation, here and in the following,
we use the shorthand notation $D_{ij} = D_{i}-D_{j}$. 
The values of $r^\pm_i$, for $i=1,2$, are, 
\begin{subequations}
\begin{align}
r_{1}^{\pm}=&m_{21}^{2}-p_{1}^{2}\pm2\sqrt{p_{1}^{2}}\sqrt{m_{1}^{2}}\,,
\\
r_{2}^{\pm}=&m_{21}^{2}+p_{1}^{2}\pm2\sqrt{p_{1}^{2}}\sqrt{m_{2}^{2}}\,,
\end{align}
\end{subequations}
in which both of them satisfy the relation $r_i^+ r_i^- = \lambda_\text{K}(p_1^2,m_1^2,m_2^2)$. 

Let us, however, remark that the integrands~\eqref{eq:bubbleD}, 
expressed as products of $d\log${} forms, are often referred to as 
integrands in $d\log$ representation and are by no means unique (see e.g.~\cite{Henn:2021aco}). 
One can find alternative and more compact representations in terms of single $d\log$ integrands,
by accounting for an educated change of variables.
For instance, 
\begin{subequations}
\begin{align}
&J_{2}^{\left(1\right),D=2}
\sim \notag \\
&\pm\frac{1}{2\sqrt{\lambda_{k}\left(p_{1}^{2},m_{1}^{2},m_{2}^{2}\right)}}
\int_{\mathcal{K}}
d\log\left(\frac{D_{1}}{D_{\pm}}\right)\wedge d\log\left(\frac{D_{2}}{D_{\pm}}\right) \label{bubble_dlog_d_2}
\,,
\\
&J_{2}^{\left(1\right),D=3}\sim
\frac{1}{4\sqrt{-p_{1}^{2}}}\int_{\mathcal{C}} d\log D_{1}\land d\log D_{2}
\,,
\label{eq:bubbleD1}
\end{align}
\end{subequations}
where, $D_\pm=(\ell_1-\ell_\pm)^2$, with $\ell_\pm$
either of the two solutions of the maximal cut conditions, $D_1=D_2=0$. The $\pm$ sign in front of \eqref{bubble_dlog_d_2} is related to the solution one has chosen for $D_{\pm}$. The domain of integration $\mathcal{K}$ is a function of external and internal kinematics which we were not able to find explicitly. Its determination remains an open problem. The domain of integration $\mathcal{C}$ is given by the condition $\frac{\det(q_{i} \cdot q_{j})}{\det(p_{1}^{2})}\geq 0$, where $q_{i}$ are expressed in terms of variables $D_{i}$ as $q_{i}^{2} = D_{i} +m_{i}^{2}$, for $i,j=1,2$. $J_{2}^{\left(1\right),D=3}$ can also be written as an iterated integral as
\begin{align}
J_{2}^{\left(1\right),D=3}\sim\int_{-m_{2}^{2}}^{\infty}\frac{dD_{2}}{D_{2}}\int_{D_{2}-m_{1}^{2}+m_{2}^{2}+s-2\sqrt{s\left(D_{2}+m_{2}^{2}\right)}}^{D_{2}-m_{1}^{2}+m_{2}^{2}+s+2\sqrt{s\left(D_{2}+m_{2}^{2}\right)}}\frac{dD_{1}}{D1}\,.
\end{align}

Since the leading singularity turns out to be the same, regardless of the $d\log$ representation
of a given Feynman integral, in the explicit computations,
we draw only our attention to the features of this 
singularity in different space-time dimensions.

\paragraph{One-loop triangle integral\\}

We now continue with the one-loop triangle in $D\geq3$ 
(i.e., $D_\parallel=2$ and $D_\perp=D-2\geq1$),
whose leading singularities become, 
\begin{subequations}
\begin{align}
&J_{3}^{\left(1\right),D=3}\sim \notag \\
&\pm\frac{1}{8\sqrt{-\text{LanS}_{3}^{\left(1\right)}}}
\int_{\textcolor{blue}{\mathcal{K}}} d\log\frac{D_{1}}{D_{\pm}}\land d\log\frac{D_{2}}{D_{\pm}}\land d\log\frac{D_{3}}{D_{\pm}}
\,,
\label{eq:triangleD0}
\\
&J_{3}^{\left(1\right),D=4}\sim \notag \\
&\frac{1}{4\sqrt{\lambda_\text{K}\left(p_{1}^{2},p_{2}^{2},p_{3}^{2}\right)}}
\int_{\textcolor{blue}{\mathcal{C}}} d\log D_{1}\land d\log D_{2}\land d\log D_{3}
\,,
\label{eq:triangleD1}
\\
&J_{3}^{\left(1\right),D\geq5}\to\text{ No $d\log$ representation}\,,\label{no_dlog_triangle}
\end{align}
\label{eq:triangleD}
\end{subequations}
where the ratios $D_i/D_\pm$ 
are meromorphic functions on the variables $a_i$ (with $i=1,2$) and $\lambda_{11}$, 
and, similar to Eq.~\eqref{eq:triangleD0}, $D_\pm=(\ell_1-\ell_\pm)^2$, is obtained
from the on-shell conditions, $D_1=D_2=D_3=0$. The domain of integration $\mathcal{K}$ is a function of external and internal kinematics which we were not able to find explicitly. Its determination remains an open problem. The domain of integration $\mathcal{C}$ is given by by the condition $\frac{\det(q_{i} \cdot q_{j})}{\det(p_{k}\cdot p_{l})}\geq 0$, where $q_{i}$ are expressed in terms of variables $D_{i}$ as $q_{i}^{2} = D_{i} +m_{i}^{2}$, for $i,j=1,2,3$ and $k,l=1,2$.
The leading singularity can exactly be cast in terms of the Landau singularity 
of a one-loop triangle integral~\eqref{eq:land3pt}. The no $d\log$ form in \eqref{no_dlog_triangle}, similarly to the bubble integral \eqref{no_dlog_bubble}, means that there is a double pole.

As anticipated in Sec.~\ref{sec:decomposition}, the decomposition of the
space-time dimension in terms of parallel and perpendicular subspaces
leads to the Baikov representation, which can clearly be appreciated in 
Eqs.~\eqref{eq:bubbleD1} and~\eqref{eq:triangleD1}, respectively, for one-loop 
bubble and triangle in $D=n+1$ space-time dimensions. 

In the following, by making use of algebraic manipulations at the integrand level, 
we demonstrate that leading singularities in $D=n$ and $D=n+1$
space-time dimensions
are, respectively, related to Landau singularities of the first and second type. 

\subsection{Connection between Landau and leading singularities}
\label{eq:lan_lead}

Inspired by the results obtained at one-loop with higher multiplicity
in the previous section and in Appendix~\ref{app:landexamples}, 
it is worth to investigate further the pattern that any one-loop Feynman integral will follow, 
regardless of the space-time dimension. To this end, we exploit the well-defined structure of 
propagators in terms of the loop components $a_{i}$ and $\lambda_{11}$
to elucidate the singular structure of a generic $n$-point scalar one-loop Feynman integral. 

\begin{theorem}
\label{th:lan1}
The leading singularity of an $n$-point one-loop Feynman integral in $D=n+1$ space-time
dimensions is equal to $\pm1/\left(2^{n}\sqrt{-\det\left(p_{i}\cdot p_{j}\right)}\right)$,
with $i,j\leq n-1$.
\end{theorem}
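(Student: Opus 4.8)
The plan is to specialise the integrand representation~\eqref{eq:int1L} to $D=n+1$, for which $D_\parallel=n-1$ and $D_\perp=2$. The crucial simplification is that the measure factor $\lambda_{11}^{(D_\perp-2)/2}=\lambda_{11}^0=1$ becomes trivial, so that, dropping the angular prefactors as in the worked examples, the integral reduces to
\[
J_n^{(1),D=n+1}\sim\frac{\mathcal{J}_{[D_\parallel]}^{(1)}}{2}\int\prod_{i=1}^{n-1}da_i\,d\lambda_{11}\,\frac{1}{\prod_{j=1}^{n}D_j}\,.
\]
We now have exactly $n$ integration variables $\mathfrak{a}=\{a_1,\ldots,a_{n-1},\lambda_{11}\}$ matching the $n$ propagators $D_1,\ldots,D_n$. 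The leading singularity is the global (Leray) residue taken over the common zero $D_1=\cdots=D_n=0$, which by the residue formula of Appendix~\ref{app_leray} equals the numerator divided by the Jacobian of the map $\mathfrak{a}\mapsto(D_1,\ldots,D_n)$,
\[
\mathrm{LS}=\frac{\mathcal{J}_{[D_\parallel]}^{(1)}/2}{\det\!\left(\partial D_i/\partial\mathfrak{a}_j\right)}\,.
\]

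First I would evaluate this Jacobian determinant by exploiting the structure of the propagators in~\eqref{lambda_eq}. The derivatives with respect to $\lambda_{11}$ furnish the last column $(\partial D_i/\partial\lambda_{11})=(1,\ldots,1)^T$, while the quadratic-in-$a$ piece of each $D_i$ is common to all propagators. Replacing the rows $D_i$ by the differences $D_i-D_{i+1}$ (for $i=1,\ldots,n-1$) therefore cancels both the quadratic contributions and all but the last entry of the $\lambda_{11}$-column, leaving $(0,\ldots,0,1)^T$ there. Expanding by Laplace along this column reduces the problem to the $(n-1)\times(n-1)$ determinant $\det(\partial(D_i-D_{i+1})/\partial a_j)$, whose entries are read off from~\eqref{a_var} as $-2\,(p_i\cdot p_j)$. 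Hence the Jacobian equals $(-2)^{n-1}\det(p_i\cdot p_j)$, up to the overall sign produced by the row operations and the Laplace expansion.

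Finally I would assemble the pieces using the one-loop Jacobian $\mathcal{J}_{[D_\parallel]}^{(1)}=\pm\sqrt{-\det(p_i\cdot p_j)}$ from~\eqref{eq:jac1L}. Writing $G=\det(p_i\cdot p_j)$, the leading singularity becomes $\pm\sqrt{-G}/\!\left(2\,(-2)^{n-1}G\right)$, and applying the identity $\sqrt{-G}/G=-1/\sqrt{-G}$ together with $2\cdot2^{n-1}=2^n$ collapses the expression to $\pm1/(2^n\sqrt{-\det(p_i\cdot p_j)})$, as claimed. I expect the main obstacle to be the careful justification of the Jacobian computation --- in particular that the common quadratic-in-$a$ part drops out under the row reduction so that the residue reduces to the Gram determinant of the external momenta, and that the global residue over the quadratic cut is well defined, which holds precisely when this Gram determinant is non-zero.
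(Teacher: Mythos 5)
Your proof is correct, but it follows a genuinely different route from the paper's own proof of this theorem. The paper proves Theorem~\ref{th:lan1} by first partial-fractioning the integrand via Lemma~\ref{lem:wlm1a}, so that in each term the $\lambda_{11}$-dependence sits in a single propagator $D_i$ (a $d\log$ with residue $+1$) while the remaining factors $D_{ki}=D_k-D_i$ are linear in the $a_j$; the residue of that linear system is then read off from Lemma~\ref{lem:wlm1b} as $1/\left(2^{n-1}\det(p_i\cdot p_j)\right)$. You instead compute the composite Leray residue in one stroke, as the measure divided by the Jacobian $\det\left(\partial D_i/\partial\mathfrak{a}_j\right)$, and evaluate that determinant by row reduction: the differences $D_i-D_{i+1}$ kill both the common quadratic-in-$a$ piece and the $\lambda_{11}$-column entries, leaving, by Laplace expansion and Eq.~\eqref{a_var}, the factor $(-2)^{n-1}\det(p_i\cdot p_j)$. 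This is precisely the strategy the paper reserves for the companion Theorem~\ref{th:lan2} (the $D=n$ case, where the extra measure factor $\lambda_{11}^{-1/2}$ must then be evaluated on the cut via Eq.~\eqref{lambda_landau}); applied to $D=n+1$ it needs only Lemma~\ref{wedge_det}, bypassing Lemmas~\ref{lem:wlm1a} and~\ref{lem:wlm1b} altogether. Your route has the additional virtue of making manifest \emph{why} the $D=n+1$ answer involves only the external Gram determinant: the Jacobian you compute is constant on the cut (independent of $a_j$ and $\lambda_{11}$), so the on-shell conditions never need to be solved, whereas the paper's route has the separate merit of exhibiting the explicit $d\log$ structure that feeds into the conjecture stated after the two theorems. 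The small sign ambiguities you flag (from the row operations, the Laplace expansion, and the identity $\sqrt{-G}/G=-1/\sqrt{-G}$) are immaterial, since the statement carries an overall $\pm$, exactly as in the paper's treatment of $\mathcal{J}^{(1)}_{[D_\parallel]}$ in Eq.~\eqref{eq:jac1L}.
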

\begin{proof}
This theorem is proven by making use of Lemmas~\ref{lem:wlm1a} and~\ref{lem:wlm1b}. 
In details, we consider the $n$-point Feynman integral~\eqref{eq:int1L} 
in $D=n+1$ space-time dimensions (up to angular integrations),
\begin{align}
J^{\left(1\right),D=n+1}_n
&\sim
\frac{\mathcal{J}_{[n-1]}^{(1)}}{2}
\int\prod_{i=1}^{n-1}da_{i}\,d\lambda_{11}\frac{1}{D_{1}\hdots D_{n}}\,,
\end{align}
whose integrand, because of Lemma~\ref{lem:wlm1a}, can be expressed as, 
\begin{align}
&\int\prod_{i=1}^{n-1}da_{i}\,d\lambda_{11}\frac{1}{D_{1}\hdots D_{n}}
= \notag \\
&\int\sum_{i=1}^{n}\prod_{j=1}^{n-1}da_{j}\,\frac{d\lambda_{11}}{D_{i}}\,
\frac{1}{\prod_{\substack{k=1\\
k\ne i
}
}D_{ki}}\,,
\end{align}
where $D_{ki}=D_{k}-D_{i}$.
The integration over $\lambda_{11}$ clearly manifests a $d\log$ representation, 
whose residue is $+1$. 
The remaining integrations (and residues) are then calculated according to Lemma~\ref{lem:wlm1b}, 
finding, 
\begin{align}
\int\sum_{i=1}^{n}\prod_{j=1}^{n-1}da_{j}\,\frac{d\lambda_{11}}{D_{i}}\,\frac{1}{\prod_{\substack{k=1\\
k\ne i
}
}D_{ki}}
&\sim
\frac{1}{2^{n-1}\det\left(p_{i}\cdot p_{j}\right)}\,,
\label{eq:wth1}
\end{align}
where we identify the variables $x$ of Lemma~\ref{lem:wlm1b}
as, $x_{ij}=2p_i\cdot p_j$.

Therefore, by recalling $\mathcal{J}_{\left[n-1\right]}^{(1)}$ from Eq.~\eqref{eq:jac1L}, 
we find, 
\begin{align}
J^{\left(1\right),D=n+1}_n
&\sim\frac{\pm1}{2^{n}\sqrt{-\det\left(p_{i}\cdot p_{j}\right)}}\,,
\end{align}
completing the proof of this theorem. 
\end{proof}

\begin{theorem}
\label{th:lan2}
The leading singularity of an $n$-point one-loop Feynman integral in $D=n$ space-time dimensions is equal to $ \pm 1/ \left(2^{n}\sqrt{(-1)^{D-1}\rm{LanS}}\right)$ 
\label{th_lan_lead}
\end{theorem}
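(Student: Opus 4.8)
The plan is to mirror the structure of the proof of Theorem~\ref{th:lan1}, but now work in $D=n$ space-time dimensions rather than $D=n+1$. The key difference is dimensional: with $D=n$ we have $D_\parallel=n-1$ and $D_\perp=D-n+1=1$, so the perpendicular subspace is one-dimensional. Starting from Eq.~\eqref{eq:int1L}, the measure factor $\lambda_{11}^{(D_\perp-2)/2}=\lambda_{11}^{-1/2}$ appears, so that
\begin{align}
J^{\left(1\right),D=n}_n
&\sim
\frac{\mathcal{J}_{[n-1]}^{(1)}}{2}
\int\prod_{i=1}^{n-1}da_{i}\,d\lambda_{11}\,\frac{\lambda_{11}^{-1/2}}{D_{1}\hdots D_{n}}\,.
\end{align}
The presence of $\lambda_{11}^{-1/2}$ signals a branch point rather than a pole, which is precisely the hallmark of a \emph{second-type} Landau singularity, and is the source of the overall square root of $\rm{LanS}$ in the final answer.

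First I would handle the parallel integrations over the $a_i$. Just as in Theorem~\ref{th:lan1}, the partial-fractioning identity of Lemma~\ref{lem:wlm1a} lets me split $1/(D_1\cdots D_n)$ into a sum over $i$ of $1/(D_i \prod_{k\ne i}D_{ki})$, where the differences $D_{ki}=D_k-D_i$ are linear in the $a_j$ (the $\lambda_{11}$ and quadratic $a$-terms cancel, cf.~Eq.~\eqref{lambda_eq}). Taking the $n-1$ residues in the $a_j$ via Lemma~\ref{lem:wlm1b}, identifying $x_{ij}=2p_i\cdot p_j$, produces the factor $1/\!\left(2^{n-1}\det(p_i\cdot p_j)\right)$ exactly as in Eq.~\eqref{eq:wth1}, together with the constraint that $\lambda_{11}$ is now fixed on-shell to the value determined by the remaining propagator $D_i=0$.

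Next I would perform the $\lambda_{11}$ integration. Unlike the $D=n+1$ case, the integrand carries the extra factor $\lambda_{11}^{-1/2}$, so the surviving $\lambda_{11}$-dependence is $\lambda_{11}^{-1/2}/D_i$ with $D_i$ linear in $\lambda_{11}$. The leading singularity is extracted by taking the residue at the zero of $D_i$ combined with the branch point at $\lambda_{11}=0$; concretely, the integral $\int d\lambda_{11}\,\lambda_{11}^{-1/2}/(\lambda_{11}-\lambda_{11}^\star)$ contributes a factor proportional to $1/\sqrt{\lambda_{11}^\star}$, where $\lambda_{11}^\star$ is the on-shell value. By Eq.~\eqref{lambda_landau}, $\lambda_{11}^\star={\rm LanS}_n^{(1)}/\det(p_i\cdot p_j)$, so the square root of $\lambda_{11}^\star$ converts the $\det(p_i\cdot p_j)$ accumulated from the parallel integrations into $\sqrt{\det(p_i\cdot p_j)}$ and introduces $\sqrt{\rm LanS}$ in the denominator.

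Finally I would combine the pieces. Multiplying the parallel factor $1/\!\left(2^{n-1}\det(p_i\cdot p_j)\right)$, the $\lambda_{11}$-factor $\propto 1/\sqrt{\lambda_{11}^\star}$, the Jacobian $\mathcal{J}_{[n-1]}^{(1)}=\pm\sqrt{-\det(p_i\cdot p_j)}$ from Eq.~\eqref{eq:jac1L}, and the overall $1/2$, the powers of $\det(p_i\cdot p_j)$ should cancel up to the correct sign and phase, leaving $\pm1/\!\left(2^n\sqrt{(-1)^{D-1}\rm LanS}\right)$. \textbf{The main obstacle} will be the careful bookkeeping of the square-root branch and the sign/phase factors: tracking $\det(g)=(-1)^{D-1}$ through the Jacobian, fixing the normalisation of the $\lambda_{11}^{-1/2}$ residue, and verifying that the $\det(p_i\cdot p_j)$ powers recombine exactly so that only $\sqrt{\rm LanS}$ survives. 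This is where I expect the bulk of the genuine work to lie, rather than in the residue combinatorics, which are essentially inherited from Theorem~\ref{th:lan1}.
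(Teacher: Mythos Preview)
Your proposal is correct but takes a different route from the paper. The paper does \emph{not} mirror Theorem~\ref{th:lan1}: it abandons the partial-fraction/sequential strategy entirely and instead computes the full $n$-fold Leray residue of the form $\omega=\tfrac{\mathcal{J}}{2}\,\lambda_{11}^{-1/2}\,(d\lambda_{11}\wedge da_1\wedge\cdots\wedge da_{n-1})/(D_1\cdots D_n)$ in one shot. Using Lemma~\ref{wedge_det}, the wedge $dD_1\wedge\cdots\wedge dD_n$ becomes $\det(B)\,d\lambda_{11}\wedge da_1\wedge\cdots\wedge da_{n-1}$, and row-reduction gives $\det(B)=2^{n-1}\det(p_i\cdot p_j)$ directly. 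The factor $\lambda_{11}^{-1/2}$ is then simply a regular prefactor evaluated on the cut locus $D_1=\cdots=D_n=0$, so Eq.~\eqref{lambda_landau} converts it to $\sqrt{\det(p_i\cdot p_j)/\mathrm{LanS}}$ without any separate branch-cut analysis. Your approach instead recycles Lemmas~\ref{lem:wlm1a} and~\ref{lem:wlm1b} to peel off the $a_j$-residues first, leaving a one-variable integrand $\lambda_{11}^{-1/2}/D_i$; this works and has the pedagogical virtue of unifying the two theorems under the same machinery, but note that the ``branch point at $\lambda_{11}=0$'' you mention plays no role---the surviving residue is just the ordinary pole at $D_i=0$ with $\lambda_{11}^{-1/2}$ evaluated there. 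The paper's route is shorter and sidesteps your anticipated bookkeeping obstacle almost entirely, since the Jacobian determinant and the $\lambda_{11}^{-1/2}$ evaluation are decoupled from the start.
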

\begin{proof}
The Feynman integrand in the parallel and perpendicular representation has, in $D=n$, the following form
\begin{equation}
\omega = 
\frac{\mathcal{J}_{\left[D_{\parallel}\right]}^{(1)}}{2}
\frac{1}{\sqrt{\lambda_{11}}}
\frac{d\lambda_{11} \wedge da_{1}\wedge \hdots \wedge da_{n-1}
}{D_{1}\hdots D_{n}}
\,.    
\end{equation}
Let us calculate the Leray residue around  polar sets $D_{1},\hdots , D_{n}$, 
where each $D_{i}$ is given by~\eqref{lambda_eq}, 
\begin{equation}
res^{n}[\omega] = \frac{\mathcal{J}_{\left[D_{\parallel}\right]}^{(1)}}{2}
\frac{1}{\sqrt{\lambda_{11}}}
\frac{d\lambda_{11} \wedge da_{1}\wedge \hdots \wedge da_{n-1}}{dD_{1}\wedge \hdots \wedge dD_{n}}
\,.     
\end{equation}
Each differential $dD_{i}$ has the following form
\begin{align}
&dD_{i} =  \notag \\
&= 2 \sum_{j=1}^{n-1}\sum_{k=1}^{n-1}p_{j} \cdot p_{k} a_{j} da_{k} + 2\sum_{j=1}^{i-1}\sum_{k=1}^{n-1}p_{j}\cdot p_{k} da_{k} + d\lambda_{11} 
\notag
\\
&= \sum_{k=1}^{n-1} \left( 2 \sum_{j=1}^{n-1}p_{j} \cdot p_{k} a_{j} + 2\sum_{j=1}^{i-1}p_{j}\cdot p_{k} \right) da_{k} + d\lambda_{11}\,.
\end{align}
Thus,
\begin{widetext}
\begin{equation}
\begin{split}
&res^{n}[\omega] = \frac{\mathcal{J}_{\left[D_{\parallel}\right]}^{(1)}}{2}
\frac{1}{\sqrt{\lambda_{11}}} 
\frac{d\lambda_{11} \wedge da_{1}\wedge \hdots \wedge da_{n-1}}{\bigg[d\lambda_{11} + \sum\limits_{k=1}^{n-1} \Big( 2 \sum\limits_{j=1}^{n-1}p_{j} \cdot p_{k} a_{j} \Big) da_{k} \bigg] \wedge \hdots \wedge \bigg[d\lambda_{11} + \sum\limits_{k=1}^{n-1} \Big( 2 \sum\limits_{j=1}^{n-1}p_{j} \cdot p_{k} a_{j} + 2\sum\limits_{j=1}^{n-1}p_{j}\cdot p_{k} \Big) da_{k} \bigg]}
\,. 
\end{split}
\end{equation}
\end{widetext}
By Lemma~\ref{wedge_det} this is equal to,
\begin{equation}
\begin{split}
res^{n}[\omega] &= \frac{\mathcal{J}_{\left[D_{\parallel}\right]}^{(1)}}{2}
\frac{1}{\sqrt{\lambda_{11}}} 
\frac{d\lambda_{11} \wedge da_{1}\wedge \hdots \wedge da_{n-1}}{\det(B)d\lambda_{11} \wedge da_{1}\wedge \hdots \wedge da_{n-1}} \\
&=\frac{\mathcal{J}_{\left[D_{\parallel}\right]}^{(1)}}{2}
\frac{1}{\sqrt{\lambda_{11}}} 
\frac{1}{\det(B)}
\,,
\end{split}
\end{equation}
where,
\begin{widetext}
\begin{equation}
\det(B) = 
\begin{vmatrix}
1 & 2 \sum\limits_{j=1}^{n-1}p_{j} \cdot p_{1} a_{j} & \hdots & 2 \sum\limits_{j=1}^{n-1}p_{j} \cdot p_{n-1} a_{j} \\
1 & 2 \sum\limits_{j=1}^{n-1}p_{j} \cdot p_{1} a_{j} + 2 p_{1}\cdot p_{1} & \hdots & 2 \sum\limits_{j=1}^{n-1}p_{j} \cdot p_{n-1} a_{j} + 2 p_{1}\cdot p_{n-1} \\
\vdots & \dots & \vdots & \vdots \\ 
1 & 2 \sum\limits_{j=1}^{n-1}p_{j} \cdot p_{1} a_{j} + 2 \sum\limits_{j=1}^{n-2}p_{j}\cdot p_{1} & \hdots & 2 \sum\limits_{j=1}^{n-1}p_{j} \cdot p_{n-1} a_{j} + 2 \sum\limits_{j=1}^{n-2}p_{j}\cdot p_{n-1} \\
1 & 2 \sum\limits_{j=1}^{n-1}p_{j} \cdot p_{1} a_{j} + 2 \sum\limits_{j=1}^{n-1}p_{j}\cdot p_{1} & \hdots & 2 \sum\limits_{j=1}^{n-1}p_{j} \cdot p_{n-1} a_{j} + 2 \sum\limits_{j=1}^{n-1}p_{j}\cdot p_{n-1}
\end{vmatrix}
\,.
\end{equation}
\end{widetext}
Let us calculate $b_{(i+1)k}-b_{ik}$ with $i,k=1,\hdots , n-1$,
\begin{align}
&\left( 2 \sum_{j=1}^{n-1}p_{j} \cdot p_{k} a_{j} + 2 \sum_{j=1}^{i}p_{j}\cdot p_{k} \right) \notag \\
&- \left( 2 \sum_{j=1}^{n-1}p_{j} \cdot p_{k} a_{j} + 2 \sum_{j=1}^{i-1}p_{j}\cdot p_{k} \right) = 2 p_{i} \cdot p_{k}
\,.
\end{align}
Let us now successively subtract $i$-th row from the $(i+1)$-th row starting from the last row. 
After the first subtraction, we will get,
\begin{widetext}
\begin{equation}
\det(B) = 
\begin{vmatrix}
1 & 2 \sum\limits_{j=1}^{n-1}p_{j} \cdot p_{1} a_{j} & \hdots & 2 \sum\limits_{j=1}^{n-1}p_{j} \cdot p_{n-1} a_{j} \\
1 & 2 \sum\limits_{j=1}^{n-1}p_{j} \cdot p_{1} a_{j} + 2 p_{1}\cdot p_{1} & \hdots & 2 \sum\limits_{j=1}^{n-1}p_{j} \cdot p_{n-1} a_{j} + 2 p_{1}\cdot p_{n-1} \\
\vdots & \dots & \vdots & \vdots \\ 
1 & 2 \sum\limits_{j=1}^{n-1}p_{j} \cdot p_{1} a_{j} + 2 \sum\limits_{j=1}^{n-2}p_{j}\cdot p_{1} & \hdots & 2 \sum\limits_{j=1}^{n-1}p_{j} \cdot p_{n-1} a_{j} + 2 \sum\limits_{j=1}^{n-2}p_{j}\cdot p_{n-1} \\
0 & 2 p_{n-1}\cdot p_{1} & \hdots & 2 p_{n-1}\cdot p_{n-1}
\end{vmatrix}
\,,
\end{equation}
\end{widetext} 
and by repeating this until we reach the first row we will end up with, \\
\begin{align}
\det(B) = 
&\begin{vmatrix}
1 & 2 \sum\limits_{j=1}^{n-1}p_{j} \cdot p_{1} a_{j} & \hdots & 2 \sum\limits_{j=1}^{n-1}p_{j} \cdot p_{n-1} a_{j} \\
0 & 2 p_{1}\cdot p_{1} & \hdots & 2 p_{1}\cdot p_{n-1} \\
\vdots & \dots & \vdots & \vdots \\ 
0 & 2 p_{n-2}\cdot p_{1} & \hdots & 2 p_{n-2}\cdot p_{n-1} \\
0 & 2 p_{n-1}\cdot p_{1} & \hdots & 2 p_{n-1}\cdot p_{n-1}
\end{vmatrix} \notag \\
=&\begin{vmatrix}
1 & 2 \sum\limits_{j=1}^{n-1}p_{j} \cdot p_{1} a_{j} & \hdots & 2 \sum\limits_{j=1}^{n-1}p_{j} \cdot p_{n-1} a_{j} \\
0 & 2 p_{1}\cdot p_{1} & \hdots & 2 p_{1}\cdot p_{n-1} \\
\vdots & \dots & \vdots & \vdots \\ 
0 & 2 p_{1}\cdot p_{n-2} & \hdots & 2 p_{n-2}\cdot p_{n-1} \\
0 & 2 p_{1}\cdot p_{n-1} & \hdots & 2 p_{n-1}\cdot p_{n-1}
\end{vmatrix}
\,.
\end{align}
Let us expand this determinant with respect to the first column,
\begin{align}
\det(B) &= (-1)^{2} 2^{n-1}
\begin{vmatrix}
p_{1}\cdot p_{1} & \hdots & p_{1}\cdot p_{n-1} \\
\vdots & \dots & \vdots \\ 
p_{1}\cdot p_{n-2} & \hdots & p_{n-2}\cdot p_{n-1} \\
p_{1}\cdot p_{n-1} & \hdots & p_{n-1}\cdot p_{n-1}
\end{vmatrix}
\notag
\\ &=
2^{n-1} \det(p_{i} \cdot p_{j})\,.
\end{align}
Therefore, the residue of $\omega$ is equal to,
\begin{equation}
res^{n}[\omega] =
\frac{\mathcal{J}_{\left[D_{\parallel}\right]}^{(1)}}{2}
\frac{1}{\sqrt{\lambda_{11}}} 
\frac{1}{2^{n-1} \det(p_{i} \cdot p_{j})}
\,.
\end{equation}
However, since we have calculated residues around polar sets $D_{1}= \hdots =D_{n}=0$ 
and we showed in Sec.~\ref{sec:landau1L} that under these conditions,
\begin{equation}
\lambda_{11} = \frac{\rm{LanS}}{\det(p_{i}\cdot p_{j})}\,,
\end{equation}
we can insert this into our residue formula to obtain,
\begin{equation}
res^{n}[\omega] =
\frac{\mathcal{J}_{\left[D_{\parallel}\right]}^{(1)}}{2^{n}}
\frac{1}{\sqrt{\rm{LanS}}} 
\frac{\sqrt{\det(p_{i}\cdot p_{j})}}{\det(p_{i} \cdot p_{j})}
\,.      
\end{equation}
Taking into account that $\mathcal{J}_{\left[D_{\parallel}\right]}^{(1)}=\pm\sqrt{(-1)^{D-1}\det(p_{i}\cdot p_{j})}$, we get,
\begin{equation}
\begin{split}
&res^{n}[\omega] = \\
&=\frac{\pm\sqrt{(-1)^{D-1}\det(p_{i}\cdot p_{j})}}{2^{n}}
\frac{1}{\sqrt{\rm{LanS}}} 
\frac{\sqrt{\det(p_{i}\cdot p_{j})}}{\det(p_{i} \cdot p_{j})}  \\
&= \frac{1}{2^{n}}
\frac{\pm 1}{\sqrt{(-1)^{D-1}\rm{LanS}}}
\,.      
\end{split}
\end{equation}
\end{proof}

We notice from the examples listed in this section (and in Appendix~\ref{app:landexamples}) that one-loop scalar Feynman integrals in $D=n$ and $D=n+1$ space-time dimensions manifest a $d\log$ representation, where the arguments of $d\log$'s have very simple structure.
In effect, combining this observation with the above results for the one-loop scalar integrals we make the following conjecture.\footnote{An evidence of a similar result in a dual formulation of planar and non-planar $\mathcal{N}=4\,\text{sYM}$ was pointed out in~\cite{Arkani-Hamed:2014via,Bern:2014kca}.}
\begin{con}
An n-point Feynman integral can be written in one of the following forms depending on the space-time dimension $D$,
\begin{subequations}
\begin{align}
&J^{\left(1\right),D=n}_n 
\sim \notag \\
&\pm\frac{1}{2^{n}\sqrt{-{\rm LanS}_n^{(1)}}}
\int_{\mathcal{K}} d\log\frac{D_{1}}{D_{\pm}}\wedge\hdots\wedge d\log\frac{D_{n}}{D_{\pm}}
\,, \label{con_d_n} \\[0.3cm]
&J^{\left(1\right),D=n+1}_n 
\sim \notag \\
&\frac{1}{2^{n}\sqrt{-\det\left(p_{i}\cdot p_{j}\right)}}
\int_{\mathcal{C}} d\log D_{1}\wedge\hdots\wedge d\log D_{n}
\,, \notag \\
&\text{ with $i,j\leq n-1$}\,,
\\[0.3cm]
&J^{\left(1\right),D\geq n+2}_n 
\to\text{ No $d\log$ representation}\,, \label{con_no_dlog}
&&
\end{align}
\label{eq:myleadex}
\end{subequations}
where $\det\left(p_{i}\cdot p_{j}\right)$ can be identified with the Landau singularity of the second type, and $D_\pm$ are calculated by imposing the on-shell conditions $D_1=D_2=\hdots=D_n=0$. The $\pm$ sign in front of \eqref{con_d_n} is related to the solution one has chosen for $D_{\pm}$. 
The domain of integration $\mathcal{K}$ is a function of external and internal kinematics which we were not able to find explicitly. Its determination remains an open problem. The domain of integration $\mathcal{C}$ is given by by the condition $\frac{\det(q_{i} \cdot q_{j})}{\det(p_{k}\cdot p_{l})}\geq 0$, where $q_{i}$ are expressed in terms of variables $D_{i}$ as $q_{i}^{2} = D_{i} +m_{i}^{2}$, for $i,j=1,\hdots,n$ and $k,l=1,\hdots, n-1$. In other words $\mathcal{C}$ is given by the condition $\lambda_{11} \geq 0$, similarly to \eqref{lambda_landau} without imposing on-shell conditions and expressing it in terms of variables $D_{i}$. The no $d\log$ form means that there is a double pole and it cannot be transformed to logarithmic representation.
\end{con}

\subsection{Differential equations of one-loop Feynman integrals}
\label{sec:deq1L}

The connection between leading and Landau singularities in one-loop Feynman 
integrals motivates us to continue our analysis with the calculation of Feynman 
integrals through differential equations methods. 
As revealed in Ref.~\cite{Henn:2013pwa}, the difficulty in this analytic calculation 
can be largely reduced once the differential equation is in the canonical form. 
In the following, we show that from the knowledge of the minimal set of integrals (often referred to as
master integrals) in Laporta basis~\cite{Laporta:2000dsw}, one can generate differential equations 
in the canonical form by appropriately accounting for leading singularities of these integrals 
in specific space-time dimensions.

Since scalar integrals that are chosen as master integrals might display leading singularities in specific 
space-time dimensions ($D=n$ or $D=n+1$), 
one can profit from dimension-shift relations in Feynman integrals to pass from $D\to D\pm 2$~\cite{Tarasov:1996br}.
We describe this observation with the calculation of the differential equation in canonical form 
of the set of master integrals present in the four-point integrals with massive external 
and internal momenta. 
We choose to reduce the number of master integrals with off-shell momenta $p_i=m^2$ 
and internal massive propagators with mass $m^2$. 

\begin{figure}[t]
\centering
\subfigure[]{\label{fig:4ptmis-a}\includegraphics[scale=0.7]{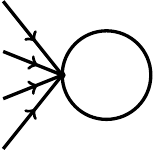}}
\qquad
\subfigure[]{\label{fig:4ptmis-b}\includegraphics[scale=0.7]{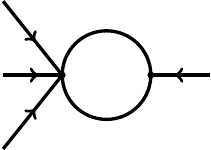}}
\qquad
\subfigure[]{\label{fig:4ptmis-c}\includegraphics[scale=0.7]{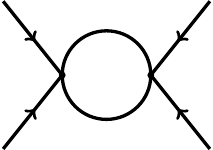}}
\qquad
\subfigure[]{\label{fig:4ptmis-d}\includegraphics[scale=0.7]{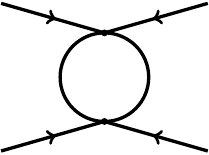}}
\qquad\\
\subfigure[]{\label{fig:4ptmis-e}\includegraphics[scale=0.7]{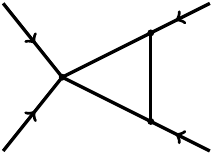}}
\qquad
\subfigure[]{\label{fig:4ptmis-f}\includegraphics[scale=0.7]{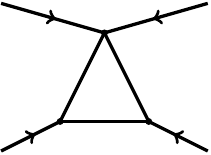}}
\qquad
\subfigure[]{\label{fig:4ptmis-g}\includegraphics[scale=0.7]{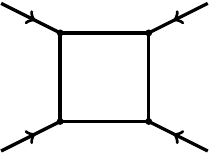}}
\qquad
\caption{Four-point one-loop master integrals.}
\label{fig:4ptmis}
\end{figure}

By following the Laporta method for the generation of integration-by-parts identities, 
we find that any four-point integral can be expressed in terms of seven independent master integrals, 
\begin{align}
\Big\{ 
J_{1}^{(1),D}(1)\,,
J_{2}^{(1),D}(1,4)\,,
J_{2}^{(1),D}(1,3)\,,
J_{2}^{(1),D}(2,4)\,,
\notag\\
J_{3}^{(1),D}(1,2,4)\,,
J_{3}^{(1),D}(1,2,3)\,,
J_{4}^{(1),D}(1,2,3,4)
\Big\}\,,
\label{eq:4pt_mi_lap}
\end{align}
according to convention~\eqref{eq:intL} and, respectively, displayed in Fig.~\ref{fig:4ptmis}.

Let us now turn our attention to the evaluation of the integrals in $D=4-2\epsilon$.
From the analysis of Sec.~\ref{eq:lan_lead}, we know that not all master integrals~\eqref{eq:4pt_mi_lap}
manifest a leading singularity in $D=4$. For tadpole and bubble integrals, however, 
we can make use of dimensional recurrence relations to consider these
integrals in $D=2-2\epsilon$,
\begin{align}
&J_{1}^{\left(1\right),D-2}\left(1\right)  =\frac{\left(D-2\right)}{2m^{2}}J_{1}^{\left(1\right),D}\left(1\right)\,,\nonumber \\
&J_{2}^{\left(1\right),D-2}\left(i,j\right)  =\frac{\left(D-2\right)}{3m^{4}}J_{1}^{\left(1\right),D}\left(1\right)+\frac{2\left(D-3\right)}{3m^{2}}J_{2}^{\left(1\right),D}\left(i,j\right)\,.
\label{eq:relD1}
\end{align}
Thus, by normalising leading singularities to $1$ (in the sense that irrelevant overall numerical factors
independent of kinematic scales are dropped), 
we choose,\footnote{We generated integration-by-parts identities with
the aid of the publicly available software \textsc{Fire6}~\cite{Smirnov:2019qkx},
whereas for differential equations w.r.t. $s$, $t$, and $m^2$ 
we employed the \textsc{LiteRed} framework~\cite{Lee:2012cn}.} 
\begin{align}
g_{1} & =\epsilon\,J_{1}^{(1),D=2-2\epsilon}\left(1\right)\,,\nonumber \\
g_{2} & =\epsilon\,m^{2}\,J_{2}^{(1),D=2-2\epsilon}\left(1,4\right)\,,\nonumber \\
g_{3} & =\epsilon\,\sqrt{-s\left(4m^{2}-s\right)}\,J_{2}^{(1),D=2-2\epsilon}\left(1,3\right)\,,\nonumber \\
g_{4} & =\epsilon\,\sqrt{-t\left(4m^{2}-t\right)}\,J_{2}^{(1),D=2-2\epsilon}\left(2,4\right)\,,\nonumber \\
g_{5} & =\epsilon^{2}\,\sqrt{-s\left(4m^{2}-s\right)}\,J_{3}^{(1),D=4-2\epsilon}\left(1,2,4\right)\,,\nonumber \\
g_{6} & =\epsilon^{2}\,\sqrt{-t\left(4m^{2}-t\right)}\,J_{3}^{(1),D=4-2\epsilon}\left(1,2,3\right)\,,\nonumber \\
g_{7} & =\epsilon^{2}\,\sqrt{st\left(12m^{4}-4m^{2}\left(s+t\right)+st\right)}\,J_{4}^{(1),D=4-2\epsilon}\left(1,2,3,4\right)\,.
\end{align}
that automatically satisfies the differential equation in the canonical form, 
\begin{align}
\partial_{\eta}\,\vec{g}&=\epsilon\, A_{\eta}\vec{g}\,,
\label{eq:deqcan}
\end{align}
with $\eta\in\{s,t,m^2\}$ and $\vec{g}=\{g_1,\hdots,g_7\}$. 
For instance, we find for $g_7$ the differential equation w.r.t. $s$,
\begin{align}
\partial_{s}g_{7}= & \frac{\epsilon}{m^{2}}\Bigg[-\frac{6R_{22}}{\left(R_{1}^{2}+1\right)R_{3}R_{11}}g_{2}+\frac{R_{22}}{R_{1}\left(R_{1}^{2}+1\right)R_{3}}g_{3}\nonumber \\
 & +\frac{R_{2}}{R_{3}R_{11}}g_{4}-\frac{2\left(R_{2}^{2}R_{1}^{2}+R_{1}^{2}-2\right)R_{22}}{R_{1}\left(R_{1}^{2}+1\right)\left(R_{1}^{2}+R_{2}^{2}+4\right)R_{3}}g_{5}\nonumber \\
 & -\frac{2R_{2}\left(R_{2}^{2}+2\right)}{\left(R_{1}^{2}+R_{2}^{2}+4\right)R_{3}R_{11}}g_{6}-\frac{\left(R_{2}^{2}+2\right){}^{2}}{\left(R_{1}^{2}+R_{2}^{2}+4\right)R_{3}^{2}}g_{7}\Bigg]\,,
\end{align}
with, 
\begin{align}
 & R_{11}^{2}=s/m^{2}\,, &  & R_{1}^{2}=R_{11}^{2}-4\,,\nonumber \\
 & R_{22}^{2}=t/m^{2}\,, &  & R_{2}^{2}=R_{22}^{2}-4\,,\nonumber \\
 & R_{3}^{2}=R_{1}^{2}R_{2}^{2}-4\,.
\end{align}
Similar structures are found for the other derivatives. 

\paragraph{Differential equation in $D=5-2\epsilon$\\}

In view that our method does not introduce any limitation w.r.t. the choice of the space-time dimension,
we consider as an illustrative example the construction of differential equation in canonical 
form of the four-point integral family in $D=5-2\epsilon$. 

We begin by selecting the space-time dimension where master integrals~\eqref{eq:4pt_mi_lap} 
are known to manifest leading singularities. Besides relations~\eqref{eq:relD1}, 
we consider the following relation for the one-loop scalar triangle, 
\begin{align}
\label{eq:relD2}
&J_{3}^{\left(1\right),D-2}\left(1,2,k\right)=  \notag \\
&\frac{\left(D-2\right)\left(14m^{2}-5Q_{k}\right)}{6m^{4}\left(3m^{2}-Q_{k}\right)\left(4m^{2}-Q_{k}\right)}J_{1}^{\left(1\right),D}\left(1\right)\nonumber\\
 & +\frac{\left(D-3\right)\left(2m^{2}-Q_{k}\right)}{m^{2}\left(3m^{2}-Q_{k}\right)\left(4m^{2}-Q_{k}\right)}J_{2}^{\left(1\right),D}\left(2,k\right) \nonumber \\
 & +\frac{2\left(D-3\right)}{3m^{2}\left(3m^{2}-Q_{k}\right)}J_{2}^{\left(1\right),D}\left(1,k\right) \nonumber\\
 &+\frac{\left(D-4\right)\left(4m^{2}-Q_{k}\right)}{2m^{2}\left(3m^{2}-Q_{k}\right)}J_{2}^{\left(1\right),D}\left(1,2,k\right)\,, 
\end{align}
for $k=3,4$, and $Q_3=s$ and $Q_4=t$.

Thus, with relations~\eqref{eq:relD1} and~\eqref{eq:relD2}, we choose, 
\begin{align}
g_{1} & =\epsilon\,\sqrt{m^{2}}\,J_{1}^{(1),D=1-2\epsilon}\left(1\right)\,,\nonumber \\
g_{2} & =\epsilon^{2}\,\sqrt{m^{2}}\,J_{2}^{(1),D=3-2\epsilon}\left(1,4\right)\,,\nonumber \\
g_{3} & =\epsilon^{2}\,\sqrt{-s}\,J_{2}^{(1),D=3-2\epsilon}\left(1,3\right)\,,\nonumber \\
g_{4} & =\epsilon^{2}\,\sqrt{-t}\,J_{2}^{(1),D=3-2\epsilon}\left(2,4\right)\,,\nonumber \\
g_{5} & =\epsilon^{2}\,\sqrt{-sm^{2}\left(3m^{2}-s\right)}\,J_{3}^{(1),D=3-2\epsilon}\left(1,2,4\right)\,,\nonumber \\
g_{6} & =\epsilon^{2}\,\sqrt{-tm^{2}\left(3m^{2}-t\right)}\,J_{3}^{(1),D=3-2\epsilon}\left(1,2,3\right)\,,\nonumber \\
g_{7} & =\epsilon^{3}\,\sqrt{-st\left(4m^{2}-s-t\right)}\,J_{4}^{(1),D=5-2\epsilon}\left(1,2,3,4\right)\,.
\end{align}
finding differential equations w.r.t. $s,t$, and $m^2$ in canonical form~\eqref{eq:deqcan}.
 
\par\bigskip In these illustrative examples, we showed that from the knowledge of master integrals 
in terms of one-loop scalar integrals, the construction of differential equations
in the canonical form is automatic from the clear knowledge of leading singularities in specific 
space-time dimensions. 

The results of this section show that the construction of the canonical form of the differential equation for one-loop Feynman integrals is completely determined by the Landau analysis of master integrals.
Allowing, in this way, to focus more on the interesting properties of Feynman integrals 
given by their singularities.

\section{Preliminaries on multi-loop Feynman integrals}
\label{sec:twoloop} 
To this end, in this section, we explicitly showcase computations of particular two-loop 
Feynman integrals 
(with internal massless propagators and external massive momenta) 
whose leading singularities can be extracted along the lines 
of the approach carried out for one-loop Feynman integrals. 
Additionally, as a by-product of the explicit calculations of two-loop planar and non-planar
triangles, we make use of the loop-by-loop approach and multi-dimensional residues to provide an alternative proof of
analytic expressions
for the four-dimensional leading singularities of $L$-loop ladder triangle and box integrals, 
whose expressions are known for a long time~\cite{Usyukina:1993ch}.

\subsection{Application of the Leray residue at two loops}
\label{sec:leray2L}

\begin{figure}[t]
\centering
\subfigure[]{\label{fig:tri2L-a}\includegraphics[scale=0.75]{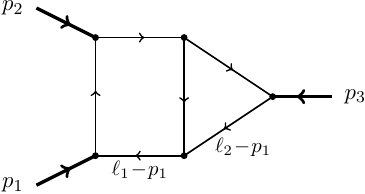}}
\quad
\subfigure[]{\label{fig:tri2L-b}\includegraphics[scale=0.75]{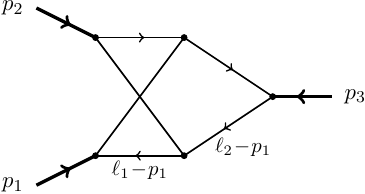}}
\quad
\subfigure[]{\label{fig:box2L-c}\includegraphics[scale=0.75]{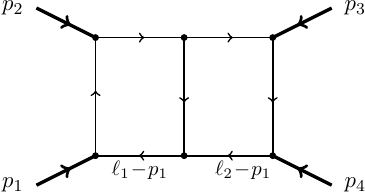}}
\caption{Two-loop planar and non-planar topologies with massless propagators 
and massive external momenta, $p_i^2\ne 0$.}
\label{fig:tri2L}
\end{figure}

\paragraph{Two-loop planar triangle\\}
 
Let us start considering the two-loop topology displayed in Fig.~\ref{fig:tri2L-a},
whose Feynman integral with internal massless propagators can be described as, 
\begin{align}
&J_{\text{P;$3$-pt}}^{(2),D=4} = 
\int\frac{d^{4}\ell_{1}}{\imath\pi^{2}}\frac{d^{4}\ell_{2}}{\imath\pi^{2}}  \\
&\times \frac{1}{\ell_{1}^{2}\left(\ell_{1}-\ell_{2}\right)^{2}\left(\ell_{1}-p_{1}\right)^{2}\left(\ell_{2}-p_{1}\right)^{2}\left(\ell_{1}+p_{2}\right)^{2}\left(\ell_{2}+p_{2}\right)^{2}}\,,
\label{eq:2L_planar}
\end{align}
with the same convention for external kinematics ($p_i^2\ne0$) utilised in Secs.~\ref{sec:landau}
and~\ref{sec:leading}. 

Then, by taking into account the decomposition of space-time dimension into 
parallel ($D_\parallel=2$) and perpendicular ($D_\perp=2$) directions, along 
the lines of Sec.~\ref{sec:deco_multiloop}, we parametrize the loop momenta as, 
\begin{align}
\ell_{i}^{\alpha}&=a_{i1}\,p_{1}^{\alpha}+a_{i2}\,p_{2}^{\alpha}+\lambda_{i}^{\alpha}\,,
\label{eq:tri2L_loopmom}
\end{align}
finding for the integrand, 
\begin{align}
J_{\text{P;$3$-pt}}^{(2),D=4}\sim
&\pm\frac{1}{16}\lambda_{\text{K}}\left(p_{1}^{2},p_{2}^{2},p_{3}^{2}\right) 
\notag \\
& \int\frac{da_{11}\,da_{12}\,da_{21}\,da_{22}\,d\lambda_{11}\,d\lambda_{22}\,d\lambda_{12}}{\sqrt{\lambda_{11}\lambda_{22}-\lambda_{12}^{2}}\,\ell_{1}^{2}\left(\ell_{1}-\ell_{2}\right)^{2}\cdots\left(\ell_{2}+p_{2}\right)^{2}}\,,
\end{align}
where, as in the examples at one-loop, we omit overall terms that arise from angular integrations. 
Notice that propagators are now expressed in terms of the loop-momentum 
components $a_{ij}$ and $\lambda_{ij}$, e.g.,
\begin{align}
\left(\ell_{1}-\ell_{2}\right)^{2}&=
p_{1}^{2}\left(a_{11}-a_{21}\right)\left(a_{11}-a_{12}-a_{21}+a_{22}\right) \notag \\
&+p_{2}^{2}\left(a_{12}-a_{22}\right)\left(-a_{11}+a_{12}+a_{21}-a_{22}\right)
\notag\\
&+p_{3}^{2}\left(a_{11}-a_{21}\right)\left(a_{12}-a_{22}\right)
+\lambda_{11}-2\lambda_{12}+\lambda_{22}\,.
\end{align}
Then, the residue takes the form,\footnote{The calculation of residues through Leray's theory
has been implemented in \Mathematica{} through its built-in function \texttt{TensorWedge}.} 
\begin{align}
&J_{\text{P;$3$-pt}}^{(2),D=4}
\sim \pm\frac{1}{16\lambda_{\text{K}}\left(p_{1}^{2},p_{2}^{2},p_{3}^{2}\right)} \notag \\
&\times \frac{\sqrt{\lambda_{11}\lambda_{22}-\lambda_{12}^{2}}}{\left(\lambda_{12}\left(a_{11}-a_{12}-1\right)+\lambda_{11}\left(-a_{21}+a_{22}+1\right)\right)}\,,
\end{align}
where we are left to find the values of the integration variables by accounting 
from the seven conditions,
\begin{align}
\sqrt{\lambda_{11}\lambda_{22}-\lambda_{12}^{2}}=\ell_{1}^{2}=\left(\ell_{1}-\ell_{2}\right)^{2}=\cdots=\left(\ell_{2}+p_{2}\right)^{2}=0\,,
\end{align} 
finding in this way, 
\begin{align}
J_{\text{P;$3$-pt}}^{(2),D=4}&
\sim\pm\frac{1}{16p_{3}^{2}\sqrt{\lambda_{\text{K}}\left(p_{1}^{2},p_{2}^{2},p_{3}^{2}\right)}}\,,
\end{align}
that is in complete agreement with the leading singularity delivered by the automated package \Dlog{}.

\paragraph{Two-loop non-planar triangle\\}

Let us now draw our attention to the non-planar triangle of Fig.~\ref{fig:tri2L-b}, 
whose Feynman integral with internal massless propagators can be characterised as,   
\begin{align}
&J_{\text{NP;\ensuremath{3}-pt}}^{(2),D=4}=\int\frac{d^{4}\ell_{1}}{\imath\pi^{2}}\frac{d^{4}\ell_{2}}{\imath\pi^{2}} \times \notag \\
& \frac{1}{\ell_{1}^{2}\left(\ell_{1}-p_{1}\right)^{2}\left(\ell_{2}-p_{1}\right)^{2}\left(\ell_{2}+p_{2}\right)^{2}\left(\ell_{1}-\ell_{2}\right)^{2}\left(\ell_{1}-\ell_{2}-p_{2}\right)^{2}}\,.
\end{align}

By parametrizing the loop momenta according to~\eqref{eq:tri2L_loopmom}
and following the same procedure carried out in the calculation of the two-loop planar triangle, 
we find the residue,
\begin{align}
&J_{\text{NP;\ensuremath{3}-pt}}^{(2),D=4}\sim\pm\frac{1}{16\lambda_{\text{K}}\left(p_{1}^{2},p_{2}^{2},p_{3}^{2}\right)} \notag \\
&\frac{\sqrt{\lambda_{11}\lambda_{22}-\lambda_{12}^{2}}}{\lambda_{11}\left(a_{21}-a_{22}-1\right)-\lambda_{12}\left(a_{11}-a_{12}-a_{22}-1\right)-a_{12}\lambda_{22}}\,,
\end{align}
in which, the loop-momentum components $a_{ij}$ and $\lambda_{ij}$ are fixed from 
the conditions, 
\begin{align}
&\sqrt{\lambda_{11}\lambda_{22}-\lambda_{12}^{2}}=\ell_{1}^{2}=\left(\ell_{1}-p_{1}\right)^{2} \notag \\
&=\cdots=\left(\ell_{1}-\ell_{2}-p_{2}\right)^{2}=0\,,
\end{align}
amounting to, 
\begin{align}
J_{\text{NP;\ensuremath{3}-pt}}^{(2),D=4}&\sim\pm\frac{1}{16\lambda_{\text{K}}\left(p_{1}^{2},p_{2}^{2},p_{3}^{2}\right)}\,,
\end{align}
finding complete agreement with the result delivered by \Dlog{}.

\subsection{Loop-by-loop approach}
\label{sec:loopbyloop}

\begin{figure}[t]
\centering
\subfigure[]{\label{fig:ladders-a}\includegraphics[scale=0.75]{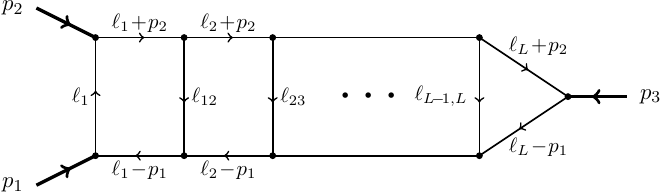}}
\qquad
\subfigure[]{\label{fig:ladders-b}\includegraphics[scale=0.75]{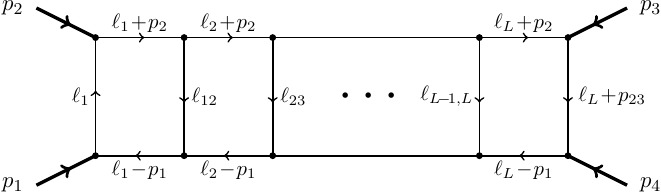}}
\caption{$L$-loop planar ladder triangle and box topologies with massless propagators 
and massive external momenta, $p_i^2\ne0$. 
We define, for convenience, $\ell_{ij}=\ell_i-\ell_j$ and $p_{23}=p_2+p_3$. }
\label{fig:ladders}
\end{figure}

In view of the 
connection between leading and Landau singularities considered in Sec.~\ref{sec:leading},
one can exploit this connection by following a loop-by-loop approach based on 
the former results.
To this end, we discuss this approach in the two-loop triangles previously considered
and, as a by-product of these results, 
give the explicit expression of the $L$-loop ladder planar triangle and box scalar integrals. 

\paragraph{Two-loop planar triangle and box integrals\\}

Let us then begin with $J_{\text{P;$3$-pt}}^{(2),D=4}$ of Eq.~\eqref{eq:2L_planar},
which, by separately performing loop integrations, according to Fig.~\ref{fig:tri2L-a}, 
left-right (first in $\ell_1$ and then in $\ell_2$)
and 
right-left (first in $\ell_2$ and then in $\ell_1$), respectively, becomes, 
\begin{widetext}
\begin{subequations}
\begin{align}
J_{\text{P;\ensuremath{3}-pt}}^{(2),D=4}\Big|_{\text{L-R}}= & \int\frac{d^{4}\ell_{2}}{\imath\pi^{2}}\frac{1}{\left(\ell_{2}-p_{1}\right)^{2}\left(\ell_{2}+p_{2}\right)^{2}}\int\frac{d^{4}\ell_{1}}{\imath\pi^{2}}\frac{1}{\ell_{1}^{2}\left(\ell_{1}-\ell_{2}\right)^{2}\left(\ell_{1}-p_{1}\right)^{2}\left(\ell_{1}+p_{2}\right)^{2}}\nonumber \\
\sim & \int\frac{d^{4}\ell_{2}}{\imath\pi^{2}}\frac{1}{\ell_{2}^{2}\left(\ell_{2}-p_{1}\right)^{2}\left(\ell_{2}+p_{2}\right)^{2}p_{3}^{2}\sqrt{\lambda_{\text{K}}\left(1,\frac{p_{1}^{2}\left(\ell_{2}+p_{2}\right)^{2}}{p_{3}^{2}\ell_{2}^{2}},\frac{p_{2}^{2}\left(\ell_{2}-p_{1}\right)^{2}}{p_{3}^{2}\ell_{2}^{2}}\right)}}\,,\\
J_{\text{P;\ensuremath{3}-pt}}^{(2),D=4}\Big|_{\text{R-L}}= & \int\frac{d^{4}\ell_{1}}{\imath\pi^{2}}\frac{1}{\ell_{1}^{2}\left(\ell_{1}-p_{1}\right)^{2}\left(\ell_{1}+p_{2}\right)^{2}}\int\frac{d^{4}\ell_{2}}{\imath\pi^{2}}\frac{1}{\left(\ell_{1}-\ell_{2}\right)^{2}\left(\ell_{2}-p_{1}\right)^{2}\left(\ell_{2}+p_{2}\right)^{2}}\nonumber \\
\sim & \int\frac{d^{4}\ell_{1}}{\imath\pi^{2}}\frac{1}{\ell_{1}^{2}\left(\ell_{1}-p_{1}\right)^{2}\left(\ell_{1}+p_{2}\right)^{2}p_{3}^{2}\sqrt{\lambda_{\text{K}}\left(1,\frac{\left(\ell_{1}-p_{1}\right)^{2}}{p_{3}^{2}},\frac{\left(\ell_{1}+p_{2}\right)^{2}}{p_{3}^{2}}\right)}}
 \,,
\end{align}
\label{eq:3pt_planar_ids}
\end{subequations}
\end{widetext}
where we have taken into account the expressions of leading singularities for 
one-loop triangle and box in four space-time dimensions provided by 
theorems~\ref{th:lan1} and~\ref{th:lan2}. 
Thus, obtaining the explicit integration 
of integrands containing square roots of K\"all\'en functions (whose arguments are propagators of the loop topology). 
A similar result is obtained for the application of the loop-by-loop approach in
the non-planar two-loop triangle of Fig.~\ref{fig:tri2L-b}.

From results~\eqref{eq:3pt_planar_ids}, we observe that the presence of square roots in the denominator
does not play any relevant role when computing the residue. 
We checked that these square roots can be rationalised and left the integrand 
in terms of only propagators. 
Therefore, we also obtain the four-dimensional leading singularity 
of a planar double box with massive external momenta and massless propagators 
(as depicted in Fig~\ref{fig:box2L-c}), 
\begin{widetext}
\begin{subequations}
\begin{align}
J_{\text{P;\ensuremath{4}-pt}}^{(2),D=4}=&\int\frac{d^{4}\ell_{1}}{\imath\pi^{2}}\frac{d^{4}\ell_{2}}{\imath\pi^{2}}\frac{1}{\ell_{1}^{2}\left(\ell_{1}-p_{1}\right)^{2}\left(\ell_{1}+p_{2}\right)^{2}\left(\ell_{1}-\ell_{2}\right)^{2}\left(\ell_{2}-p_{1}\right)^{2}\left(\ell_{2}+p_{2}\right)^{2}\left(\ell_{2}+p_{2}+p_{3}\right)^{2}}
\notag\\
=&\int\frac{d^{4}\ell_{2}}{\imath\pi^{2}}\frac{1}{\left(\ell_{2}-p_{1}\right)^{2}\left(\ell_{2}+p_{2}\right)^{2}\left(\ell_{2}+p_{2}+p_{3}\right)^{2}}\int\frac{d^{4}\ell_{1}}{\imath\pi^{2}}\frac{1}{\ell_{1}^{2}\left(\ell_{1}-p_{1}\right)^{2}\left(\ell_{1}+p_{2}\right)^{2}\left(\ell_{1}-\ell_{2}\right)^{2}}
\notag\\
\sim&\int\frac{d^{4}\ell_{1}}{\imath\pi^{2}}\frac{1}{\left(\ell_{2}-p_{1}\right)^{2}\left(\ell_{2}+p_{2}\right)^{2}\left(\ell_{2}+p_{2}+p_{3}\right)^{2}\ell_{2}^{2}s\sqrt{\lambda_{\text{K}}\left(1,\frac{p_{1}^{2}\left(\ell_{2}+p_{2}\right)^{2}}{s\ell_{2}^{2}},\frac{p_{2}^{2}\left(\ell_{2}-p_{2}\right)^{2}}{s\ell_{2}^{2}}\right)}}\,,
\end{align}
\label{eq:4pt_planar_ids}
where, with the observation we just made,
\begin{align}
J_{\text{P;\ensuremath{4}-pt}}^{(2),D=4} & \sim\frac{1}{s\sqrt{\lambda_{\text{K}}\left(1,\frac{p_{1}^{2}p_{3}^{2}}{st},\frac{p_{2}^{2}p_{4}^{2}}{st}\right)}}\,.
\end{align}
\end{subequations}
\end{widetext}
with $s=(p_1+p_2)^2$ and $t=(p_2+p_3)^2$. 

\paragraph{$L$-loop ladder triangle and box integrals\\}

With the results of Eqs.~\eqref{eq:3pt_planar_ids} and~\eqref{eq:4pt_planar_ids}
at our disposal we can provide explicit expressions
of leading singularities for ladder $L$-loop topologies, by exploiting the loop-by-loop approach. 
Our findings are summarised in the following theorems. 

\begin{theorem}
\label{th:2L_lead1}
The leading singularity of the three-point $L$-loop ladder Feynman integral of Fig.~\ref{fig:ladders-a} 
in four space-time dimensions with off-shell external momenta ($p_i^2\ne0$ for $i=1,2,3$) 
and massless propagators
is equal to 
$\left(\left(p_{3}^{2}\right)^{L}\sqrt{\lambda_{\text{K}}\left(1,\frac{p_{1}^{2}}{p_{3}^{2}},\frac{p_{2}^{2}}{p_{3}^{2}}\right)}\right)^{-1}$.
\end{theorem}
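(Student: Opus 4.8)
The plan is to argue by induction on the loop number $L$, using the loop-by-loop reduction illustrated for the two-loop planar triangle in~\eqref{eq:3pt_planar_ids} together with the one-loop triangle leading singularity of Theorem~\ref{th:lan1} specialised to $D=4$ (the case $n=3$, $D=n+1$). Throughout I drop the irrelevant rational prefactors (powers of $2$), consistent with the $\sim$ notation, and I exploit the degree-two homogeneity of the K\"all\'en function, $\lambda_{\text{K}}(p_1^2,p_2^2,p_3^2)=(p_3^2)^2\,\lambda_{\text{K}}(1,p_1^2/p_3^2,p_2^2/p_3^2)$, which rewrites the one-loop result $J_3^{(1),D=4}\sim\pm\,\lambda_{\text{K}}(p_1^2,p_2^2,p_3^2)^{-1/2}$ as $J_3^{(1),D=4}\sim\pm\,(p_3^2\,\sqrt{\lambda_{\text{K}}(1,p_1^2/p_3^2,p_2^2/p_3^2)})^{-1}$. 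This is exactly the asserted formula at $L=1$ and serves as the base case.

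For the inductive step I would peel off the innermost triangle sub-loop of the $L$-loop ladder, exactly as the $\ell_2$ integration is performed in the right-left reduction of~\eqref{eq:3pt_planar_ids}. By the ladder topology this sub-loop is a one-loop triangle whose three external invariants are the fixed scale $p_3^2$ and two squared momenta $A,B$ that are themselves propagators of the adjacent (next-outer) loop. Applying Theorem~\ref{th:lan1} replaces this integration by the weight $(p_3^2)^{-1}\,\lambda_{\text{K}}(1,A/p_3^2,B/p_3^2)^{-1/2}$, leaving an $(L-1)$-loop ladder with unchanged external momenta $p_1,p_2,p_3$ but carrying this propagator-dependent square-root factor.

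The crucial point is that the leading singularity is the maximal residue, evaluated on the joint zero locus of all propagators (and of the perpendicular Gram roots). Since $A$ and $B$ are propagators of the remaining ladder, they vanish on this locus, so the inner weight collapses to $\lambda_{\text{K}}(1,0,0)^{-1/2}=1$ times $(p_3^2)^{-1}$; the square root is regular there and, as observed below~\eqref{eq:3pt_planar_ids}, it can moreover be rationalised so that it never obstructs the residue. Peeling the innermost loop therefore contributes precisely a factor $1/p_3^2$, giving the recursion $S_L\sim (1/p_3^2)\,S_{L-1}$ for the ladder leading singularity $S_L$. Iterating down to the base case yields $S_L\sim ((p_3^2)^{L}\sqrt{\lambda_{\text{K}}(1,p_1^2/p_3^2,p_2^2/p_3^2)})^{-1}$, as claimed.

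The main obstacle is the rigorous justification of this last observation: one must show that the propagator-dependent square-root weight generated at each peeling step is holomorphic on the relevant polar locus, and may hence be pulled outside the residue and evaluated on the maximal cut, where its K\"all\'en arguments vanish. Verifying this in general --- rather than rationalising loop-by-loop as in the explicit two-loop checks --- and confirming that the residue of the remaining $(L-1)$-loop ladder is indeed taken on the cut $A=B=0$, is the only non-routine ingredient; the remainder is the bookkeeping of $p_3^2$ factors controlled by the homogeneity of $\lambda_{\text{K}}$.
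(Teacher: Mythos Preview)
Your argument is correct and shares the loop-by-loop induction philosophy of the paper, but you peel the ladder from the opposite end. The paper detaches $\ell_{L+1}$ (the loop carrying only the two rail propagators) and treats the inner $L$ loops together with the link $(\ell_L-\ell_{L+1})^2$ as an $L$-loop ladder \emph{box} with one off-shell leg at $\ell_{L+1}$; the inner leading singularity it quotes is in effect the box-ladder formula of Theorem~\ref{th:2L_lead2}, after which the remaining one-loop integral is finished via~\eqref{eq:3pt_planar_ids}. You instead peel the one-loop triangle sub-loop $\ell_L$ (the right-left direction in~\eqref{eq:3pt_planar_ids}), so that what remains is precisely an $(L-1)$-loop triangle ladder of the same family, dressed by the K\"all\'en weight; the recursion then closes within the triangle-ladder family, without borrowing from the box result, and can start already at $L=1$ using Theorem~\ref{th:lan1}. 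Both routes rest on the same technical point --- stated below~\eqref{eq:3pt_planar_ids} and correctly flagged in your last paragraph --- that the propagator-dependent square root is regular on the maximal cut and evaluates to~$1$ there, so it does not obstruct the residue. Your version has the modest advantage of being self-contained.
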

\begin{proof}
We prove this theorem by induction on $L$. 
For $L=2$, it was explicitly shown in Sec.~\ref{sec:leray2L} through the direct application of Leray's residues. 
Let us then consider the $L=3$ case. 

We begin with the three-loop planar integral, 
\begin{align}
J_{\text{P;\ensuremath{3}-pt}}^{(3),D=4}=\int\prod_{i=1}^{3}\frac{d^{4}\ell_{i}}{\imath\pi^{2}}\frac{1}{\left(\ell_{i}-p_{1}\right)^{2}\left(\ell_{i}+p_{2}\right)^{2}\left(\ell_{i-1}-\ell_{i}\right)^{2}}\,,
\end{align}
with $\ell_0=0$ (see Fig.~\ref{fig:ladders-a}). 

We then use the result of the leading singularity in $D=4$ of the two-loop ($L=2$) planar triangle, 
\begin{widetext}
\begin{align}
J_{\text{P;\ensuremath{3}-pt}}^{(3),D=4} & =\int\frac{d^{4}\ell_{3}}{\imath\pi^{2}}\frac{1}{\left(\ell_{3}-p_{1}\right)^{2}\left(\ell_{3}+p_{2}\right)^{2}}\int\left(\prod_{i=1}^{2}\frac{d^{4}\ell_{i}}{\imath\pi^{2}}\frac{1}{\left(\ell_{i}-p_{1}\right)^{2}\left(\ell_{i}+p_{2}\right)^{2}\left(\ell_{i-1}-\ell_{i}\right)^{2}}\right)\frac{1}{\left(\ell_{2}-\ell_{3}\right)^{2}}\nonumber \\
 & \sim\int\frac{d^{4}\ell_{3}}{\imath\pi^{2}}\frac{1}{\left(\ell_{3}-p_{1}\right)^{2}\left(\ell_{3}+p_{2}\right)^{2}\ell_{3}^{2}}\frac{1}{\left(p_{3}^{2}\right)^{2}\sqrt{\lambda_{\text{K}}\left(1,\frac{p_{1}^{2}\left(\ell_{3}+p_{2}\right)^{2}}{p_{3}^{2}\ell_{3}^{2}},\frac{p_{2}^{2}\left(\ell_{3}-p_{1}\right)^{2}}{p_{3}^{2}\ell_{3}^{2}}\right)}}\,,
\end{align}
\end{widetext}
where by applying the results~\eqref{eq:3pt_planar_ids}, we find, 
\begin{align}
J_{\text{P;\ensuremath{3}-pt}}^{(3),D=4} & \sim\frac{1}{\left(p_{3}^{2}\right)^{3}\sqrt{\lambda_{\text{K}}\left(1,\frac{p_{1}^{2}}{p_{3}^{2}},\frac{p_{2}^{2}}{p_{3}^{2}}\right)}}\,,
\end{align}
in agreement with the induction hypothesis. 

Let us now assume that the induction hypothesis is true for $L$ (with $L\geq3$). 
Then, with the same procedure we just followed for $L=3$, we find, 
\begin{widetext}
\begin{align}
&J_{\text{P;\ensuremath{3}-pt}}^{(L+1),D=4} =\int\left(\prod_{i=1}^{L+1}\frac{d^{4}\ell_{i}}{\imath\pi^{2}}\frac{1}{\left(\ell_{i}-p_{1}\right)^{2}\left(\ell_{i}+p_{2}\right)^{2}\left(\ell_{i-1}-\ell_{i}\right)^{2}}\right) \notag \\
 & =\int\frac{d^{4}\ell_{L+1}}{\imath\pi^{2}}\frac{1}{\left(\ell_{L+1}-p_{1}\right)^{2}\left(\ell_{L+1}+p_{2}\right)^{2}} \int\left(\prod_{i=1}^{L}\frac{d^{4}\ell_{i}}{\imath\pi^{2}}\frac{1}{\left(\ell_{i}-p_{1}\right)^{2}\left(\ell_{i}+p_{2}\right)^{2}\left(\ell_{i-1}-\ell_{i}\right)^{2}}\right) \frac{1}{\left(\ell_{L+1}-\ell_{L}\right)^{2}}\notag \\
 & \sim\int\frac{d^{4}\ell_{L+1}}{\imath\pi^{2}}\frac{1}{\left(\ell_{L+1}-p_{1}\right)^{2}\left(\ell_{L+1}+p_{2}\right)^{2}\ell_{L+1}^{2}}\frac{1}{\left(p_{3}^{2}\right)^{L}\sqrt{\lambda_{\text{K}}\left(1,\frac{p_{1}^{2}\left(\ell_{L+1}+p_{2}\right)^{2}}{p_{3}^{2}\ell_{L+1}^{2}},\frac{p_{2}^{2}\left(\ell_{L+1}-p_{1}\right)^{2}}{p_{3}^{2}\ell_{L+1}^{2}}\right)}}\,,
 \notag
\end{align}
\end{widetext}
proving in this way our theorem,
\begin{align}
J_{\text{P;\ensuremath{3}-pt}}^{(L+1),D=4} & \sim\frac{1}{\left(p_{3}^{2}\right)^{L+1}\sqrt{\lambda_{\text{K}}\left(1,\frac{p_{1}^{2}}{p_{3}^{2}},\frac{p_{2}^{2}}{p_{3}^{2}}\right)}}\,.
\end{align}
\end{proof}

\begin{theorem}
\label{th:2L_lead2}
The leading singularity of the four-point $L$-loop ladder Feynman integral of Fig.~\ref{fig:ladders-b} 
in four space-time dimensions with off-shell external momenta ($p_i^2\ne0$ for $i=1,2,3,4$) 
and massless propagators
is equal to 
$\left(s^{L}t\sqrt{\lambda_{\text{K}}\left(1,\frac{p_{1}^{2}p_{3}^{2}}{st},\frac{p_{2}^{2}p_{4}^{2}}{st}\right)}\right)^{-1}$,
with $s=(p_1+p_2)^2$ and $t=(p_2+p_3)^2$.
\end{theorem}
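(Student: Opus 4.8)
The plan is to prove the statement by induction on $L$, in complete analogy with the proof of Theorem~\ref{th:2L_lead1} for the ladder triangle. The base case is supplied by the explicitly computed planar double box of Eq.~\eqref{eq:4pt_planar_ids}, which was obtained through the loop-by-loop approach and already exhibits the Källén combination $\lambda_{\text{K}}(1,p_{1}^{2}p_{3}^{2}/(st),p_{2}^{2}p_{4}^{2}/(st))$; the one-loop box needed to seed the recursion has its $D=4$ leading singularity fixed by Theorems~\ref{th:lan1} and~\ref{th:lan2} with $n=4$. With the base case in hand, the inductive structure is purely the recursive peeling of rungs.

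For the induction step I would assume the result for $L$ and peel off the outermost loop integration over $\ell_{L+1}$,
\[
J_{\text{P;4-pt}}^{(L+1),D=4}=\int\frac{d^{4}\ell_{L+1}}{\imath\pi^{2}}\frac{\mathcal{I}_{L}(\ell_{L+1})}{\left(\ell_{L+1}-p_{1}\right)^{2}\left(\ell_{L+1}+p_{2}\right)^{2}\left(\ell_{L+1}+p_{2}+p_{3}\right)^{2}}\,,
\]
with $\mathcal{I}_{L}(\ell_{L+1})$ the inner $L$-loop ladder box closed by the rung $(\ell_{L+1}-\ell_{L})^{-2}$. By the induction hypothesis this inner block evaluates to $\bigl(s^{L}\sqrt{\lambda_{\text{K}}(1,p_{1}^{2}(\ell_{L+1}+p_{2})^{2}/(s\,\ell_{L+1}^{2}),p_{2}^{2}(\ell_{L+1}-p_{1})^{2}/(s\,\ell_{L+1}^{2}))}\bigr)^{-1}$, where the propagators $(\ell_{L+1}+p_{2})^{2}$ and $(\ell_{L+1}-p_{1})^{2}$ now play the role of the off-shell invariants of the sub-diagram, exactly as in the triangle case, and an $\ell_{L+1}^{2}$ propagator is left over for the outer loop.

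I would then carry out the final one-loop box integration over $\ell_{L+1}$, whose $D=4$ leading singularity is again governed by Theorems~\ref{th:lan1} and~\ref{th:lan2}. Relying on the observation made just above Eq.~\eqref{eq:4pt_planar_ids} that the square roots of Källén functions in the denominator can be rationalised and rewritten purely in terms of propagators, the residue is extracted without obstruction and the propagator-dependent arguments collapse to the external combination $\lambda_{\text{K}}(1,p_{1}^{2}p_{3}^{2}/(st),p_{2}^{2}p_{4}^{2}/(st))$. Each successive integration supplies one further power of $s$, while the $t$-dependence, carried by the $(p_{2}+p_{3})$ leg that closes the box, enters only once; this reproduces the claimed $\bigl(s^{L+1}t\sqrt{\lambda_{\text{K}}(1,p_{1}^{2}p_{3}^{2}/(st),p_{2}^{2}p_{4}^{2}/(st))}\bigr)^{-1}$ and closes the induction.

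The main obstacle I expect is the algebraic bookkeeping of this final integration: showing that the nested Källén function with propagator-dependent arguments genuinely reduces to the external-momentum form after taking the residue, and in particular accounting for the asymmetry $s^{L}t$ rather than a symmetric $(st)^{L}$. This asymmetry reflects the orientation of the ladder, whose rungs run in the $s$-channel, so that only the $s$-invariant is amplified by the recursion while $t$ remains an overall spectator scale. The rationalisation step, asserted to have been checked in the text, is what makes this power-tracking tractable; rendering it fully explicit and uniform in $L$ is the technically demanding part of the argument.
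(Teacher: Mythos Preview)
Your proposal is correct and follows essentially the same approach as the paper's proof: induction on $L$ with the $L=2$ base case supplied by Eq.~\eqref{eq:4pt_planar_ids}, peeling off the outermost loop $\ell_{L+1}$, applying the induction hypothesis to the inner $L$-loop block (with the substitutions $t\to\ell_{L+1}^{2}$, $p_{3}^{2}\to(\ell_{L+1}+p_{2})^{2}$, $p_{4}^{2}\to(\ell_{L+1}-p_{1})^{2}$), and then performing the final one-loop box integration aided by the rationalisation of the K\"all\'en square root. The only minor difference is that the paper also works out $L=3$ explicitly before the general step, and your invocation of Theorem~\ref{th:lan1} is superfluous since only Theorem~\ref{th:lan2} ($D=n$) is needed for the one-loop box in $D=4$.
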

\begin{proof}
Similar to theorem~\ref{th:2L_lead1}, the proof is carried out by induction on $L$. 
For $L=2$, it was already computed in Eqs.~\eqref{eq:4pt_planar_ids}.
For $L=3$, let us start considering the three-loop Feynman integral,
\begin{align}
J_{\text{P;\ensuremath{4}-pt}}^{(3),D=4} &=\int\left(\prod_{i=1}^{3}\frac{d^{4}\ell_{i}}{\imath\pi^{2}}\frac{1}{\left(\ell_{i}-p_{1}\right)^{2}\left(\ell_{i}+p_{2}\right)^{2}\left(\ell_{i-1}-\ell_{i}\right)^{2}}\right) \notag \\
&\times \frac{1}{\left(\ell_{3}+p_{23}\right)^{2}}\,,
\end{align}
then, by using the results~\eqref{eq:4pt_planar_ids}
for the two-loop ($L=2$) planar double box, 
\begin{widetext}
\begin{align}
J_{\text{P;\ensuremath{4}-pt}}^{(3),D=4} & =\int\frac{d^{4}\ell_{3}}{\imath\pi^{2}}\frac{1}{\left(\ell_{3}-p_{1}\right)^{2}\left(\ell_{3}+p_{2}\right)^{2}\left(\ell_{3}+p_{23}\right)^{2}}\int\left(\prod_{i=1}^{2}\frac{d^{4}\ell_{i}}{\imath\pi^{2}}\frac{1}{\left(\ell_{i}-p_{1}\right)^{2}\left(\ell_{i}+p_{2}\right)^{2}\left(\ell_{i-1}-\ell_{i}\right)^{2}}\right)
\frac{1}{\left(\ell_2-\ell_3\right)^2}
\nonumber \\
 & \sim\int\frac{d^{4}\ell_{3}}{\imath\pi^{2}}\frac{1}{\left(\ell_{3}-p_{1}\right)^{2}\left(\ell_{3}+p_{2}\right)^{2}\left(\ell_{3}+p_{23}\right)^{2}\ell_{3}^{2}s^{2}\sqrt{\lambda_{\text{K}}\left(1,\frac{p_{1}^{2}\left(\ell_{3}+p_{2}\right)^{2}}{s\ell_{3}^{2}},\frac{p_{2}^{2}\left(\ell_{3}-p_{1}\right)^{2}}{s\ell_{3}^{2}}\right)}}
\end{align}
\end{widetext}
we find agreement with the induction hypothesis.
\begin{align}
J_{\text{P;\ensuremath{4}-pt}}^{(3),D=4} & \sim\frac{1}{s^{3}t\sqrt{\lambda_{\text{K}}\left(1,\frac{p_{1}^{2}p_{3}^{2}}{st},\frac{p_{2}^{2}p_{4}^{2}}{st}\right)}}\,.
\end{align}
After explicitly showing $L=2$ and $L=3$ cases, we can assume that our induction hypothesis is true
for $L$ (with $L\geq3$). Then, as similarly carried out in the previous theorem and 
making use of results~\eqref{eq:4pt_planar_ids}, we find, 
\begin{widetext}
\begin{align}
&J_{\text{P;\ensuremath{4}-pt}}^{(L+1),D=4} = \notag \\ &=\int\frac{d^{4}\ell_{L+1}}{\imath\pi^{2}}\frac{1}{\left(\ell_{L+1}-p_{1}\right)^{2}\left(\ell_{L+1}+p_{2}\right)^{2}\left(\ell_{L+1}+p_{2}+p_{3}\right)^{2}}\int\left(\prod_{i=1}^{L}\frac{d^{4}\ell_{i}}{\imath\pi^{2}}\frac{1}{\left(\ell_{i}-p_{1}\right)^{2}\left(\ell_{i}+p_{2}\right)^{2}\left(\ell_{i-1}-\ell_{i}\right)^{2}}\right)
 \frac{1}{\left(\ell_L-\ell_{L+1}\right)^2}
 \nonumber \\
 & \sim\int\frac{d^{4}\ell_{L+1}}{\imath\pi^{2}}\frac{1}{\left(\ell_{L+1}-p_{1}\right)^{2}\left(\ell_{L+1}+p_{2}\right)^{2}\left(\ell_{L+1}+p_{2}+p_{3}\right)^{2}\ell_{L+1}^{2}}\frac{1}{s^{L}\sqrt{\lambda_{\text{K}}\left(1,\frac{p_{1}^{2}\left(\ell_{L+1}+p_{2}\right)^{2}}{s\ell_{L+1}^{2}},\frac{p_{2}^{2}\left(\ell_{L+1}-p_{1}\right)^{2}}{s\ell_{L+1}^{2}}\right)}}\nonumber \\
 & \sim\frac{1}{s^{L+1}t\sqrt{\lambda_{\text{K}}\left(1,\frac{p_{1}^{2}p_{3}^{2}}{st},\frac{p_{2}^{2}p_{4}^{2}}{st}\right)}}\,.
\end{align}
\end{widetext}
Thus, completing the proof of this theorem. 
\end{proof}

\par\bigskip
Interestingly, from the results of theorems~\ref{th:2L_lead1} and~\ref{th:2L_lead2} 
and their extension to higher multiplicity in external momenta and non-planar topologies, 
it is possible to predict $d\log$ integrands at multi-loop level without an explicit calculation. 
This information turns out to be very useful in techniques that aim at the construction of differential equation
in canonical form from the knowledge of a single integral, e.g. Ref.~\cite{Dlapa:2020cwj}. 

Also, we would like to remark that the results of theorems~\ref{th:2L_lead1} and~\ref{th:2L_lead2}
were found in Ref.~\cite{Usyukina:1993ch} by following an approach based on Mellin-Barnes 
representation of Feynman integrals.

\subsection{Observations on differential equations in canonical form}

In view of the loop-by-loop approach considered in Sec.~\ref{sec:loopbyloop},
obtained from leading and Landau singularities of Feynman integrals
at one-loop, we can combine this approach with evaluation of integrals
that are needed for the calculation of scattering amplitudes at high
energies. Previously, we remarked that the knowledge of $d\log$ integrals
in the top sector in interplay with the algorithm~\cite{Dlapa:2020cwj} can automatically
lead us to construct a canonical basis of master integrals. In this
section, we plan to avoid the use of the latter algorithm and profit, instead, from the study of multi-loop Feynman
integrals with constant leading singularity in a given dimension. 
To complete the canonical basis, we use the knowledge of leading singularities
in different dimensions, extending the discussion of Sec.~\ref{sec:deq1L}. 

\begin{figure}[t]
    \centering
    \includegraphics[scale=0.7]{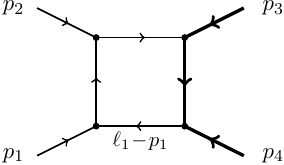}\qquad
    \includegraphics[scale=0.6]{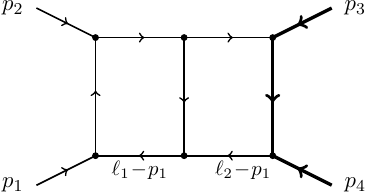}
    \caption{One- and two-loop integral families. Thin (thick) lines correspond to massless (massive) propagators}
    \label{fig:2L_parent}
\end{figure}

To start, let us draw our attention to the integral families depicted
in Fig.~\ref{fig:2L_parent}. These integrals appear in the calculation of one- and
two-loop scattering amplitudes for heavy-quark pair production through
parton-parton annihilation in QCD. We can then express these families,
respectively, as, 
\begin{widetext}
\begin{align}
J_{4}^{\left(1\right),D}\left(1^{n_{1}},2^{n_{2}},3^{n_{3}},4^{n_{4}}\right)= & \int\frac{d^{D}\ell_{1}}{\imath\,\pi^{D/2}}\frac{1}{\left(\ell_{1}-p_{1}\right)^{2n_{1}}\ell_{1}^{2n_{2}}\left(\ell_{2}+p_{2}\right)^{2n_{3}}\left(\left(\ell_{2}+p_{23}\right)^{2}-m^{2}\right)^{n_{4}}}\,,\\
J_{4}^{\left(1\right),D}\left(1^{n_{1}},\ldots,7^{n_{7}};8^{n_{8}},9^{n_9}\right)= & \int\frac{d^{D}\ell_{1}}{\imath\,\pi^{D/2}}\frac{d^{D}\ell_{2}}{\imath\,\pi^{D/2}}\frac{1}{\left(\ell_{1}-p_{1}\right)^{2n_{1}}\ell_{1}^{2n_{2}}\left(\ell_{1}+p_{2}\right)^{2n_{3}}}\nonumber \\
 & \times\frac{\left(\left(\ell_{1}+p_{23}\right)^{2}-m_{2}^{2}\right)^{-n_{8}}\left(\ell_{2}^{2}\right)^{-n_{9}}}{\left(\ell_{2}-p_{1}\right)^{2n_{4}}\left(\ell_{2}+p_{2}\right)^{2n_{5}}\left(\left(\ell_{2}+p_{23}\right)^{2}-m^{2}\right)^{n_{6}}\left(\ell_{1}-\ell_{2}\right)^{2n_{7}}}\,,
\end{align}
\end{widetext}
with $p_1^2=p_2^2=0$, $p_3^2=p_4^2=m^2$, the momentum conservation $p_1+p_2+p_3+p_4=0$,
and the kinematic invariants, $s=(p_1+p_2)^2, t=(p_2+p_4)^2$. 
We follow the convention of~\eqref{eq:intL}. These integral families are known to have,
respectively, 5 and 21 master integrals. While at one-loop it is straightforward
to obtain a canonical basis from the knowledge of leading and Landau
singularities in different dimensions (see Sec.~\ref{sec:deq1L}), for the two-loop
case, instead, a loop-by-loop approach can be considered. We follow
an alternative approach based solely on the study of $d\log$ integrals
in a specific dimension. Without loss of generality, we work on $D=4-2\epsilon$. 

As considered throughout this work, looking for $d\log$ integrals relies
on the loop-momentum parametrisation of the integrand. Thus, by following
the approach of~\cite{Henn:2020lye}, one can construct an ansatz of possible
candidates of $d\log$ integrals to then solve for basis of integrals
that only contain a constant leading singularity. This method has
been successfully developed in the code {\sc DlogBasis} by profiting from
the spinor-helicity formalism for the parametrisation of the loop
momenta. 

\begin{figure}[t]
    \centering
    \includegraphics[scale=0.6]{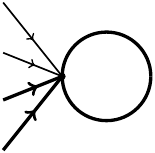}\quad
    \includegraphics[scale=0.6]{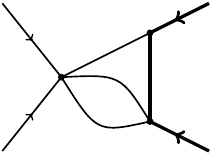}\,
    \includegraphics[scale=0.6]{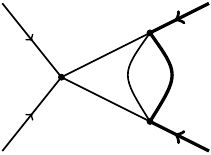}\,
    \includegraphics[scale=0.6]{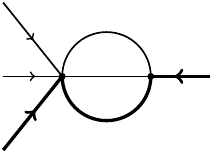}
    \includegraphics[scale=0.6]{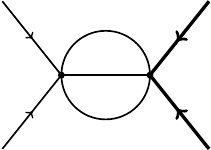}\,
    \includegraphics[scale=0.6]{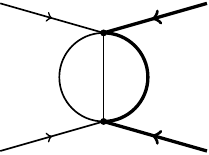}\,
    \includegraphics[scale=0.6]{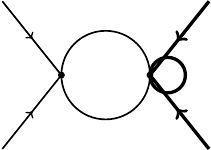}
    \caption{Missing canonical master integrals.}
    \label{fig:2L_missing}
\end{figure}

Owing to the {\sc DlogBasis} implementation, we compute candidates for the
one- and two-loop integral families of Fig.~\ref{fig:2L_parent}. We realise that, after
IBP reduction, there are missing integrals in the sectors displayed
in Fig.~\ref{fig:2L_missing}. These integrals appear in subsectors and only contain
insertions of bubble and tadpole loops. 
We consider these sub-loop integral in $D=2-2\epsilon$,
whilst the other loop remains in $D=4-2\epsilon$. Let us look at this
in detail.

For the integrals displayed in Fig.~\ref{fig:2L_missing}, we find the canonical integrals
to admit the following form, 
\begin{align*}
J_{4}^{\left(1\right),D=4-2\epsilon}\left(4\right) & \to J_{4}^{\left(1\right),D=2-2\epsilon}\left(4\right)\,,\\
J_{4}^{\left(2\right),D=4-2\epsilon}\left(3,4,6,7\right) & \to r_\lambda\,J_{4}^{\left(2\right),D=4-2\epsilon}\left(3,4,6,7^{2}\right)\,,\\
J_{4}^{\left(2\right),D=4-2\epsilon}\left(1,3,6,7\right) & \to s\,J_{4}^{\left(2\right),D=2-2\epsilon}\left(1,3,6,7;8^{-1}\right)\,,\\
J_{4}^{\left(2\right),D=4-2\epsilon}\left(3,6,7\right) & \to m^{2}\,J_{4}^{\left(2\right),D=4-2\epsilon}\left(3,6,7;8^{-1}\right)\,,\\
J_{4}^{\left(2\right),D=4-2\epsilon}\left(3,4,7\right) & \to s\,J_{4}^{\left(2\right),D=2-2\epsilon}\left(3,4,7\right)\\
J_{4}^{\left(2\right),D=4-2\epsilon}\left(2,6,7\right) & \to t\,J_{4}^{\left(2\right),D=2-2\epsilon}\left(2,6,7;8^{-1}\right)\\
J_{4}^{\left(2\right),D=4-2\epsilon}\left(1,3,6\right) & \to s\,J_{4}^{\left(2\right),D=2-2\epsilon}\left(1,3,6\right)\,,
\end{align*}
with $r_\lambda=\sqrt{\lambda_{\text{K}}\left(s,m^{2},m^{2}\right)}=\sqrt{-s\left(-s+4m^{2}\right)}$.
Integrals that only contain a bubble sub-loop are independently
treated in $D=2-2\epsilon$ and then shifted to $D=4-2\epsilon$.
Thus, introducing squared propagators. For the canonical integrals
that contain an insertion of a numerator, we directly consider them
in $D=2-2\epsilon$. With this additional set of master integrals,
we recover the canonical differential equations. 

\section{Conclusions}

Elaborating on the decomposition of the space-time dimension into two independent and complementary subspaces, parallel and perpendicular, in this paper, we carried out a study of the analytic properties of Feynman integrals within this framework. We observed that the various features of Feynman integrals can systematically be tracked down to an analysis at the integrand level, in which techniques and results based on integrand reduction methods were crucial for the results presented in this paper. 

Since it is known that Feynman integrals display singularities depending on kinematic
invariants in given space-time dimensions, we considered Landau and leading singularities as well
as the connections between them. 
For Landau singularities, because of the way how the space-time dimension was decomposed, 
we provided and proved the algorithm to calculate these singularities 
by focusing only on a linear system of equations.  
We observed that Landau singularities can be cast in the single variable \eqref{lambda_landau}, present
in the loop-momentum parametrization, whose value is determined by imposing
on-shell conditions. 

Regarding leading singularities, 
we observed for one-loop scalar Feynman integrals that the number of integrations is lessened to the number of independent external momenta ($E$). This feature, obtained as a product of the general decomposition of the space-time dimension in terms of parallel and perpendicular directions, 
led us to perform an integrand analysis, finding a connection between leading and Landau singularities. In detail, we noted and proved that leading singularities in $D=E$ and $D=E+1$, respectively, correspond to the inverse of the square root of the leading Landau singularity of the first and second type. 
At the beginning we observed this connection with the explicit calculation of leading and Landau singularities of up-to six-point one-loop scalar Feynman integrals.
Then, to prove at all multiplicities, we use the multi-dimensional theory of residues provided by Leray. 

As a by-product of the connection between Landau and leading singularities, 
we observed that one-loop scalar Feynman integrals can be cast in a very simple product of $d\log$ forms. 
We checked this for integrals up to six points and stated the conjecture that it holds 
for all one-loop scalar Feynman integrals.
Additionally, we briefly discussed the connection of our work with the method of differential equations in dimensional regularisation. 
In this matter, the knowledge of leading singularities for an arbitrary one-loop graph 
makes the problem of finding the canonical form for a differential equation much simpler.

On top of the aforementioned one-loop analysis, 
we obtained, promising preliminary results for Feynman multi-loop integrals by applying developed methods. 
We considered particular families of integrals, namely $L$-loop three- and four-point ladder diagrams. 
Treating them as iterated integrals, we were able to apply Leray's residues in the loop-by-loop approach to establish their leading singularities. 
Continuing on the multi-loop application of the method we discussed the canonical differential equation for heavy-quark pair production through parton-parton annihilation in QCD at the two-loop level.

Our work which was focused mostly on the one-loop Feynman integrals just scratch the surface of possible applications of methods presented here and other ideas coming from mathematics. We were able to identify possible directions for further studies
\begin{enumerate}
\item The first natural extension of this work is to investigate the application of multi-dimensional residues to general multi-loop Feynman integrals, starting from two loops, and its connection to the notion of leading singularity. 
At two loops, one can already observe that the number of poles is rarely equal to the degree of the form, so even by taking the global residue, we cannot reduce this form to the zero form, which means that there is some integration which is left. Here the crucial point will be the determination of cycles of integration. In the one-loop case, this problem was already studied in~\cite{Abreu:2017ptx} in the context of the calculation of non-maximal residues. 

\item Further investigation of the connection between the differential equation method, and Landau singularities.
On the one hand, it would be interesting to see how much we can learn from the knowledge of leading singularities to find the canonical form for differential equations for Feynman integrals
as well as using the knowledge of Landau singularities in integrals that are known to be described 
by elliptic curved and beyond. 
On the other hand, it would be very interesting to see if one can recover the full alphabet from Landau singularities by considering all possible residues for a given diagram. In that context, it would also be interesting to look at the connection to cluster algebras.

\item Lastly, it would be very interesting to investigate the $d\log$ structure of Feynman integrals. 
The multi-dimensional residues can shed new light on this issue.
It would be ideal to find a set of criteria which determine whether a given differential form has a full $d\log$ structure. In connection to this, it also would be interesting to see if it is possible to determine the geometry necessary for a given Feynman diagram by considering its Landau singularities.
    
\end{enumerate}

\noindent All the above perspectives can be explored within the approach proposed in this work.

\section*{Acknowledgments}

We are indebted to Johannes Henn for discussions and feedback on the manuscript,
and Janusz Gluza and Einan Gardi for comments on the manuscript. 
We also wish to thank Jungwon Lim, Sebastian Mizera, and Andrew McLeod for insightful discussions.
We are thankful to the anonymous Referee for carefully reading our manuscript and for providing useful comments that led to an improved version of the results summarised in this communication. 
This research received funding from the European Research Council (ERC) under the European Union's Horizon 2020 research and innovation programme (grant agreement No 725110), {\it Novel structures in scattering amplitudes},
the Polish National Science Center (NCN) under grant 2023/50/A/ST2/00224,
and by the Leverhulme Trust, LIP-2021-01.
This research was supported by the Munich Institute for Astro-, Particle and BioPhysics (MIAPbP) which is funded by the Deutsche Forschungsgemeinschaft (DFG, German Research Foundation) under Germany's Excellence Strategy -- EXC-2094 -- 390783311.

\appendix

\section{Leray residues}\label{app_leray}
We will give a brief introduction, without any proofs, to the multi-dimensional theory of residues due to Leray. 
It is one of the possible extensions of the one-dimensional residue theory. 
It is particularly useful in our application as it introduces a residue operator that acts directly on differential forms. 
For a detailed presentation and proofs of this theory, we refer the Reader to the original work of Leray~\cite{Leray_1959} or \cite{Pham2011full,Aizenberg1994}.

Let $X$ be a complex analytic manifold and $S$ a closed analytic submanifold of complex codimension 1.
If $\phi$ is a closed differential form in $X-S$ with a pole of first order on $S$, then in a neighbourhood $U_{a}$
of any point $a \in S$ the form $\phi$ can be represented as,
\begin{equation}
\phi = \frac{ds}{s}\wedge \psi + \theta\,,
\end{equation}
where, $s=s_{a}(z)$, is a defining function for the manifold $S$ in $U_{a}$, 
and $\psi$, $\theta$ are forms which are regular on $U_{a}$. The restriction $\psi \vert_{S}$ is called the residue form of the form $\phi$, $\psi \vert_{S} \equiv res[\phi]$.
\begin{theorem}
For an arbitrary closed form $\phi$ of degree $p$ on $X-S$ and a cycle $\sigma \in Z_{p-1}(S)$ there is a formula, 
\begin{equation}
\int_{\delta \sigma} \phi = 2 \pi i \int_{\sigma} res[\phi]\,,    
\end{equation}
where $\delta \sigma \in Z_{p}(X-S)$ is the co-boundary of the cycle $\sigma$.
\label{th:residue}
\end{theorem}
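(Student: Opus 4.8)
The plan is to reduce the identity to the one-dimensional Cauchy residue theorem applied fibrewise over $\sigma$, after first making precise the two ingredients that enter the statement: the residue form $res[\phi]$ and the Leray coboundary $\delta\sigma$. First I would verify that $res[\phi]=\psi\vert_{S}$ is well defined, i.e.\ independent of the defining function $s$ and of the splitting $\phi=\frac{ds}{s}\wedge\psi+\theta$. If $s'=u\,s$ with $u$ holomorphic and non-vanishing on $U_{a}$, then $\frac{ds'}{s'}=\frac{ds}{s}+\frac{du}{u}$ with $\frac{du}{u}$ regular, so the two splittings differ only by forms regular along $S$; the remaining ambiguity $\psi\to\psi+s\,\alpha$ leaves $\psi\vert_{S}$ unchanged. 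Using $d\phi=0$ one moreover checks that $\psi\vert_{S}$ is closed, so that $res[\phi]$ is a well-defined closed $(p-1)$-form on $S$ pairing with cycles $\sigma\in Z_{p-1}(S)$. This step is algebraically routine but is what gives both sides of the asserted identity a meaning independent of local data.

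Next I would recall the construction of the coboundary $\delta\sigma$. Away from the singular locus of $S$, which can be excised, $S$ is a smooth analytic hypersurface admitting a tubular neighbourhood in $X$ realised as a disk bundle with normal coordinate $s$, whose associated circle bundle has fibre $\{|s|=\epsilon\}$. The coboundary $\delta\sigma$ is the restriction of this circle bundle to $\sigma$, oriented so that the fibre circle is positively oriented. One checks $\partial(\delta\sigma)=\delta(\partial\sigma)=0$, so that $\delta\sigma\in Z_{p}(X-S)$, and that its class depends only on $[\sigma]$ and on the tube radius up to a boundary in $X-S$.

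Because $\phi$ is closed and, when $\partial\sigma=0$, the two tubes over $\sigma$ at radii $\epsilon_{1}$ and $\epsilon_{2}$ bound an annular chain in $X-S$, Stokes' theorem gives $\int_{\delta_{\epsilon_{1}}\sigma}\phi-\int_{\delta_{\epsilon_{2}}\sigma}\phi=\int d\phi=0$, so $\int_{\delta\sigma}\phi$ is independent of $\epsilon$; I would then take $\epsilon\to0$. Covering $\sigma$ by coordinate charts $(s,z_{2},\ldots,z_{p})$ with $S=\{s=0\}$ and using a partition of unity, I substitute the local splitting: the regular piece $\theta$ is bounded on the shrinking tube, so $\int_{\delta\sigma}\theta\to0$, while for the polar piece Fubini integrates the fibre circle first,
\begin{equation}
\int_{\delta\sigma}\frac{ds}{s}\wedge\psi=\int_{\sigma}\left(\oint_{|s|=\epsilon}\frac{ds}{s}\right)\psi=2\pi i\int_{\sigma}\psi\,,
\end{equation}
using $\oint_{|s|=\epsilon}ds/s=2\pi i$; in the limit $\epsilon\to0$, where $\psi$ restricts to $\psi\vert_{S}=res[\phi]$, the right-hand side becomes $2\pi i\int_{\sigma}res[\phi]$, which is the claim.

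The main obstacle I expect is not the fibre integration, which is Cauchy's theorem, but the globalisation: showing that the local contributions assemble consistently and that the pairing descends to homology. This requires the tubular-neighbourhood and circle-bundle structure to be chosen coherently across charts, and that $\partial\sigma=0$ together with $d\phi=0$ guarantee, via Stokes, independence of the answer from the tube radius, the partition of unity, and the chosen representatives of $[\sigma]$ and $[\delta\sigma]$. Handling the singular locus of $S$, where the normal bundle degenerates, requires the additional excision argument that is the technical heart of Leray's original treatment.
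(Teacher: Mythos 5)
The paper offers no proof to compare against: Appendix~\ref{app_leray} presents Leray's theory explicitly ``without any proofs,'' quoting this theorem verbatim and referring the Reader to Leray's original work and to the books of Pham and Aizenberg--Yuzhakov. Your sketch is, in essence, the canonical argument found in those references --- well-definedness of $res[\phi]$ under change of defining function, the coboundary $\delta\sigma$ realised as the circle bundle of a tubular neighbourhood restricted to $\sigma$, radius-independence via Stokes using $d\phi=0$ and $\partial\sigma=0$, and fibrewise reduction to $\oint_{|s|=\epsilon} ds/s = 2\pi i$ with the regular piece $\theta$ contributing terms that vanish as $\epsilon\to 0$ --- and it is sound. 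Two small points worth tightening. First, the ambiguity in the splitting is not only $\psi\to\psi+s\,\alpha$ but also $\psi\to\psi+ds\wedge\beta$ (since $\tfrac{ds}{s}\wedge ds\wedge\beta=0$); this is harmless because $s\vert_{S}\equiv 0$ forces the pullback of $ds$ to $S$ to vanish, so $\psi\vert_{S}$ remains unambiguous, but it should be said. Second, the clean statement and your tubular-neighbourhood construction require $S$ to be a smooth complex hypersurface, which is the setting in which Leray proves the formula; the ``analytic set'' wording of the paper should be read with the singular locus excised, exactly as you flag in your final paragraph, and for the paper's application (polar sets $D_i=0$ of one-loop integrands in generic kinematics) this smoothness holds, so no further work is needed there.
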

If the form $\phi \in Z^{p}(X-(S_{1} \cup \hdots \cup S_{m}))$ has a pole of first order on $S_{1}, \hdots , S_{m}$ then one can define an iterated residue form $res^{m}[\phi] \in Z^{p-m}(S_{1} \cap \hdots \cap S_{m})$ and a homomorphism,
\begin{equation}
res^{m}: H^{p}(X-(S_{1} \cup \hdots \cup S_{m})) \rightarrow H^{p-m}(S_{1} \cap \hdots \cap S_{m})\,,
\end{equation}
as the composition of homomorphisms,
\begin{equation}
\begin{split}
&H^{p}(X-(S_{1} \cup \hdots \cup S_{m})) \xrightarrow{res} H^{p-1}(S_{1}-(S_{2} \cup \hdots \cup S_{m})) \\
&\xrightarrow{res} \hdots \xrightarrow{res} H^{p-m}(S_{1} \cap \hdots \cap S_{m})\,.
\end{split}
\end{equation}
There is also a sequence of co-boundary homomorphisms
\begin{equation}
\begin{split}
&H_{p}(X-(S_{1} \cup \hdots \cup S_{m})) \xleftarrow{\delta_{1}} H_{p-1}(S_{1}-(S_{2} \cup \hdots \cup S_{m})) \\
&\xleftarrow{\delta_{2}} \hdots \xleftarrow{\delta_{m}} H_{p-m}(S_{1} \cap \hdots \cap S_{m})\,.  
\end{split}
\end{equation}
Iterating Theorem \ref{th:residue} we can write the formula for composed residue
\begin{equation}
\int_{\delta^{m} \sigma} \phi = (2 \pi i)^{m} \int_{\sigma} res^{m}[\phi]\,.    
\end{equation}

The composed residue of a form $\phi$ with polar singularities of order $q_{1}, q_{2},\hdots , q_{m}$ on $S_{1}, \hdots, S_{m}$ is written
\begin{equation}
\begin{split}
&res^{m}[\phi] = \\
&=\frac{1}{(q_{1}-1)!\hdots (q_{m}-1)!}\frac{d^{q_{1}+\hdots + q_{m} -m} \omega}{dS_{1}^{q_{1}} \wedge \hdots \wedge dS_{m}^{q_{m}}}\vert_{S_{1} \cap \hdots \cap S_{m}}\,,
\label{comp_res}
\end{split}
\end{equation}
where $\omega = s_{1}^{q_{1}}\hdots s_{m}^{q_{m}}\phi$.

In our case, we deal with scalar one-loop Feynman integrals that in momentum components
can be cast as, 
\begin{equation}
\phi = \frac{dk_{1} \wedge \hdots \wedge dk_{D}}{D_{1}\hdots D_{m}}\,,
\end{equation}
where $D_{i} = (k+P_{i-1})^{2}-m_{i}^{2}$, with $P_{i-1}=p_1+\hdots+p_{i-1}$ and $P_0=0$.

\begin{prop}
A one-loop scalar Feynman integrand $\phi = \frac{dk_{1} \wedge \hdots \wedge dk_{D}}{D_{1}\hdots D_{m}} = f dk_{1} \wedge \hdots \wedge dk_{d}$ is a closed differential form, i.e., $d \phi = 0$.
\end{prop}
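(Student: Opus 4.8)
The plan is to exploit the fact that $\phi$ is a form of \emph{top} degree on the underlying manifold. Writing $f = 1/(D_1\cdots D_m)$, which is holomorphic on $X-(S_1\cup\cdots\cup S_m)$ since each $D_i$ is a polynomial and its zero set $S_i$ has been excised, the scalar coefficient $f$ multiplies the top-degree generator $dk_1\wedge\cdots\wedge dk_D$ of the space of $D$-forms on the $D$-dimensional manifold. I would first record this dimensional observation, because it already dictates the outcome: any $(D+1)$-form on a $D$-dimensional manifold is identically zero.

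Concretely, I would apply the graded Leibniz rule, $d\phi = df\wedge\big(dk_1\wedge\cdots\wedge dk_D\big) + f\,d\big(dk_1\wedge\cdots\wedge dk_D\big)$. The second term vanishes because $d(dk_i)=0$ for each $i$, so the only contribution is $df\wedge dk_1\wedge\cdots\wedge dk_D$. Expanding $df=\sum_{j=1}^{D}(\partial f/\partial k_j)\,dk_j$, every summand is proportional to $dk_j\wedge dk_1\wedge\cdots\wedge dk_D$, which repeats the differential $dk_j$ and therefore vanishes by antisymmetry of the wedge product. Hence $d\phi=0$.

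There is essentially no obstacle here, since the statement is a degree/dimension count rather than a substantive computation; the closedness of $\phi$ is exactly what makes it an admissible input for the Leray residue machinery of Appendix~\ref{app_leray}. The only point deserving a word of care is that $f$ be genuinely smooth—indeed holomorphic—away from the polar sets, so that $df$ is well defined and the manipulation above is legitimate on all of $X-(S_1\cup\cdots\cup S_m)$; this is immediate because $f$ is a rational function whose only singularities lie on the excised divisors $S_1,\ldots,S_m$.
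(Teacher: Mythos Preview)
Your proof is correct and follows essentially the same approach as the paper: both observe that $\phi$ is a volume form on a $D$-dimensional space, compute $d\phi = df\wedge dk_1\wedge\cdots\wedge dk_D$, and note that each term in $df$ repeats one of the $dk_j$ and hence vanishes by antisymmetry. Your version is slightly more verbose---you invoke the Leibniz rule and the vanishing of $d(dk_i)$ explicitly---but the underlying argument is identical.
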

\begin{proof}
We are in a $D$-dimensional $k$-space, i.e., $\phi$ is a top holomorphic form. 
Therefore, $d \phi = (\frac{\partial f}{\partial k_{1}}dk_{1} + \hdots \frac{\partial f}{\partial k_{D}}dk_{D}) \wedge dk_{1} \wedge \hdots \wedge dk_{D} = 0$. 
\end{proof}

Since we are interested in the calculation of composite residue form for Feynman integrands, 
we have $\omega = \frac{D_{1}\hdots D_{m} dk_{1} \wedge \hdots \wedge dk_{D}}{D_{1}\hdots D_{m}} = dk_{1} \wedge \hdots \wedge dk_{D}$ and $q_{1}= \hdots = q_{m} = 1 \Rightarrow q_{1} + \hdots + q_{m} =m$.
Then, we need to consider
the highest co-dimensional residue, i.e., all $D_{i}$ intersect $D_{1}=\hdots = D_{m} = 0$,
and the case when $d=m$, thus,
\begin{equation}
res^{n}[\phi] = \frac{dz_{1}\wedge \hdots \wedge dz_{n}}{dD_{1} \wedge \hdots \wedge dD_{n}} \in H^{0}(D_{1} \cap \hdots \cap D_{n})\,.
\end{equation}
\paragraph{Example: one-loop bubble in $D=2$}
\begin{equation}
\begin{split}
&\phi = \frac{dk_{1} \wedge dk_{2}}{(k^2-m^{2})((k+p)^{2}-m^{2})}\,, \\
&D_{1} = k^2 - m^{2} = k_{1}^{2} - k_{2}^{2} - m^{2}\,, \\
&D_{2} = (k+p)^{2}-m^{2} = (k_{1}+p_{1})^{2} - (k_{2}+p_{2})^{2} - m^{2}\,.
\end{split}
\end{equation}
The differentials of polar sets $D_{1}$ and $D_{2}$ are,
\begin{equation}
\begin{split}
&dD_{1} = 2k_{1}dk_{1} - 2k_{2}dk_{2}\,, \\
&dD_{2} = 2(k_{1}+p_{1})dk_{1} - 2(k_{2}+p_{2})dk_{2}\,,
\end{split}
\end{equation}
and their wedge product is
\begin{align}
&dD_{1} \wedge dD_{2} = \notag \\
&= -4k_{1}(k_{2}+p_{2})dk_{1}\wedge dk_{2} - 4k_{2}(k_{1}+p_{1})dk_{2}\wedge dk_{1}
\notag\\ 
&= -4(k_{1}p_{2} - k_{2}p_{1})dk_{1}\wedge dk_{2}\,.
\end{align}
Hence the residue form is given by
\begin{equation}
\begin{split}
&res^{2}[\phi] = \\
&=\frac{ dk_{1} \wedge dk_{2}}{-4(k_{1}p_{2} - k_{2}p_{1})dk_{1}\wedge dk_{2}} =  \frac{ 1}{-4(k_{1}p_{2} - k_{2}p_{1})}\,.
\end{split}
\end{equation}
By solving $D_{1}=D_{2}=0$ for $k_{1}$ and $k_{2}$ we get two solutions which give us
\begin{equation}
res^{2}[\phi] = \frac{\pm 1}{2\sqrt{-s(4m^{2}-s)}},    
\end{equation}
which is the expected result.

\section{Supplementary results}\label{app_supp}
In this section, we collect results which we use to prove statements in the main text.

\begin{lemma}
The sum $\sum_{k=1}^{n} \det(A_{1}A_{2}\hdots I_{k}\hdots A_{n})$ is equal to $\det(I(A_{2}-A_{1})(A_{3}-A_{1})\hdots (A_{n}-A_{1}))$, where $A_{j}=(a_{1j},...,a_{nj})^{T}$ for $j=1,\hdots,n$ are columns of the determinants and $I_{k}=(1,\hdots,1)^{T}$ replaces the k'th column $A_{k}$ in k'th term of the sum.
\label{sum_det}
\end{lemma}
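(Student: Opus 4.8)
The plan is to expand the right-hand determinant by multilinearity in its last $n-1$ columns and then match the surviving terms one-to-one with the summands on the left. First I would fix the shorthand $D_k=\det(A_1,\dots,A_{k-1},I,A_{k+1},\dots,A_n)$ for the $k$-th term of the sum on the left, so that the assertion to be proved reads $\sum_{k=1}^{n}D_k=\det(I,A_2-A_1,\dots,A_n-A_1)$.

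Next I would apply additivity of the determinant in each of the columns $2,\dots,n$, viewing each such column $A_j-A_1$ as a genuine difference of two column vectors. This expands the right-hand side into a sum of $2^{\,n-1}$ determinants, one for every way of selecting, in each column $j\ge 2$, either the vector $A_j$ (contributing sign $+$) or the vector $-A_1$ (contributing sign $-$). Any selection that places $A_1$ in two or more of these columns produces a determinant with a repeated column, which vanishes; hence only two families of terms survive the expansion.

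The surviving contributions are as follows. There is exactly one term in which every column $j\ge 2$ contributes $A_j$, and it equals $\det(I,A_2,\dots,A_n)=D_1$. For each $j\in\{2,\dots,n\}$ there is one further term in which column $j$ contributes $-A_1$ while all other columns $k\ge 2$ contribute $A_k$; this term equals $-\det(I,A_2,\dots,A_{j-1},A_1,A_{j+1},\dots,A_n)$, where $I$ occupies the first column and $A_1$ the $j$-th. The key manoeuvre is then a single transposition of the first and the $j$-th columns, which contributes a factor $-1$ and rewrites the determinant as $\det(A_1,A_2,\dots,A_{j-1},I,A_{j+1},\dots,A_n)=D_j$. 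The explicit $-1$ coming from the choice of $-A_1$ cancels against the $-1$ from this column swap, so the term equals precisely $D_j$.

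Collecting everything yields $D_1+\sum_{j=2}^{n}D_j=\sum_{k=1}^{n}D_k$, which is exactly the claim. I expect the only delicate point to be the sign bookkeeping in the last step: one must verify carefully that the minus sign from selecting $-A_1$ and the minus sign from the transposition combine to $+1$, so that each reduced term contributes with the correct orientation. Everything else is a routine invocation of multilinearity together with the vanishing of determinants having two equal columns.
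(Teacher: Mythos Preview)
Your proof is correct. Both your argument and the paper's rely on exactly the same three ingredients: multilinearity in each column, vanishing of determinants with a repeated column, and a single transposition to move $I$ into the first slot; the only difference is direction. The paper starts from the left-hand sum and absorbs the terms $D_2,D_3,\ldots$ one at a time into a running determinant $|I\,(A_2-A_1)\,\ldots\,(A_k-A_1)\,A_{k+1}\,\ldots\,A_n|$, whereas you start from the right-hand determinant, expand all $n-1$ difference columns at once, and identify the $n$ surviving terms with $D_1,\ldots,D_n$. Your one-shot expansion is slightly cleaner bookkeeping; the paper's iterative version makes the mechanism a bit more visible step by step, but neither buys anything the other does not.
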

\begin{proof}
\begin{equation}
\begin{split}
&\vert I A_{2}A_{3}\hdots A_{n} \vert + \vert A_{1}IA_{3}\hdots A_{n} \vert +\\
&+ \hdots + \vert A_{1}A_{2}A_{3}\hdots A_{n-1}I \vert = \\
&\vert I A_{2}A_{3}\hdots A_{n} \vert - \vert IA_{1}A_{3}\hdots A_{n} \vert \\
&+ \hdots + \vert A_{1}A_{2}A_{3}\hdots A_{n-1}I \vert = \\
&\vert I (A_{2}-A_{1})A_{3}\hdots A_{n} \vert + \vert A_{1}A_{2}I\hdots A_{n} \vert + \\
&+ \hdots + \vert A_{1}A_{2}A_{3}\hdots A_{n-1}I \vert = \\ 
&\vert I (A_{2}-A_{1})A_{3}\hdots A_{n} \vert - \vert IA_{2}A_{1}\hdots A_{n} \vert + \\
&+ \hdots + \vert A_{1}A_{2}A_{3}\hdots A_{n-1}I \vert = \\
&\vert I (A_{2}-A_{1})A_{3}\hdots A_{n} \vert - \vert I(A_{2}-A_{1})A_{1}\hdots A_{n} \vert + \\
&+ \hdots + \vert A_{1}A_{2}A_{3}\hdots A_{n-1}I \vert = \\
&\vert I (A_{2}-A_{1})(A_{3}-A_{1})\hdots A_{n} \vert + \hdots + \vert A_{1}A_{2}A_{3}\hdots A_{n-1}I \vert
\end{split}
\end{equation}
By repeating the same procedure for remaining terms we get the result.
\end{proof}

\begin{lemma}
$\det(a_{ij})dx_{1} \wedge \hdots \wedge dx_{n} = (a_{11}dx_{1} + \hdots a_{1n}dx_{n}) \wedge \hdots \wedge (a_{n1}dx_{1} + \hdots a_{nn}dx_{n})$.
\label{wedge_det}
\end{lemma}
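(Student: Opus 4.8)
The plan is to expand the right-hand side directly using the multilinearity and antisymmetry of the wedge product, and then to recognise the emerging coefficient as the Leibniz expansion of the determinant. Writing the $i$-th factor as a one-form $\omega_i = \sum_{j=1}^{n} a_{ij}\, dx_j$, I would first distribute the wedge product over all the internal sums, obtaining a sum over all tuples $(j_1,\dots,j_n)$ of the terms $a_{1 j_1}\cdots a_{n j_n}\, dx_{j_1}\wedge\cdots\wedge dx_{j_n}$. This is legitimate because the wedge product is bilinear in each slot.

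The key step is to observe that $dx_{j_1}\wedge\cdots\wedge dx_{j_n}$ vanishes whenever two of the indices coincide, since $dx_j\wedge dx_j=0$. Hence only those tuples for which $(j_1,\dots,j_n)$ is a permutation $\sigma$ of $(1,\dots,n)$ survive. For such a permutation, repeated application of the anticommutativity $dx_a\wedge dx_b=-\,dx_b\wedge dx_a$ reorders the factors into the canonical increasing order at the cost of the sign of the permutation, i.e. $dx_{\sigma(1)}\wedge\cdots\wedge dx_{\sigma(n)}=\operatorname{sgn}(\sigma)\, dx_1\wedge\cdots\wedge dx_n$.

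Collecting the surviving terms then yields the coefficient $\sum_{\sigma}\operatorname{sgn}(\sigma)\, a_{1\sigma(1)}\cdots a_{n\sigma(n)}$ in front of $dx_1\wedge\cdots\wedge dx_n$, which is precisely the Leibniz formula for $\det(a_{ij})$; this completes the proof. The computation is entirely routine, and the only point requiring care---which I would regard as the main (if mild) obstacle---is the bookkeeping of the sign when reordering the differentials, to confirm that the antisymmetry of the wedge product reproduces exactly $\operatorname{sgn}(\sigma)$ and not some spurious sign. An alternative route would be induction on $n$ via a Laplace-type expansion along the first factor, but the direct multilinear expansion is cleaner and makes the determinant structure manifest immediately.
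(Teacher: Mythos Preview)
Your proof is correct. Interestingly, you take a different route from the paper: the paper proceeds by induction on $n$, verifying the $n=2$ case explicitly, then expanding the $(n+1)\times(n+1)$ determinant along the last row via Laplace expansion and invoking the induction hypothesis on each $n\times n$ minor; some care is then needed to reassemble the pieces by adding back ``missing'' one-forms (which contribute zero by $dx_i\wedge dx_i=0$) and factoring out the common bracket. Your direct expansion via multilinearity and antisymmetry is shorter and makes the Leibniz formula appear immediately, with the only bookkeeping being the sign $\operatorname{sgn}(\sigma)$ from reordering the differentials---a point you correctly flag. You even anticipate the paper's approach in your final remark about induction with a Laplace-type expansion; the paper's version buys nothing extra here, and your argument is the cleaner of the two.
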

\begin{proof}
The statement will be proved by the induction. Let us prove this for $n=2$,
\begin{equation}
\begin{split}
&(a_{11}dx_{1} + a_{12}dx_{2})\wedge ( a_{21}dx_{1} + a_{22}dx_{2}) =
\\ 
&=a_{11}a_{21}dx_{1}\wedge dx_{1} + a_{11}a_{22}dx_{1} \wedge dx_{2} + \\
&+ a_{12}a_{21}dx_{2} \wedge dx_{1} + a_{12}a_{22}dx_{2} \wedge dx_{2} =  
\\
&= (a_{11}a_{22} - a_{12}a_{21})dx_{1} \wedge dx_{2}  
\\
&=
\begin{vmatrix}
a_{11} & a_{12} \\
a_{21} & a_{22}
\end{vmatrix}
dx_{1} \wedge dx_{2}\,.
\end{split}
\end{equation}
Let us assume that the statement holds for $n=m>2$, i.e.,
\begin{equation}
\begin{split}
&\begin{vmatrix}
a_{1i_{1}} & \hdots & a_{1i_{m}} \\
\vdots & \ddots & \vdots \\
a_{mi_{1}} & \hdots & a_{mi_{m}}
\end{vmatrix} dx_{i_{1}} \wedge \hdots \wedge dx_{i_{m}} = \\
&=(a_{1i_{1}}dx_{i_{1}} \wedge \hdots \wedge a_{1i_{m}}) \wedge \hdots \wedge (a_{mi_{1}}dx_{i_{1}} \wedge \hdots \wedge a_{mi_{m}})\,,
\end{split}
\end{equation}
with $i_{1} < \hdots <i_{m}$.

Let us now check the statement for $m+1$, 
\begin{equation}
\begin{vmatrix}
a_{11} & \hdots & a_{1m} & a_{1(m+1)} \\
\vdots & \ddots & \vdots & \vdots \\
a_{m1} & \hdots & a_{mm} & a_{m(m+1)} \\
a_{(m+1)1} & \hdots & a_{(m+1)m} & a_{(m+1)(m+1)}
\end{vmatrix} dx_{1} \wedge \hdots \wedge dx_{m} \wedge dx_{m+1}\,.   
\end{equation}
By expanding this determinant with respect to the last row, 
\begin{equation}
\begin{split}
&\Bigg(
(-1)^{(m+1)+1}a_{(m+1)1}
\begin{vmatrix}
a_{12} & \hdots & a_{1m} & a_{1(m+1)} \\
\vdots & \ddots & \vdots & \vdots \\
a_{m2} & \hdots & a_{mm} & a_{m(m+1)} 
\end{vmatrix}
+ \hdots + \\
&(-1)^{(m+1)+(m+1)}a_{(m+1)(m+1)}
\begin{vmatrix}
a_{12} & \hdots & a_{1m} \\
\vdots & \ddots & \vdots \\
a_{m2} & \hdots & a_{mm} 
\end{vmatrix}
\Bigg) \times\\
&dx_{1} \wedge \hdots \wedge dx_{m} \wedge dx_{m+1} = (-1)^{(m+1)+1}(-1)^{m} \times \\
&\left( \begin{vmatrix}
a_{12} & \hdots & a_{1m} & a_{1(m+1)} \\
\vdots & \ddots & \vdots & \vdots \\
a_{m2} & \hdots & a_{mm} & a_{m(m+1)} 
\end{vmatrix}dx_{2} \wedge \hdots \wedge dx_{m} \wedge dx_{m+1}\right)\wedge \\
&\wedge \left(a_{(m+1)1}dx_{1}\right) + \hdots +(-1)^{(m+1)+(m+1)} \times \\
&\left(
\begin{vmatrix}
a_{12} & \hdots & a_{1m} \\
\vdots & \ddots & \vdots \\
a_{m2} & \hdots & a_{mm} 
\end{vmatrix}
dx_{1} \wedge \hdots \wedge dx_{m}
\right) \wedge \left(a_{(m+1)(m+1)}dx_{m+1}\right)\,,
\end{split}
\end{equation}
where by induction hypothesis we get, 
\begin{equation}
\begin{split}
&(-1)^{2(m+1)}\big[(a_{12}dx_{2}+\hdots a_{1(m+1)}dx_{m+1})\wedge \hdots \wedge \\
&(a_{m2}dx_{2}+ \hdots +a_{m(m+1)}dx_{m+1}) \big] \wedge a_{(m+1)1}dx_{1} \\
&+ \hdots + \\
&(-1)^{2(m+1)}\big[(a_{11}dx_{1}+\hdots a_{1m}dx_{m})\wedge \hdots \wedge \\
&(a_{m1}dx_{1}+ \hdots +a_{mm}dx_{m}) \big] \wedge a_{(m+1)(m+1)}dx_{m+1}\,.
\end{split}
\end{equation}
However, because of $dx_{i} \wedge dx_{i}=0$, we can add to each bracket one missing one-form, 
\begin{equation}
\begin{split}
&[(a_{11}dx_{1} + a_{12}dx_{2}+\hdots a_{1(m+1)}dx_{m+1})
\wedge \hdots \wedge \\
& (a_{m1}dx_{1} + a_{m2}dx_{2}+ \hdots +a_{m(m+1)}dx_{m+1}) ] \wedge a_{(m+1)1}dx_{1}  \\
&+ \hdots + \\
&[(a_{11}dx_{1}+\hdots a_{1m}dx_{m}+a_{1(m+1)}dx_{m+1})\wedge \hdots \wedge \\
& (a_{m1}dx_{1}+ \hdots +a_{mm}dx_{m}+a_{m(m+1)}dx_{m+1}) ] \wedge \\ &\wedge a_{(m+1)(m+1)}dx_{m+1}\,,
\end{split}
\end{equation}
that allows us to factor out the square bracket, 
\begin{equation}
\begin{split}
&[(a_{11}dx_{1} + a_{12}dx_{2}+\hdots a_{1(m+1)}dx_{m+1})
\wedge \hdots \wedge \\
&(a_{m1}dx_{1} + a_{m2}dx_{2}+ \hdots +a_{m(m+1)}dx_{m+1}) ] \wedge 
\\
&\wedge( a_{(m+1)1}dx_{1} + \hdots + a_{(m+1)(m+1)}dx_{m+1} ) = \\
&=(a_{11}dx_{1} + a_{12}dx_{2}+\hdots a_{1(m+1)}dx_{m+1})\wedge \hdots \wedge \\
&(a_{m1}dx_{1} + a_{m2}dx_{2}+ \hdots +a_{m(m+1)}dx_{m+1}) \wedge \\
&\wedge( a_{(m+1)1}dx_{1} + \hdots + a_{(m+1)(m+1)}dx_{m+1} )\,,
\end{split}
\end{equation}
completing the proof of this lemma.
\end{proof}

Alternatively the statement of the Lemma \ref{wedge_det} can be seen as the definition of the determinant as follows:
Let $x'_i = \sum_{j =
1}^n a_{i j} x_j$.  Then\footnote{We thank the anonymous Referee for suggesting this alternative proof.}
\begin{align}
&\bigwedge_{i = 1}^n x_i' =\bigwedge_{i = 1}^n \sum_{j_i = 1}^n a_{i j_i} x_{j_i} = \notag \\ 
&= \sum_{i_1} \cdots \sum_{i_n} a_{1 j_1} \cdots a_{n j_n} x_{j_1} \wedge \cdots \wedge x_{j_n} = \notag \\ 
&=\sum_{i_1 = 1}^n \cdots \sum_{i_n = 1}^n a_{1 j_1} \cdots a_{n j_n}
\epsilon_{j_1, \dotsc, j_n} x_1 \wedge \cdots \wedge x_n = \notag \\ 
&=(\det a) \bigwedge_{i=1}^n x_i.
\end{align}

\begin{lemma}
$\frac{1}{D_{1}\cdots D_{n}}$
is equal to 
$\sum_{i=1}^{n}\frac{1}{D_{i}}\frac{1}{\prod_{\substack{j=1\\
j\ne i
}
}^{n}D_{ji}}$, where $D_{ji}=D_{j}-D_{i}$.
\label{lem:wlm1a}
\end{lemma}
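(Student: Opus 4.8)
The plan is to prove this as a purely algebraic partial-fraction identity, treating $D_1,\dots,D_n$ as formal, pairwise distinct quantities so that all the differences $D_{ji}=D_j-D_i$ are invertible; the identity then holds as a rational-function identity and in particular specialises to the propagators appearing in the integrand. First I would clear the common denominator $\prod_{k=1}^{n}D_k$. Multiplying the claimed identity through by this product turns it into the equivalent scalar statement
\[
\sum_{i=1}^{n}\frac{\prod_{k\ne i}D_k}{\prod_{j\ne i}(D_j-D_i)}=1,
\]
so that the whole problem reduces to verifying a single identity with no remaining denominators of the form $\prod_k D_k$.

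The cleanest way to establish the reduced identity is to recognise it as Lagrange interpolation. Introducing an auxiliary variable $z$ and the nodes $D_1,\dots,D_n$, the Lagrange basis polynomials $L_i(z)=\prod_{j\ne i}\frac{z-D_j}{D_i-D_j}$ reproduce any polynomial of degree $<n$; applied to the constant polynomial $1$ this gives $\sum_{i=1}^{n}L_i(z)\equiv 1$. Evaluating at $z=0$ and cancelling the common factor $(-1)^{n-1}$ that arises in numerator and denominator yields exactly the reduced identity above. Alternatively, I would prove it by induction on $n$: the base case $n=2$ is the telescoping computation $\frac{1}{D_1(D_2-D_1)}+\frac{1}{D_2(D_1-D_2)}=\frac{1}{D_1D_2}$, and the inductive step splits off $D_n$ via $\frac{1}{D_iD_n}=\frac{1}{D_i(D_n-D_i)}+\frac{1}{D_n(D_i-D_n)}$, after which the first pieces assemble into the index-$i$ terms of the $n$-fold sum and the second pieces, after the shift $E_i=D_i-D_n$, reproduce the index-$n$ term by a second application of the hypothesis.

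I expect the main obstacle to be bookkeeping rather than anything conceptual: keeping careful track of the sign $(-1)^{n-1}$ produced when rewriting $\prod_{j\ne i}(D_i-D_j)$ as $\prod_{j\ne i}(D_j-D_i)$, and ensuring the products over $j\ne i$ are correctly matched when the factor $(D_n-D_i)$ is adjoined in the inductive step. I would also state explicitly at the outset that the $D_i$ are pairwise distinct so that every $D_{ji}$ is nonzero; this is precisely the genericity needed for the residue computations in Theorem~\ref{th:lan1}, where this lemma is used to split the integrand before performing the $\lambda_{11}$ integration.
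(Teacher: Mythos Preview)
Your proposal is correct. The Lagrange-interpolation argument is a genuinely different route from the paper's: the paper proceeds by a direct induction on $n$, applying the induction hypothesis to $\frac{1}{D_1\cdots D_n}$ and then using the $n=2$ relation on each product $\frac{1}{D_i D_{n+1}}$, after which the inductive step is completed by showing the auxiliary identity $\sum_{i=1}^{n+1}\prod_{j\ne i}^{n+1}D_{ji}^{-1}=0$ (itself proven via a second application of the hypothesis). Your interpolation approach bypasses this entirely: after clearing $\prod_k D_k$, the reduced identity is literally $\sum_i L_i(0)=1$ for the Lagrange basis at nodes $D_1,\dots,D_n$, so the result follows in one line from a standard fact and requires no induction. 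This is shorter and more conceptual; the paper's inductive approach, on the other hand, isolates the vanishing-sum identity $\sum_i\prod_{j\ne i}D_{ji}^{-1}=0$ as an explicit intermediate result, which can be useful on its own. Your alternative inductive sketch is essentially the same mechanism as the paper's proof, just narrated slightly differently. The genericity remark (pairwise distinct $D_i$) is appropriate and matches the implicit assumption in the paper.
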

\begin{proof}
\rm{
We prove the result by induction on $n$. 
For $n= 2$ or $3$, the result is evident. 
Let us now assume that this relation is true for $n=m$, with $m\geq3$,
and corresponds to our induction hypothesis.
Then, by working out the product of $(m+1)$ denominators and keeping in mind 
the induction hypothesis, 
}
\begin{align}
\frac{1}{D_{1}\hdots D_{m}D_{m+1}}&=\frac{1}{D_{m+1}}\sum_{i=1}^{m}\frac{1}{D_{i}}\frac{1}{\prod_{\substack{j=1\\
j\ne i
}
}^{m}D_{ji}}
\notag\\
&=\sum_{i=1}^{m+1}\left(\frac{1}{D_{i}}-\frac{1}{D_{m+1}}\right)\frac{1}{\prod_{\substack{j=1\\
j\ne i
}
}^{m+1}D_{ji}}\,,
\label{eq:wprof1a}
\end{align}
where we took into account the relation for $n=2$  
in the product of denominators $D_i$ and $D_{m+1}$,
we recalled that $D_{i,m+1}=-D_{m+1,i}$, and added 
a vanishing term in the sum.

In~\eqref{eq:wprof1a}, we observe that the first term corresponds to the relation for $(m+1)$
in the product of denominators $D_1,D_2,\hdots,D_m,D_{m+1}$.
This means that we are left to prove,
\begin{align}
\sum_{i=1}^{m+1}\frac{1}{\prod_{\substack{j=1\\
j\ne i
}
}^{m+1}D_{ji}}&=0\,.
\label{eq:wprof1b}
\end{align}
Because of the induction hypothesis, we can write~\eqref{eq:wprof1b} as, 
\begin{align}
&\sum_{i=1}^{m+1}\frac{1}{\prod_{\substack{j=1\\
j\ne i
}
}^{m+1}D_{ji}}
 = \notag \\
& =\sum_{i=1}^{m}\frac{1}{\prod_{\substack{j=1\\
j\ne i
}
}^{m+1}D_{ji}}+\sum_{i=1}^{m}\frac{1}{D_{i,m+1}}\frac{1}{\prod_{\substack{j=1\\
j\ne i
}
}^{m}\left(D_{j,m+1}-D_{i,m+1}\right)}
\notag\\
& = \sum_{i=1}^{m}\frac{1}{\prod_{\substack{j=1\\
j\ne i
}
}^{m+1}D_{ji}}-\sum_{i=1}^{m}\frac{1}{\prod_{\substack{j=1\\
j\ne i
}
}^{m+1}D_{ji}} = 0\,.
\end{align}
In the last line, we took into account that $D_{j,m+1}-D_{i,m+1} = D_{j,i}$ and 
$D_{i,m+1} = - D_{m+1,i}$. 
Therefore, the proof is complete by induction.
\end{proof}

\begin{lemma}
The leading singularity of the integrand 
$\int\prod_{i=1}^{n}da_{i}\,1/\left(K_{1}K_{2}\cdots K_n\right)$
is equal to $1/\det\left(x_{ij}\right)$,
where $K_i$ is a linear function in the integration variables, 
$K_i = x_{i0}+\sum_{j=1}^{n}x_{ij}\,a_j$.
\label{lem:wlm1b}
\end{lemma}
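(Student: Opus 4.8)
The plan is to recognise the leading singularity as the top-dimensional Leray residue of the differential form
\[
\omega = \frac{da_{1}\wedge\cdots\wedge da_{n}}{K_{1}K_{2}\cdots K_{n}}
\]
over all $n$ polar sets $S_{i}=\{K_{i}=0\}$, and then to reduce the computation to Lemma~\ref{wedge_det}. Since each $K_{i}$ enters to first order and the number of denominators equals the number of integration variables ($m=n=d$), the situation is exactly the one treated in Appendix~\ref{app_leray}, where the composite residue of a volume form with simple poles on $S_{1},\dots,S_{n}$ is the zero-form
\[
res^{n}[\omega]
=\frac{da_{1}\wedge\cdots\wedge da_{n}}{dK_{1}\wedge\cdots\wedge dK_{n}}
\Bigg|_{S_{1}\cap\cdots\cap S_{n}}\,.
\]

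Next I would compute the differentials explicitly. Because $K_{i}=x_{i0}+\sum_{j=1}^{n}x_{ij}\,a_{j}$ is affine in the integration variables, the constant shift $x_{i0}$ is annihilated by exterior differentiation, leaving $dK_{i}=\sum_{j=1}^{n}x_{ij}\,da_{j}$. Applying Lemma~\ref{wedge_det} with coefficient matrix $(x_{ij})$ then gives
\[
dK_{1}\wedge\cdots\wedge dK_{n}=\det\left(x_{ij}\right)\,da_{1}\wedge\cdots\wedge da_{n}\,.
\]
Substituting this into the residue formula, the volume forms cancel and one obtains $res^{n}[\omega]=1/\det(x_{ij})$, which is the claimed leading singularity.

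The argument is essentially a one-line combination of the two tools already established, so I do not expect any serious obstacle. The only point requiring a little care is the observation that $\det(x_{ij})$ is a constant, independent of the $a_{i}$: this is what makes the restriction to the intersection $S_{1}\cap\cdots\cap S_{n}$ trivial, so that the explicit solution of the on-shell system $K_{1}=\cdots=K_{n}=0$ is never needed. This is precisely the feature that, once combined with the partial-fractioning of Lemma~\ref{lem:wlm1a} and the identification $x_{ij}=2\,p_{i}\cdot p_{j}$, drives the proof of Theorem~\ref{th:lan1}.
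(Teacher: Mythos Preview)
Your proposal is correct and follows essentially the same route as the paper: compute the composite Leray residue $res^{n}[\omega]=\dfrac{da_{1}\wedge\cdots\wedge da_{n}}{dK_{1}\wedge\cdots\wedge dK_{n}}$, use $dK_{i}=\sum_{j}x_{ij}\,da_{j}$, and invoke Lemma~\ref{wedge_det} to obtain $1/\det(x_{ij})$. Your additional remark that $\det(x_{ij})$ is independent of the $a_{i}$, making the restriction to $S_{1}\cap\cdots\cap S_{n}$ trivial, is a nice clarification that the paper leaves implicit.
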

\begin{proof}
Let us calculate a composite Leray residue of degree $n$,
\begin{equation}
res^{n} = \frac{da_{1} \wedge \hdots \wedge da_{n}}{dK_{1} \wedge \hdots \wedge dK_{n}}\,.
\end{equation}
For any $i$ we have,
\begin{equation}
dK_{i} = x_{i1}da_{1} + \hdots + x_{in}da_{n}\,.
\end{equation}
Thus the residue is equal to,
\begin{equation}
\begin{split}
&res^{n} = \\
&\frac{da_{1} \wedge \hdots \wedge da_{n}}{(x_{11}da_{1} + \hdots + x_{1n}da_{n})\wedge \hdots \wedge (x_{n1}da_{1} + \hdots + x_{nn}da_{n})}\,. 
\end{split}
\end{equation}
By Lemma~\ref{wedge_det} this is equal to,
\begin{equation}
res^{n} = \frac{da_{1} \wedge \hdots \wedge da_{n}}{\det(x_{ij})da_{1} \wedge \hdots \wedge da_{n}} = \frac{1}{\det(x_{ij})}\,.
\end{equation}
Thus, proving our claim. 
\end{proof}

\section{Additional examples of one-loop Landau singularities}
\label{app:landexamples}

In this appendix, we supplement the results provided in Secs.~\ref{sec:landau}
and~\ref{sec:leading} with additional examples of explicit calculations 
of singularities of up-to six-point one-loop Feynman integrals,
and summarise the identities employed in Sec.~\ref{sec:leading} 
to find $d\log$ representations of one-loop bubble and triangle integrals 
in $D\geq n$ space-time dimension. 

To explicitly display the results of scalar one-loop box, pentagon, 
and hexagon Feynman integrals, we consider 
massless external momenta, $p_i^2=0$, and equal internal masses. 
Lastly, since Landau and leading singularities are related through theorems~\ref{th:lan1} and~\ref{th:lan2}
(or equivalently Eqs.~\eqref{eq:myleadex}), 
we only focus on the derivation of Landau singularities, by listing 
explicit expressions for the determinants of propagators and external momenta.

\paragraph{One-loop box integral\\}

By taking into account the parametrization of the loop momentum 
in terms of three independent external momenta, say $p_1,p_2, p_3$, 
and keeping generic the internal masses of propagators, 
the loop momentum components become, 
\begin{align}
a_{1}&=-\frac{m_{13}^{2}t+s\left(m_{23}^{2}-m_{34}^{2}+2s+t\right)}{2s(s+t)}\,,
\notag\\
a_{2}&=-\frac{1}{2}\left(\frac{m_{13}^{2}}{s}+\frac{m_{24}^{2}+t}{t}\right)\,,
\notag\\
a_{3}&=\frac{\left(m_{12}^{2}-m_{3}^{2}\right)t-s\left(m_{24}^{2}+t\right)}{2t(s+t)}\,,
\notag\\
\lambda_{11}&=m_{1}^{2}+a_{1}\left(a_{3}(s+t)-a_{2}s\right)-a_{2}a_{3}t\,,
\end{align}
with the Landau singularity, 
\begin{align}
&\text{LanS}_{4}^{(1)}
= \notag \\
&-\frac{1}{16}[s^{2}t^{2}-t^{2}\lambda_{\text{K}}\left(m_{1}^{2},m_{3}^{2},s\right)-s^{2}\lambda_{\text{K}}\left(m_{2}^{2},m_{4}^{2},t\right)+\notag \\
&+2st\left(m_{12}^{2}m_{34}^{2}-m_{14}^{2}m_{23}^{2}\right)]\,,
\end{align}
in terms of the kinematic invariants $s=(p_1+p_2)^2$ and $t=(p_2+p_3)^2$, 
and, to simplify the notation, we defined $m_{ij}^2\equiv m_i^2-m_j^2$. 

Thus, the Landau singularity in the equal-mass case configuration becomes, 
\begin{align}
\text{LanS}_{4}^{(1)}\Big|_{m_i^2 = m^2}
&=
\frac{s^{2}t^{2}}{16}-\frac{1}{4}m^{2}st(s+t)\,,
\end{align}
where we recall that the Gram determinant for this kinematic configuration 
is, $\det\left(p_i\cdot p_j\right) = -1/4\, s t (s+t)$. 

\paragraph{One-loop pentagon integral\\}

For the one-loop pentagon, we find for $\lambda_{11}$,
\begin{align}
\lambda_{11}
&=
m_{1}^{2}-a_{3}a_{4}s_{34}+a_{2}\left(a_{4}\left(s_{23}+s_{34}-s_{51}\right)-a_{3}s_{23}\right)
\notag\\
&+a_{1}(-a_{2}s_{12}+a_{3}\left(s_{12}+s_{23}-s_{45}\right) +\\
&+a_{4}\left(-s_{23}+s_{45}+s_{51}\right))\,,
\end{align}
in terms of five kinematic invariants, $s_{ij}=(p_i+p_j)^2$, whose
explicit expressions for $a_i=\tilde{a}_i/(16\det(p_i\cdot p_j))$ in the equal mass case are, 
\begin{align}
&\tilde{a}_{1}= -s_{12}^{2}\left(s_{23}-s_{51}\right)^{2}+ \\
&s_{12}\big(2s_{45}s_{51}^{2}-\left(s_{23}s_{34}+2\left(s_{23}+s_{34}\right)s_{45}\right)s_{51} + \notag \\
&+s_{23}s_{34}\left(s_{23}-2s_{45}\right)\big)
\notag\\
&+s_{45}\left(s_{34}-s_{51}\right)\left(s_{23}s_{34}+s_{45}\left(s_{51}-s_{34}\right)\right)\,,
\notag\\
\tilde{a}_{2}&=-s_{12}^{2}\left(s_{23}-s_{51}\right)^{2} + \notag \\
&+s_{12}\big(2s_{45}s_{51}^{2}-\left(s_{34}s_{45}+s_{23}\left(s_{34}+2s_{45}\right)\right)s_{51} + \notag \\
&+s_{23}s_{34}\left(s_{23}-s_{45}\right)\big)
\notag\\
&-s_{45}s_{51}\left(s_{23}s_{34}+s_{45}\left(s_{51}-s_{34}\right)\right)\,,
\notag\\
\tilde{a}_{3}&=-s_{12}\Big(\left(s_{34}s_{45}+\left(s_{34}+s_{45}\right)s_{51}\right)s_{23}+s_{12}\left(s_{23}-s_{51}\right)^{2} \notag \\
&-s_{23}^{2}s_{34}+s_{45}\left(s_{34}-s_{51}\right)s_{51}\Big)\,,
\notag\\
\tilde{a}_{4}&=-s_{12}s_{23}\left(-s_{23}s_{34}+s_{12}\left(s_{23}-s_{51}\right)+s_{45}\left(s_{34}+s_{51}\right)\right)\,.
\end{align}
Thus, the Landau singularity takes the form, 
\begin{align}
\text{LanS}_{5}^{(1)}\Big|_{m_i^2 = m^2}
&=
-\frac{1}{16}s_{12}s_{23}s_{34}s_{45}s_{51}+m^{2}\det\left(p_{i}\cdot p_{j}\right)\,,
\end{align}
with, 
\begin{align}
&\det\left(p_{i}\cdot p_{j}\right)=\notag \\
&\frac{1}{16}\Big[s_{12}^{2}\left(s_{23}-s_{51}\right){}^{2}+\left(s_{23}s_{34}+s_{45}\left(s_{51}-s_{34}\right)\right){}^{2}
\notag\\
&+2s_{12}\big(\left(s_{34}s_{45}+\left(s_{34}+s_{45}\right)s_{51}\right)s_{23}-s_{23}^{2}s_{34} + \notag \\
&+s_{45}\left(s_{34}-s_{51}\right)s_{51}\big)\Big]\,.
\end{align}

\paragraph{One-loop hexagon integral\\}

In order to describe the Landau singularities for the one-loop hexagon, 
we choose, without loss of generality, the following set of nine kinematic scales, 
\begin{align}
\left\{s_{12},s_{23},s_{34},s_{45},s_{56},s_{61},s_{123},s_{234},s_{345}\right\}\,,
\label{eq:6ptkin}
\end{align}
with $s_{ijk} = s_{ij}+s_{ik}+s_{jk}$. 
This choice of variables allows us to express $\lambda_{11}$ as, 
\begin{align}
&\lambda_{11}=
m_{1}^{2}+a_{1}\big(-a_{2}s_{12}+a_{3}\left(s_{12}+s_{23}-s_{123}\right) + \notag\\
&+a_{5}\left(s_{56}+s_{61}-s_{234}\right)+a_{4}\left(-s_{23}-s_{56}+s_{123}+s_{234}\right)\big) 
\notag\\
&+a_{2}\big(-a_{3}s_{23}+a_{4}\left(s_{23}+s_{34}-s_{234}\right) + \notag \\
&+a_{5}\left(-s_{34}-s_{61}+s_{234}+s_{345}\right)\big) +
\notag\\
&+a_{3}\left(a_{5}\left(s_{34}+s_{45}-s_{345}\right)-a_{4}s_{34}\right)
-a_{4}a_{5}s_{45}\,,
\end{align}
and, therefore, the Landau singularity in the equal-mass case, 
\begin{align}
&\text{LanS}_{6}^{(1)}\Big|_{m_i^2 = m^2}
= \\
&\frac{1}{64}\Big[-s_{12}^{2}s_{45}^{2}s_{234}^{2}-\left(s_{34}s_{61}s_{123}+\left(s_{23}s_{56}-s_{123}s_{234}\right)s_{345}\right){}^{2}
\notag\\
&+2s_{12}s_{45}\big(s_{123}s_{234}\left(s_{234}s_{345}-s_{34}s_{61}\right)+ \notag \\
&+s_{23}s_{56}\left(2s_{34}s_{61}-s_{234}s_{345}\right)\big)\Big] +m^{2}\det\left(p_{i}\cdot p_{j}\right)\,,
\end{align}
with the Gram determinant calculated in terms of the kinematic scales~\eqref{eq:6ptkin}. 

\paragraph{Identities used in the derivation of $d\log$ bubble and triangle Feynman integrands\\}

The results reported in Eqs.~\eqref{eq:bubbleD} and~\eqref{eq:triangleD} were obtained 
through the use of the following identities, 
\begin{subequations}
\begin{align}
&\frac{1}{D_{1}D_{2}}=\frac{1}{D_{1}}\frac{1}{D_{2}-D_{1}}-\frac{1}{D_{2}}\frac{1}{D_{2}-D_{1}}\,,\\
&\frac{dx}{\sqrt{x}\left(x+y\right)}=\frac{1}{\sqrt{-y}}d\log\frac{\sqrt{x}-\sqrt{-y}}{\sqrt{x}+\sqrt{-y}}\,, \label{eq:relDlog2} \\
&\frac{dx}{x\sqrt{\left(x+r_{1}\right)\left(x+r_{2}\right)}}= \notag \\
&= \frac{1}{\sqrt{r_{1}r_{2}}}d\log\frac{x+\sqrt{x+r_{1}}\sqrt{x+r_{2}}-\sqrt{r_{1}}\sqrt{r_{2}}}{x+\sqrt{x+r_{1}}\sqrt{x+r_{2}}+\sqrt{r_{1}}\sqrt{r_{2}}} \label{eq:relDlog3}\,,
\end{align}
\label{eq:relDlog}
\end{subequations}
where in \eqref{eq:relDlog2} and \eqref{eq:relDlog3} forms depend only on the $x$ variable while $y$, $r_{1}$ and $r_{2}$ are kept fixed.

\section{Proof of Theorem~\ref{th_lan_lead} in momentum representation}
\label{app:alt_proof}

\begin{prop}
The leading singularity of an $n$-point one-loop Feynman integral in $D=n$ space-time dimensions is equal to $ \pm 1/ \left(2^{n}\sqrt{(-1)^{D-1}\rm{LanS}}\right)$ 
\end{prop}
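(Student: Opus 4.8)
The plan is to establish the statement directly in momentum space, bypassing the parallel/perpendicular split and instead computing the maximal Leray residue of the bare loop integrand. The key observation is that for $D=n$ the number of integration variables (the $n$ components of the loop momentum $k$) exactly matches the number of propagators, so the composite residue of codimension $n$ is a genuine zero-form. Writing the integrand as the volume form $\phi = (dk_1 \wedge \cdots \wedge dk_n)/(D_1 \cdots D_n)$ with $D_i = q_i^2 - m_i^2$ and $q_i = k + P_{i-1}$, I would invoke the composite-residue formula of Appendix~\ref{app_leray}, namely $res^n[\phi] = (dk_1 \wedge \cdots \wedge dk_n)/(dD_1 \wedge \cdots \wedge dD_n)$, understood on the polar set $D_1 = \cdots = D_n = 0$.

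First I would compute the differentials of the polar sets. Since $\partial D_i/\partial k^\nu = 2\,q_{i,\nu}$, each differential reads $dD_i = 2\sum_{\nu} q_{i,\nu}\, dk^\nu$, so the $n$ one-forms $dD_i$ are linear combinations of the $dk^\nu$ with coefficient matrix $2\,q_{i,\nu}$. Applying Lemma~\ref{wedge_det} then collapses their wedge product to $dD_1 \wedge \cdots \wedge dD_n = 2^n \det(q_{i,\nu})\, dk_1 \wedge \cdots \wedge dk_n$, yielding immediately $res^n[\phi] = 1/\bigl(2^n \det(q_{i,\nu})\bigr)$. This is the momentum-space analogue of the $\det(B)$ computation carried out in the main-text proof of Theorem~\ref{th_lan_lead}.

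The heart of the argument is to re-express the determinant of the covariant momentum components in terms of the Gram determinant. Collecting the contravariant components into a matrix $Q$ with $Q_{i\nu} = q_i^\nu$, the matrix of covariant components is $Qg$, so $\det(q_{i,\nu}) = \det(Q)\det(g)$, while the Gram matrix factorises as $(q_i \cdot q_j) = Q g Q^T$, whence $\det(q_i \cdot q_j) = \det(Q)^2 \det(g)$, exactly the identity $\det(J^T g J) = \det(J)^2 \det(g)$ used in Section~\ref{sec:decomposition}. With $\det(g) = (-1)^{D-1}$ and the identification ${\rm LanS} = \det(q_i \cdot q_j)$ on the polar set (Section~\ref{sec:landau}), this fixes $\det(Q) = \pm\sqrt{(-1)^{D-1}{\rm LanS}}$ and hence $\det(q_{i,\nu}) = \pm(-1)^{D-1}\sqrt{(-1)^{D-1}{\rm LanS}}$. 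Absorbing the factor $(-1)^{D-1} = \pm 1$ into the overall sign ambiguity gives $res^n[\phi] = \pm 1/\bigl(2^n \sqrt{(-1)^{D-1}{\rm LanS}}\bigr)$, which is the assertion.

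The main obstacle I anticipate is the careful bookkeeping of the Minkowski metric: one must consistently distinguish the determinant $\det(q_{i,\nu})$ of the covariant-component matrix (which appears in the residue) from the Gram determinant, track which of them carries $\det(g) = (-1)^{D-1}$, and verify that a single power of $\sqrt{(-1)^{D-1}{\rm LanS}}$ survives after the two metric factors partially cancel. A secondary point worth confirming is that the residue, a priori a function on the full momentum space, is evaluated on the on-shell locus $D_1 = \cdots = D_n = 0$, which is precisely where $\det(q_i \cdot q_j)$ coincides with ${\rm LanS}$ as defined. A convenient consistency check for the whole chain of signs is the $D=2$ bubble worked out in Appendix~\ref{app_leray}, which must reproduce $\pm 1/\bigl(2\sqrt{-s(4m^2-s)}\bigr)$.
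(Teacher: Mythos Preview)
Your proposal is correct and follows essentially the same route as the paper's Appendix~\ref{app:alt_proof} proof: compute the composite Leray residue in momentum space, use Lemma~\ref{wedge_det} to collapse $dD_1\wedge\cdots\wedge dD_n$ to a determinant, and then invoke $\det(QgQ^T)=\det(Q)^2\det(g)$ to identify that determinant with $\sqrt{(-1)^{D-1}{\rm LanS}}$. Your version is in fact slightly more economical than the paper's, which first row-reduces the matrix of $q_i$-components to one built from $(k,p_1,\ldots,p_{n-1})$ and then has to undo this change of basis on the Gram side; by keeping the $q_i$ throughout you land directly on $\det(q_i\cdot q_j)$ without that detour.
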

\begin{proof}
The one-loop Feynman integrand $\omega$ has the following form,
\begin{equation}
\omega = \frac{d^{n}k}{D_{1} \hdots D_{n}} = \frac{dk_{1} \wedge \hdots \wedge dk_{n}}{D_{1}\hdots D_{n}}\,,    
\end{equation}
where $k=(k_{1}, \hdots , k_{n})$ is an $n$-dimensional loop momentum vector and $D_{i}=(k+\sum_{j=1}^{i-1}p_{j})^{2} - m_{i}^{2}$ with $p_{j} = (p_{j1}, \hdots ,p_{jn})$ being $n$-dimensional external momenta. From the definition, the Leading singularity is the highest co-dimensional residue of the Feynman integral. Thus, let us calculate composite Leray residue of degree $n$ of the $\omega$,
\begin{equation}
res^{n}[\omega] = \frac{dk_{1} \wedge \hdots \wedge dk_{n}}{dD_{1} \wedge \hdots\wedge dD_{n}}\,.
\end{equation}
Each $dD_{i}$ is given by,
\begin{equation}
\begin{split}
dD_{i} = 2&(k_{1} + \sum_{j=1}^{i-1}p_{j1})dk_{1} + (-2)(k_{1} + \sum_{j=1}^{i-1}p_{j2})dk_{2} + \hdots + \\
&+ (-2)(k_{n} + \sum_{j=1}^{i-1}p_{jn})dk_{n}\,.
\end{split}
\end{equation}
Hence, we get,
\begin{widetext}
\begin{align}
res^{n}[\omega] = \frac{dk_{1} \wedge \hdots \wedge dk_{n}}{\Big(2k_{1}dk_{1} + \hdots +(-2)k_{n}dk_{n}\Big) \wedge \hdots \wedge \Big(2(k_{1} + \sum_{j=1}^{n-1}p_{j1})dk_{1} + \hdots + (-2)(k_{n} + \sum_{j=1}^{n-1}p_{jn})dk_{n}\Big)}\,.
\end{align}
\end{widetext}
By the Lemma~\ref{wedge_det} this can be written as,
\begin{equation}
res^{n}[\omega] = \frac{dk_{1} \wedge \hdots \wedge dk_{n}}{\det(A)dk_{1} \wedge \hdots \wedge dk_{n}} = \frac{1}{\det(A)}\,,
\end{equation}
where, 
\begin{widetext}
\begin{equation}
A=
\left(
\begin{array}{cccc}
2k_{1} & -2k_{2} & \hdots & -2k_{n} \\
2(k_{1} + p_{11}) & -2(k_{2} + p_{12}) & \hdots & -2(k_{n} + p_{1n}) \\
\vdots & \ddots & \vdots \\
2(k_{1} + \sum_{j=1}^{n-1}p_{j1}) & -2(k_{2} + \sum_{j=1}^{n-1}p_{j2}) & \hdots & -2(k_{n} + \sum_{j=1}^{n-1}p_{jn})
\end{array}
\right)\,.
\end{equation}
\end{widetext}
By the linearity of determinants, $\det(A)$ can be written as, 
\begin{equation}
\begin{split}
&\det(A) = \\ 
&\begin{vmatrix}
2k_{1} & \hdots & -2k_{n} \\
2k_{1} & \hdots & -2k_{n}  \\
\vdots & \ddots & \vdots \\
2(k_{1} + \sum_{j=1}^{n-1}p_{j1}) & \hdots & -2(k_{n} + \sum_{j=1}^{n-1}p_{jn})
\end{vmatrix}
+ \\
&\begin{vmatrix}
2k_{1} & \hdots & -2k_{n} \\
2 p_{11} & \hdots & -2p_{1n} \\
\vdots & \ddots & \vdots \\
2(k_{1} + \sum_{j=1}^{n-1}p_{j1}) & \hdots & -2(k_{n} + \sum_{j=1}^{n-1}p_{jn})
\end{vmatrix}
\,,
\end{split}
\end{equation}
where the first term is equal to zero and by repeating this procedure we end up with,
\begin{align}
&\det(A) = 
\begin{vmatrix}
2k_{1} & \hdots & -2k_{n} \\
2 p_{11} & \hdots & -2 p_{1n} \\
\vdots & \ddots & \vdots \\
2 p_{(n-1)1} & \hdots & -2p_{(n-1)n}
\end{vmatrix} = \notag \\
&= (-1)^{n-1}2^{n}
\begin{vmatrix}
k_{1} & \hdots & k_{n} \\
p_{11} & \hdots &  p_{1n} \\
\vdots & \ddots & \vdots \\
 p_{(n-1)1} & \hdots & p_{(n-1)n}
\end{vmatrix}
\,.
\end{align}
Let us denote by $B$ the following matrix,
\begin{equation}
B = 
\left(
\begin{array}{ccc}
2k_{1} & \hdots & -2k_{n} \\
2p_{11} & \hdots &  -2p_{1n} \\
\vdots & \ddots & \vdots \\
2p_{(n-1)1} & \hdots & -2p_{(n-1)n},
\end{array}
\right)\,,
\end{equation}
and we have $\det(B)=\det(A)$.
On the other hand, we have,
\begin{equation}
\begin{split}
&\det(BgB^{T}) = \det(B)\det(g)\det(B^{T}) = \\
&=\det(B)^{2} \det(g) \Rightarrow \det(B) = \pm \sqrt{\det(BgB^{T})/\det(g)}
\, ,
\end{split}
\end{equation}
where $g=diag(1,-1,\hdots,-1)$ is a metric tensor and $\det(g) = (-1)^{n-1}$. 
\\
However, $BgB^{T}$ is equal to,
\begin{equation}
BgB^{T} = 2^{2}
\left(
\begin{array}{cccc}
k \cdot k & k\cdot p_{1} & \hdots &  k\cdot p_{n-1} \\
k \cdot p_{1} & p_{1}\cdot p_{1} & \hdots & p_{1}\cdot p_{n-1} \\
\vdots & \vdots & \ddots & \vdots \\
k \cdot p_{n-1} & p_{1} \cdot p_{n-1} & \hdots & p_{n-1}\cdot p_{n-1}
\end{array}
\right).
\end{equation}
Let us express this matrix in terms of $q_{i} = k + \sum_{j=1}^{i-1}p_{j}$ vectors,
\begin{equation}
\begin{split}
k \cdot k &= q_{1} \cdot q_{1}\,, \\   
k \cdot p_{i} &= q_{1} \cdot (q_{i+1}-q_{i}) = q_{1} \cdot q_{i+1} - q_{1} \cdot q_{i}\,, \\
p_{i} \cdot p_{j} &= (q_{i+1} - q_{i})\cdot (q_{j+1} - q_{j}) = \\ 
&= q_{i+1}\cdot q_{j+1} + q_{i}\cdot q_{j} - q_{i+1}\cdot q_{j} - q_{i}\cdot q_{j+1}\,.
\end{split}
\end{equation}
This gives us
\begin{widetext}
\begin{equation}
\begin{split}
&BgB^{T} = \\
& 2^{2}
\left(
\begin{array}{cccc}
q_{1}\cdot q_{1} & q_{1} \cdot q_{2} - q_{1} \cdot q_{1} & \hdots & q_{1} \cdot q_{n} - q_{1} \cdot q_{n-1} \\
q_{1} \cdot q_{2} - q_{1} \cdot q_{1} & q_{2}\cdot q_{2} + q_{1}\cdot q_{1} - 2q_{1}\cdot q_{2} & \hdots &  q_{2}\cdot q_{n} + q_{1}\cdot q_{n-1} - q_{2}\cdot q_{n-1} - q_{1}\cdot q_{n} \\
\vdots & \vdots & \ddots & \vdots \\
q_{1} \cdot q_{n} - q_{1} \cdot q_{n-1} & q_{2}\cdot q_{n} + q_{1}\cdot q_{n-1} - q_{2}\cdot q_{n-1} - q_{1}\cdot q_{n} & \hdots & q_{n}\cdot q_{n} + q_{n-1}\cdot q_{n-1} - 2q_{n-1}\cdot q_{n}
\end{array}
\right).   
\end{split}
\end{equation}
\end{widetext}
Let us now transform $\det(BgB^{T})$, firstly by performing operations on rows. Let us add the first row to the second row and then a new second row from the third row and repeat this procedure until we reach the n-th row. After this, we will have,
\begin{equation}
\begin{split}
&\det(BgB^{T}) = \\
&=2^{2n}\begin{vmatrix}
q_{1}\cdot q_{1} & q_{1} \cdot q_{2} - q_{1} \cdot q_{1} & \hdots & q_{1} \cdot q_{n} - q_{1} \cdot q_{n-1} \\
q_{1} \cdot q_{2} & q_{2}\cdot q_{2} - q_{1}\cdot q_{2} & \hdots &  q_{2}\cdot q_{n} - q_{2}\cdot q_{n-1}\\
\vdots & \vdots & \ddots & \vdots \\
q_{1} \cdot q_{n} & q_{2}\cdot q_{n} - q_{1}\cdot q_{n} & \hdots & q_{n}\cdot q_{n}  - q_{n-1}\cdot q_{n}
\end{vmatrix}
\,.
\end{split}
\end{equation}
Let us now do the same procedure but with respect to columns,
\begin{equation}
\begin{split}
&\det(BgB^{T}) = \\
&=2^{2n}\begin{vmatrix}
q_{1}\cdot q_{1} & q_{1} \cdot q_{2}  & \hdots & q_{1} \cdot q_{n}  \\
q_{1} \cdot q_{2} & q_{2}\cdot q_{2}  & \hdots &  q_{2}\cdot q_{n} \\
\vdots & \vdots & \ddots & \vdots \\
q_{1} \cdot q_{n} & q_{2}\cdot q_{n} & \hdots & q_{n}\cdot q_{n} 
\end{vmatrix}=2^{2n}\det(q_{i}\cdot q_{j})
\,.
\end{split}
\end{equation}
This is exactly the function which together with on-shell conditions provides the Landau variety, but on-shell conditions are satisfied as we calculated the residue around polar sets $D_{1}= \hdots =D_{n}=0$. Combining together all the results we have,
\begin{equation}
\begin{split}
&\det{A} = \det(B) = \pm \sqrt{\det(BgB^{T})/\det(g)} = \\
&= \pm \sqrt{2^{2n}(-1)^{n-1}\det(q_{i} \cdot q_{j})} = \\
&= \pm 2^{n}\sqrt{(-1)^{n-1}\det(q_{i} \cdot q_{j})}
\,,    
\end{split}
\end{equation}
and finally
\begin{equation}
res^{n}[\omega] = \frac{1}{\det(A)} = \frac{\pm 1}{2^{n}\sqrt{(-1)^{n-1}\det(q_{i} \cdot q_{j})}}
\,,    
\end{equation}
that ends our proof. 
\end{proof} 

\noindent \textbf{Data Availability Statement:} \\
No Data associated in the manuscript

\vspace{50pt}
\bibliographystyle{elsarticle-num}
\bibliography{refs}

\end{document}